%% file: main_file.tex
\documentclass[11pt]{article}
\input{macros}

\makeatletter
\AddToHook{cmd/appendix/before}{%
    \crefalias{section}{appendix}%
    \crefalias{subsection}{appendix}%
}
\makeatother
\allowdisplaybreaks

\setcitestyle{authoryear}

\title{Pricing Pandora's Boxes:\\ Revenue Maximization in Sequential Information Acquisition\footnote{This work was funded in part by NSF award CCF-2217069.}}

\author{
 Shuchi Chawla \\ {\tt shuchi@cs.utexas.edu} \and 
 Dimitris Christou  \\ {\tt christou@cs.utexas.edu} \and
 Trung Dang  \\ {\tt dddtrung@cs.utexas.edu}
 \and
 Zhiyi Huang \\ {\tt zhiyih@cs.utexas.edu}\footnote{All co-authors are affiliated with the University of Texas at Austin.}
}

\date{ }

\begin{document}
\begin{titlepage}
\maketitle
\input{abstract}

\end{titlepage}


\input{introduction}

\input{preliminaries}

\input{mandatory-inspection}

\input{mandatory-extensions}

\input{optional-inspection}

\input{conclusion}

\newpage
\addtocontents{toc}{\protect\setcounter{tocdepth}{1}}
\appendix
\input{appendix}

\newpage
\bibliographystyle{ACM-Reference-Format}
\bibliography{references}

\end{document}

%% file: macros.tex
\usepackage{graphicx} 
\usepackage[utf8]{inputenc}
\usepackage{amsmath}
\usepackage{amsfonts}
\usepackage{caption}
\usepackage{graphicx}
\usepackage{calrsfs}
\usepackage{dirtytalk}
\usepackage{xcolor}
\usepackage{comment}
\DeclareMathAlphabet{\pazocal}{OMS}{zplm}{m}{n}
\DeclareMathAlphabet{\mathcal}{OMS}{cmsy}{m}{n}
\usepackage{enumerate}   
\usepackage{bbm}
\usepackage[hyperfootnotes=false]{hyperref}
\usepackage{amsthm}
\usepackage{thmtools}
\usepackage{cleveref} 
\usepackage{natbib}
\usepackage{comment}
\usepackage{algorithm}
\usepackage{algpseudocode}
\usepackage[margin=1in]{geometry}
\usepackage{pgfplots}
\usepackage{crossreftools}

\pgfplotsset{compat=1.18}

\newtheorem{theorem}{Theorem}
\newtheorem{fact}{Fact}
\newtheorem{claim}[theorem]{Claim}
\newtheorem{lemma}[theorem]{Lemma}

\newtheorem{question}{Question}
\newtheorem{corollary}[theorem]{Corollary}

\newtheorem{inftheorem}{Informal Theorem}

\theoremstyle{definition}
\newtheorem{definition}{Definition}

\crefname{claim}{claim}{claims}
\crefname{inftheorem}{informal theorem}{informal theorems}
\crefname{figure}{figure}{figures}
\crefname{@theorem}{theorem}{theorems}
\crefname{Definition}{definition}{definitions}

\Crefname{claim}{Claim}{Claims}
\Crefname{inftheorem}{Informal Theorem}{Informal Theorems}
\Crefname{@theorem}{Theorem}{Theorems}
\Crefname{Definition}{Definition}{Definitions}
\Crefname{figure}{Figure}{Figures}

\newcommand{\authnote}[2]{[{\color{blue}\textbf{#1:} #2}]}

\newcommand{\dimitris}[1]{\authnote{Dimitris}{#1}}
\newcommand{\trung}[1]{\authnote{Trung}{#1}}

\let\originalleft\left
\let\originalright\right
\renewcommand{\left}{\mathopen{}\mathclose\bgroup\originalleft}
\renewcommand{\right}{\aftergroup\egroup\originalright}

\newcommand{\pbp}{\mathrm{PBP}}
\newcommand{\pb}{\mathrm{PB}}

\newcommand{\pbmdmi}{\mathrm{PBMD}}
\newcommand{\pbmdoi}{\mathrm{PBMDOI}}
\newcommand{\inst}{\mathcal{I}}
\newcommand{\optrev}{\mathrm{OPT}}
\newcommand{\rev}{\mathrm{Rev}}

\newcommand{\dist}{\mathcal{D}}
\newcommand{\mech}{\mathcal{M}}
\newcommand{\vc}{\vec{c}}
\newcommand{\vp}{\vec{p}}
\newcommand{\vq}{\vec{q}}
\newcommand{\vg}{\vec{g}}
\newcommand{\R}{\mathbb{R}}
\newcommand{\N}{\mathbb{N}}
\newcommand{\support}{\operatorname{supp}}
\newcommand{\argmax}{\operatorname{arg\,max}}
\newcommand{\E}{\mathbb{E}}

\newcommand{\expectt}[1]{\mathbb{E}\left[#1\right]}
\newcommand{\probb}[1]{\operatorname{Pr}\left[#1\right]}

\newcommand{\matroid}{\mathrm{M}}
\newcommand{\indep}{\mathcal{I}}
\newcommand{\eps}{\epsilon}
\newcommand{\optlabel}{\mathrm{opt}}
\newcommand{\manlabel}{\mathrm{man}}
\newcommand{\revealset}{\mathcal{R}}
\newcommand{\independent}{\mathbb{I}}
\newcommand{\spn}{\operatorname{span}}

\algrenewcommand\algorithmiccomment[1]{\hfill \(\triangleright\) {\itshape #1}}
\algnewcommand{\LineComment}[1]{\State \(\triangleright\) {\itshape #1}}

%% file: abstract.tex
\begin{abstract}

We study a mechanism design problem in which a seller controls access to information about a set of stochastic alternatives, and a buyer sequentially acquires information in order to choose a single alternative with high value. The value distributions of the alternatives are known to both parties. The seller posts non-adaptive prices for revealing each alternative’s realized value, and the buyer responds optimally by following a Pandora’s Box strategy—deciding which alternatives to inspect and when to stop by accepting the best inspected alternative. The seller's goal is to maximize his expected revenue---the total payment collected from all inspections. 

We study the revenue objective through the lens of simplicity versus optimality. Our main result is that a simple and efficiently computable pricing scheme obtains a $4$-approximation in the worst case to the optimal revenue. This pricing rule equalizes the Weitzman indices across all alternatives. In contrast, we show that equalizing the prices themselves can be an unbounded factor worse than the optimum. Furthermore, for  several natural special cases, including identically distributed alternatives and monotone hazard rate distributions, we fully characterize the optimal pricing.

Finally, we also study a variant of our model under optional inspection, where the buyer may select an alternative without observing its realization. In this setting, we obtain an $n/(n-1)$-approximation for the special case of $n$ identically distributed alternatives, as well as a $2$-approximation for the special case where each alternative’s value distribution has support size two. 

Overall, our results highlight both the computational challenges and the power of simple pricing schemes in selling information to a sequential searcher.

\end{abstract}

%% file: introduction.tex
\section{Introduction}

Consider a buyer searching for a house among several candidates, each with an uncertain true quality. To inform her decision, the buyer may hire a home inspector who, for a fee, can produce a detailed inspection report for any chosen house. Each report reveals valuable information but is costly, so the buyer must decide adaptively which houses to inspect and when to stop and purchase a home. This search problem---balancing the expected benefit of further inspections against their cost---is a classic Pandora’s Box problem and is well understood.

We study this setting from the perspective of a revenue-maximizing home inspector. How should the inspector price inspections to maximize the total payment he receives from the buyer? Charging a high price increases revenue per inspection but may discourage the buyer from inspecting many houses, or even from initiating the search. Should prices differ across houses, and how should they reflect differences in uncertainty or potential value?

These questions fall within the purview of data pricing.
The design of data markets lies at the intersection of mechanism design, information design, and Bayesian persuasion. See, e.g., the survey by \citet{ZBC+24}. 
Prior work in this area typically focuses on {\em what} information to sell and how to {\em package} it---designing complex and correlated signals to influence a receiver’s actions. In contrast, we study a more operational setting in which the information structure is fixed: each ``signal'' corresponds to inspecting a specific alternative and revealing its realized value. The seller does not garble information or design new signals; his only lever is pricing. Moreover, the buyer does not derive intrinsic utility from information itself, but values it solely for its impact on a downstream algorithmic decision, namely which house to purchase.

Our framework captures many practical settings in which the data consumer is an optimizer engaged in sequential information acquisition. For example, a recruiter may purchase background checks or test results for job candidates before making a hire; a firm may commission market research reports for different potential markets before launching a new product. In each case, the decision maker adaptively acquires costly information before selecting a single alternative, while a data provider sets prices in anticipation of this behavior. Our goal is to understand the algorithmic and economic aspects of revenue maximization in such settings.

\paragraph{The Pandora's Box Pricing problem.} Formally, we model the problem as follows. There are $n$ stochastic alternatives $X_1,\dots , X_n$ drawn independently from known priors shared by both the buyer and the seller. The seller publicly announces a price $p_i$ for each alternative $X_i$. The buyer chooses to inspect alternatives in some (adaptive) order; for each inspected alternative $i$, the buyer pays the seller $p_i$ and obtains the instantiation of $X_i$. At any point during the process, the buyer may stop and select the best alternative seen so far, obtaining its realized value. The seller’s revenue is the sum of all prices paid by the buyer throughout this interaction. 

Our model is designed to capture settings in which information is obtained only through costly investigation. In the examples above---home inspections, background checks, and market research---the seller (e.g. the home inspector) does not typically know the realized value of an alternative in advance; rather, he must exert effort to produce the report being sold. We therefore assume that the seller sets prices before observing the realizations of the alternatives,\footnote{We consider a slight generalization in Section~\ref{sec:optional} in which the seller first observes and reveals a subset of the alternatives for free, and then sets prices for the remaining alternatives based on the revealed information.} and that the buyer and seller share the same prior information. Our basic model abstracts away from the seller's cost of producing information, but we show that both the model and our results extend naturally to settings with such costs. Finally, we assume that inspections are sold individually rather than bundled into a single package. This reflects our focus on pricing a fixed menu of information sources, rather than designing or packaging information, and is consistent with applications in which the buyer may wish to explore only a small subset of many available options, such as inspecting a handful of houses on the market or commissioning research on only a few potential market segments.


We allow the seller to set prices
of zero, revealing some information for free, as well as prices of infinity, effectively withholding
certain information altogether. We assume that the buyer is rational and chooses a  strategy that maximizes his expected utility, defined as Utility $:=$ (Value of Selection) $-$ (Prices Paid to the Seller).
We also assume that the buyer can elect to not participate in the mechanism by not selecting any alternative, ensuring that the optimal utility is always non-negative. Therefore, $\expectt{\max_iX_i}$ is an upper bound on the expected seller revenue under any mechanism.




\paragraph{Connections to multi-dimensional revenue maximization.} From the seller's perspective, the pricing problem we study is inherently combinatorial and shares many of the computational challenges of combinatorial item pricing: the seller must choose prices in a multi-dimensional, non-convex space, and the resulting revenue is a non-linear, non-smooth function of these prices. As in standard item-pricing problems, even evaluating the objective at a given price vector is difficult, since it depends on the buyer's optimal response.

At the same time, the structure of the buyer's response differs fundamentally from that in combinatorial item pricing. In our setting, once prices are fixed, the set of ``items'' the buyer purchases---namely, the alternatives she chooses to inspect---is not determined by static linear inequalities over a deterministic value vector. Instead, it emerges endogenously from a stochastic and sequential search process driven by realized values and optimal stopping decisions. The buyer's demand is therefore path-dependent and history-contingent.

As a consequence, standard buyer-utility-based or menu-based characterizations used in combinatorial pricing do not apply, as the buyer's purchases are interleaved with information revelation. 
Moreover, prices influence behavior indirectly by shaping the buyer's exploration incentives, rather than directly screening types via static participation constraints. These features place the seller's problem outside the scope of classical combinatorial pricing models and require new algorithmic and analytical tools.

\subsection{Our results}

To get a feel for the seller's optimization problem, consider the following example with two alternatives $X_1$ and $X_2$ taking either value $1$ or $3$, with $\Pr[X_1 = 1] = 0.2$ and $\Pr[X_2 = 1] = 0.8$. Let $p_1$ and $p_2$ denote the prices set by the seller for the alternatives. We can assume without loss of generality that $p_1\leq \expectt{X_1}=2.6$ and $p_2\leq \expectt{X_2}=1.4$. Given the prices and value distributions, the buyer's optimal strategy, as characterized by \citet{W79}, is to compute an index $g_i$ for each alternative, satisfying $p_i = \expectt{(X_i-g_i)^+}$.\footnote{For a random variable $Y$, $\E[Y^+] := \E[\max\{Y, 0\}]$.} Then, the alternatives are inspected in decreasing order of index until the maximum remaining index becomes either negative or smaller than the best observed realization;\footnote{We assume that the seller has the power to break any ties in his favor.} at this point, the buyer halts and accepts the best realization (if any).

In the current example, we have $g_1=3 - 1.25p_1$ for $p_1 \le 1.6$, and $g_1 = 2.6 - p_1$ otherwise. Calculated similarly, we have $g_2 = 3 - 5p_2$ for $p_2 \le 0.4$, and $g_2 = 1.4 - p_2$ otherwise. Suppose that the prices satisfy $g_1 \ge g_2$, so the alternative $1$ is inspected first. The buyer pays $p_1$ to observe the realization $x_1$ of $X_1$. The second alternative is inspected if and only if $x_1\le g_2$; this event happens with probability $0.2$ if $g_2 \ge 1$ or $p_2 \le 0.4$, and $0$ otherwise. Therefore, the expected revenue of the seller conditioned on $g_1\ge g_2$ is either $p_1 + 0.2 p_2$ if $p_2 \le 0.4$, or $p_1$ otherwise. Optimizing over this space, we get that the best revenue achievable is $2.6$, obtained via $p_1 = 2.6, p_2 = 1.4$, and $g_1 = g_2 = 0$. On the other hand, if we assume the prices satisfy $g_2 \ge g_1$ so that $X_2$ is inspected first, then a similar optimization yields the best revenue of $1.68$, obtained via $p_1 = 1.6, p_2 = 0.4$, and $g_2 = g_1 = 1$. \Cref{fig:rev-example} shows the revenue as a function of $p_1$ and $p_2$.



\begin{figure}[htbp]
    \centering
    \includegraphics{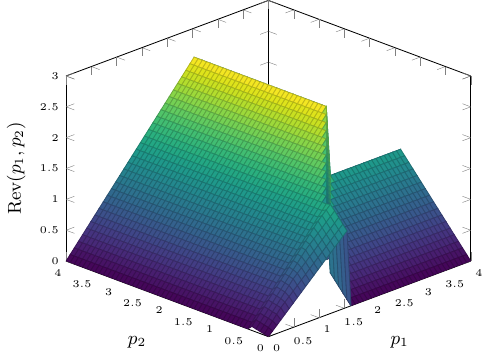}
    \caption{Revenue as a function of $p_1$ and $p_2$}
    \label{fig:rev-example}
\end{figure}

We make a few observations. First, the order in which alternatives are inspected depends in a complex manner on the prices selected, and the best order is not a priori obvious. Furthermore, different orders may yield completely different optimal price profiles. That said, given a desired inspection order, it is possible via dynamic programming to compute the prices that incentivize that order and maximize the seller's revenue.

Second, the maximum possible value the buyer can obtain in expectation is $\E[\max_i X_i]$. The seller cannot, in general, extract this entire value through pricing. In the above example, we have $\E[\max_i X_i]=2.68$. The seller obtains a net expected revenue of $2.6$, the buyer obtains a utility of $0$, and the remaining value of $0.08$ is left on the table because the buyer quits before examining all alternatives.  

Another tantalizing observation is that in the optimal pricing for the above example, the two alternatives have equal indices $g_i$. It is natural to ask whether the optimal pricing always satisfies this property. In fact, for settings with just two alternatives, this is easily seen to be true: for, if not, the price of the first alternative can be raised, lowering its index, but leaving the order and probabilities of inspection for both alternatives unchanged.

Unfortunately this observation does not extend, in general, to settings with more than two boxes. In particular, consider three boxes with indices $g_1=g_2>g_3$ that are inspected by the buyer consecutively. Then lowering $g_1$ and $g_2$ simultaneously increases the prices of the two alternatives without changing the probabilities of inspection of alternatives 1 and 3. However, the probability of inspection for alternative 2 decreases, as the first alternative's instantiation is more likely to exceed the new lowered index of the second. The net effect can be a decrease in the seller's expected revenue. Indeed, we exhibit an explicit example in~\Cref{sec:mandatory} where the optimal indices are non-uniform. 

Nevertheless, our main result shows that setting prices that equalize the indices of all the alternatives (that is, $g_i(p_i)=g$ for all $i\in [n]$) is approximately optimal for the seller. A further observation underlies efficient computation of these indices: we show that the optimal indices are always either $0$ or lie in the union of the supports of the distributions of the values $X_i$. 

Finally, another remarkable property of the above example is that the optimal pricing sets both indices to $0$. While this property also does not hold in general, we show that whenever all of the variables $X_i$ have a monotone hazard rate (MHR), the optimal indices are $0$.

We summarize our main results below.

\begin{inftheorem}
    All instances $(X_1, \dots, X_n)$ of the Pandora's Box Pricing problem admit a $4$-approximation to the seller's revenue via a pricing that equalizes the Weitzman indices of all alternatives (\Cref{thm:4-approx-mandatory-unif-id}). The optimal uniform index lies in the union of the supports of the $X_i$'s and can be computed efficiently (\Cref{cor:efficient}). Furthermore, uniform-index pricing is revenue-optimal for many natural cases, including identically distributed alternatives, monotone hazard rate distributions, and two-point distributions (\Cref{lem:exact-opt-mandatory-special-cases}).
\end{inftheorem}

Given these properties about indices, it is natural to ask whether the same properties hold for the prices themselves: is it possible to achieve a good approximation to revenue by pricing each alternative uniformly? Uniform prices have many desirable properties, from ease of implementation to a perception of fairness and transparency. Unfortunately, we show that in general uniform pricing revenue can be a factor of $\Omega(\log n)$ worse than the optimum (\Cref{thm:uniform-cost-gap}). Moreover, the prices themselves may not lie in the support of the distributions, as the example above illustrates.

\subsection{Extensions} 

Having established our results for the baseline model, we next consider extensions that capture richer settings for pricing information. Let us revisit the running example of a firm evaluating several potential markets for a new product. From the market-research provider’s perspective, producing a report requires effort, time, or money, so the provider may retain only a portion of the posted price after accounting for these costs. From the firm’s perspective, some markets may be entered without commissioning a report, and some may already be partially or fully understood, for instance because of prior experience in related markets. Moreover, the firm may wish to enter multiple markets rather than just one. Each of these considerations changes the incentives on both sides, and consequently affects how the provider should price market-research reports. We study these extensions along the following axes.


\paragraph{Seller Costs and Outside Options.}
We first consider an extension in which each alternative $i$ is associated with a known \emph{seller cost} $c_i \ge 0$ incurred whenever the seller reveals its value, and the buyer has access to a deterministic \emph{outside option} of value $y \ge 0$ that can be chosen without inspection at any time.
As before, the seller posts prices $p_i$ and the buyer responds optimally by solving the induced Pandora's Box problem. Equivalently, the buyer only considers alternatives whose index satisfies $g_i \ge y$, and whenever alternative $i$ is inspected, the seller's net revenue from that inspection is $p_i - c_i$ rather than $p_i$. These features impose natural constraints on the seller: in particular, offering high-cost alternatives may be unprofitable even if they are attractive to the buyer. We show that uniform-index pricings, combined with an appropriate post-processing step, still yield a $4$-approximation to the optimal revenue (\Cref{lem:4-approx-mandatory-unif-id-sellercost}).

\paragraph{Matroid Selection.} Next, we consider the setting where each alternative $i\in [n]$ is associated with an element of an underlying matroid $\matroid([n],\indep)$ (see \Cref{sec:mandatory-ext} for a definition). The buyer sequentially inspects the alternatives like before, but is now allowed to select any independent set of inspected alternatives $S\in\indep$, obtaining their additive value $\sum_{i\in S}X_i$. This variant was first introduced by~\citet{S18}, who showed that the buyer's optimal response is still characterized by Weitzman's indices (although the inspection sequence is modified based on the underlying matroid). We show that for the special case of uniform matroids (corresponding to selecting up to $k$ alternatives for some $k\leq n$), we can efficiently compute uniform-index prices achieving a $4$-approximation to the optimal revenue (\Cref{lem:4-approx-mandatory-unif-id-unifmatroid}). We also show that for general matroids, uniform-index prices perform arbitrarily badly (\Cref{corr:sqrt-logn-gap-partition-matroid-unif-id}) and provide $4$-approximate (non uniform-index) prices for partition matroids (\Cref{corr:4-approx-mandatory-unif-id-partitionmatroid}).

\paragraph{Optional Inspection.}
Finally, we return to the single-selection setting without seller costs, but allow the buyer to select an alternative $i$ \emph{without} inspecting it, in which case they obtain the (random) reward $X_i$ and halt immediately. This variant of the buyer's model is known as \emph{Pandora's Box with Optional Inspection}, introduced by~\citet{D18}.
While this modification may appear modest, it fundamentally changes the structure of the buyer's problem: unlike Weitzman's mandatory-inspection model, computing an optimal utility-maximizing policy for a given price vector is NP-hard, though a PTAS is known~\citep{FLL22,BC23}. Furthermore, the buyer's inspection order is no longer fixed and characterized by some index-based ordering but can be adaptive to past realizations, even when all distributions have support size $2$; we demonstrate such an example in~\Cref{app:proofs-for-optional-examples}.
Consequently, the seller's problem becomes more challenging and the achievable revenue guarantees differ.

We initiate the study of revenue maximization under optional inspection in two directions.
First, for $n$ identically distributed alternatives, we provide a price vector that achieves an $n/(n-1)$-approximation to the optimal revenue (\Cref{thm:optional-iid-approx}).
Second, for distributions with support size $2$, we design a $2$-approximate mechanism in a richer model where the seller may disclose some realizations for free and \textit{then} price the remaining alternatives based on the realizations that both he and the buyer observed (\Cref{thm:optional-bernoulli}).

\vspace{10pt}
In summary, we initiate the study of revenue maximization in sequential information acquisition by introducing the Pandora’s Box Pricing problem and showing that remarkably simple pricing mechanisms can be near-optimal. Our main result gives a constant factor approximation via a pricing that equalizes Weitzman indices across alternatives, and we further show that this uniform-index approach is exactly optimal in several important special cases. We also develop extensions capturing seller costs, outside options, and combinatorial (matroid) selection, and we take the first steps beyond Weitzman’s model by studying optional inspection, where the buyer’s optimal response becomes adaptive and computationally hard; here we obtain constant factor guarantees in special cases, leaving the general setting open.

\subsection{Related Work}
\input{related}

\subsection{Organization}

We formally define our notation and the (base) mandatory inspection model in~\Cref{sec:preliminaries}. All of our results related to this model are discussed in~\Cref{sec:mandatory}. In~\Cref{sec:mandatory-ext}, we formally define the outside option, seller cost and matroid extensions and present all related results. The optional inspection model is discussed in~\Cref{sec:optional}. Finally, we conclude with closing remarks and future directions in~\Cref{sec:conclusion}.

%% file: related.tex
Our work relates to the literature on selling information to imperfectly informed decision makers, which blends information design with pricing. In this line of work, an informed seller designs both the information structure and the terms of trade to extract revenue from a buyer whose actions depend on the information received (see, e.g., \citet{BKP12, BBS18, EsoSzentes2007, HornerSkrzypacz2016, BergemannCaiVelegkasZhao2022, Li2022SellingData, CV11}). A central focus of this literature is the design of signal structures--what information to reveal and how informative or correlated signals should be--subject to Bayes-plausibility constraints. In contrast, we study a setting in which the information structure is fixed and the seller’s only instrument is pricing access to information.

A related but distinct line of work studies the pricing of data and information goods in static or one-shot settings \citep{ZBC+24}. This literature analyzes how a data provider should price datasets, queries, or statistical estimates, often accounting for correlations, arbitrage, or competition among buyers. In contrast to our model, information acquisition in these settings is typically non-adaptive: buyers decide what information to purchase without feedback from realized outcomes. 

Our seller’s problem is also related to the classical question of how to price items for revenue maximization for buyers with stochastic combinatorial preferences. While the optimal pricing is computationally intractable \citep{CDPSY14} and may not approximate the optimal mechanism within any finite factor even in very simple instances \citep{BCKW-JET15, hart2013menu}, small factor approximations are achievable using simple pricing mechanisms under various instance assumptions \citep{CZ17, chawla2015power,chawla2022buy, chawla23buy, CCDHKR24, babaioff2014simple,rubinstein2018simple,cai2019duality}. 


Finally, our work builds on the literature on Pandora’s Box problems and sequential search, originating with the seminal model of \citet{W79}. Variants of this model have been studied extensively, including the optional inspection setting \citep{D18, BK19, FLL22, BC23, DS24}, extensions to correlated values \citep{CGTTZ19, CGMT21, GT24}, constraints on the order of inspection \citep{EHLM19, BFLL20}, combinatorial selection \citep{S18, GJSS19} and more complex models of information acquisition \citep{CCHS24, BLW25, CCD25}. We refer the reader to the survey of \citet{BC24-survey} for a more comprehensive list of recent advances. This literature typically takes inspection costs as exogenous and focuses on characterizing the optimal search policy. In contrast, we endogenize inspection costs by allowing a seller to set prices strategically, anticipating the buyer’s optimal Pandora’s Box strategy. 

To our knowledge, \citet{Arm09}, \citet{Arm17}, and \citet{gamp2022guided} are the only prior works that study endogenously determined prices in the context of sequential search. \cite{Arm09} and \cite{Arm17} assume that access to each alternative $X_i$ is controlled by a distinct seller; different sellers compete to set prices; and the resulting Bertrand equilibrium is analyzed. In contrast, we consider a single information seller who prices inspections across multiple stochastic alternatives. 

\citet{gamp2022guided} studies a setting in which a multi-product monopolist chooses both product prices and inspection costs for multiple items offered to a unit-demand buyer. In their model, inspection costs represent wasteful, time-consuming search borne by the buyer; the seller uses these costs to guide the buyer toward inspecting high-priced products first and ultimately purchasing the most expensive product that yields positive utility. Our setting differs both in what is being sold and in the source of revenue. In \citet{gamp2022guided}, search costs are a means of redirecting demand, while the seller’s revenue comes from product sales. In our model, by contrast, information itself is the product, and the seller’s revenue comes directly from the buyer’s inspection payments. This distinction leads to very different optimization problems. For example, in \citet{gamp2022guided}, it is optimal for the seller to set all indices $g_i$ to zero, whereas in our setting the analogous pricing rule can be far from optimal.


%% file: preliminaries.tex
\section{Preliminaries}\label{sec:preliminaries}

Throughout this paper, we use $[n]:=\{1,2,\dots , n\}$ to denote the set of all alternatives. The values $X_i\ge 0$ of the alternatives are stochastic and are drawn independently from known distributions. 
For each alternative $X_i$, we use $F_{i}(u) = \Pr[X_i \le u]$ to denote its CDF, $f_i(u)$ to denote its PDF, $\support(X_i)$ for its support, and $\mu_i = \expectt{X_i}$ for its expectation.

\paragraph{Prices and Indices.} We write $\vp = (p_1,\dots , p_n)\in\R^n_+$ for a pricing over the alternatives, where $p_i\ge 0$ is the price for inspecting alternative $i$. For each alternative $i$, we use $g_i=g_i(p_i)\in\R$ to denote its Weitzman index under price $p_i$, that is, the unique\footnote{Observe that the index $g_i$ is uniquely defined for all $p_i>0$ and can be any value in $[\max\support(X_i),\infty)$ if $p_i=0$. In the latter case, we adopt the convention that $g_i=\max\support(X_i)$.} solution to  the equation $p_i=\expectt{(X_i-g_i)^+}$. Conversely, we use $p_i(g_i)=\expectt{(X_i-g_i)^+}$ to denote the unique price that induces index $g_i$. Symbolically, we are using $\vp=\vp(\vg)=(p_1(g_1),\dots , p_n(g_n))\in\R^n_+$ and $\vg=\vg(\vp)=(g_1(p_1),\dots , g_n(p_n))\in\R^n$ to denote these transformations. As there is a one-to-one correspondence between prices and indices, we will be using these two parameterizations interchangeably throughout.

\paragraph{The Buyer's Problem.} The buyer is presented with the $n$ alternatives $(X_1,\dots , X_n)$ and a set of prices $\vp\in\R^n_+$. Let $\vg = \vg(\vp)$ be the corresponding indices. The alternatives are inspected in the order given by \textit{any permutation} $\sigma$ such that $g_{\sigma_1}\ge g_{\sigma_2} \ge \dots \ge g_{\sigma_n}$. In particular, let $v_j$ be the best value observed by the buyer after $j$ inspections, for $j\in [n]$, with $v_0=0$. The buyer halts upon inspecting alternative $\sigma_j$ only if $j=n$ or $v_j>g_{\sigma_{j+1}}$, in which case they select the value of the best realized alternative (or a value of $0$, if $j=0$) and gain the corresponding reward. Otherwise, the buyer proceeds to inspect the next alternative $\sigma_{j+1}$ in the order.

\citet{W79} showed that this strategy maximizes the buyer's expected utility, defined as the (expected) reward they eventually collect upon halting \textit{minus} the (expected) sum of prices they pay for inspections. Furthermore, the buyer's optimal utility is always guaranteed to be non-negative, as they are allowed to immediately halt by not inspecting or selecting any of the alternatives. Finally, the optimal utility achieved by the buyer can be derived as follows:
\begin{fact}\label{fact:utility}
     For any prices $\vp$, the buyer's optimal utility is given by $\expectt{\big(\max_{i\in [n]} \min\{X_i,g_i(p_i)\}\big)^+}$.
\end{fact}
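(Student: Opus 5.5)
We prove \Cref{fact:utility} with the standard ``deferred value'' (capped value) argument, adapted to the convention that the buyer may select nothing (value $0$). For each alternative set $\kappa_i := \min\{X_i, g_i\}$, where $g_i=g_i(p_i)$. The plan has two halves. First, an upper bound: every (possibly adaptive) buyer policy has expected utility at most $\expectt{(\max_i \kappa_i)^+}$. Second, a matching lower bound: Weitzman's index policy from \Cref{sec:preliminaries} attains this quantity exactly.

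\textbf{Step 1 (upper bound).} Fix any policy. Let $I_i$ indicate that $i$ is inspected and $A_i$ that $i$ is selected; we have $A_i\le I_i$ and $\sum_i A_i\le 1$. The policy's decision to inspect $i$ is measurable with respect to information independent of $X_i$. Using $p_i=\expectt{(X_i-g_i)^+}$, this gives
\[
\expectt{I_i p_i}=\expectt{I_i (X_i-g_i)^+}.
\]
Therefore the utility equals
\[
\expectt{\textstyle\sum_i A_iX_i - I_i(X_i-g_i)^+}.
\]
Since $X_i = \kappa_i + (X_i-g_i)^+$, this expression can be rewritten as
\[
\expectt{\textstyle\sum_i A_i\kappa_i - (I_i-A_i)(X_i-g_i)^+}\le \expectt{\textstyle\sum_i A_i\kappa_i}.
\]
Finally, $\sum_i A_i\kappa_i\le(\max_i\kappa_i)^+$ pointwise, because at most one $A_i$ equals $1$ and an empty selection contributes $0$.

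\textbf{Step 2 (tightness).} Weitzman's policy makes both inequalities tight, and we check each separately.

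First, the policy never leaves ``slack'' $(I_i-A_i)(X_i-g_i)^+>0$. Suppose an inspected $i$ has $X_i>g_i$. The buyer keeps going only while the best value seen is at most the next index, and the remaining indices are all at most $g_i<X_i$. So the buyer stops at or before the next step and selects a value $\ge X_i$. If some other $j$ is selected instead, then $X_j\ge X_i$ with $j$ also inspected. I expect this tie case to be the main obstacle. To handle it, compare $\kappa$ values: if $X_i>g_i$ then $\kappa_i=g_i$. If the selected $j$ has $X_j\ge X_i>g_i\ge g_j$ or similar, we need the slack term to vanish; otherwise it must be absorbed. The cleaner route is to note that selecting the max realized value among inspected alternatives, together with inspecting in index order, implies the following: whenever $(X_i-g_i)^+>0$ for an inspected non-selected $i$, the index order forces $g_j\ge g_i$ for the selected $j$ (if $j$ was inspected earlier, the buyer would have stopped before $i$ unless $X_j\le g_i$, contradicting $X_j\ge X_i> g_i$). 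Hence $j$ is inspected after $i$, which is impossible since the buyer halts right after $i$ once the next index is below $X_i$, unless $g_j\ge X_i$. In that remaining case ties can be broken toward $i$ without changing utility. So the slack is zero.

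Second, the policy selects $\arg\max\kappa_i$ when that maximum is positive, and selects nothing otherwise. Let $\tau$ be the stopping time. Every uninspected $j$ has $g_j<v_\tau$ (or $g_j<0$ when nothing has been inspected), so $\kappa_j\le g_j$ is dominated. Among the inspected alternatives, the selected one maximizes $X$, and by the no-slack property its $\kappa$ equals its capped value, which is the largest. If all indices are negative, the buyer inspects nothing and gets $0=(\max_i\kappa_i)^+$. Combining Steps 1 and 2 yields the formula.
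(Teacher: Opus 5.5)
The paper gives no proof of this statement; it quotes it as a known consequence of Weitzman's analysis. Your structure is the standard capped-value amortization argument: an upper bound for every policy via $\kappa_i=\min\{X_i,g_i\}$, then tightness of Weitzman's rule. Step~1 is correct as written.

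The gap is in the second half of Step~2. There you claim the selected alternative $s$ has the largest $\kappa$ among the inspected alternatives ``by the no-slack property.'' That does not follow. Suppose a policy, on some realization, inspects $k$ and then $s$ with $g_s<X_k\le g_k$ and $X_s>X_k$, and selects $s$. It leaves no slack and selects the maximum realization, yet $\kappa_s=g_s<X_k=\kappa_k$. What excludes this for Weitzman's policy is the continuation rule, which you never use at this point: the buyer proceeds to $\sigma_j$ only if $v_{j-1}\le g_{\sigma_j}$. Similarly, for an uninspected $j$ the bound $g_j<v_\tau=X_s$ is not enough when $X_s>g_s$, because then $\kappa_s=g_s<X_s$. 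You also need $g_j\le g_s$, which comes from the index order.

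The fix is short and splits into two cases.
\begin{itemize}
\item If $X_s\le g_s$, then $\kappa_s=X_s=v_\tau$. This dominates $\kappa_k\le X_k$ for every inspected $k$, and $\kappa_j\le g_j<v_\tau$ for every uninspected $j$.
\item If $X_s>g_s$, the buyer halts right after $s$, since the next index is at most $g_s<X_s$. So every other inspected $k$ precedes $s$, and the continuation rule gives $X_k\le g_s$. Every uninspected $j$ has $\kappa_j\le g_j\le g_s=\kappa_s$.
\end{itemize}
Every inspected alternative has a nonnegative index, since the buyer proceeds only when the next index is at least $v\ge 0$. Hence $\kappa_s\ge 0$ and $\sum_i A_i\kappa_i=\kappa_s=(\max_i\kappa_i)^+$. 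If nothing is inspected, all indices are negative, so both sides equal $0$.

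Separately, the tie discussion in your no-slack argument is unnecessary. Any $j$ inspected before $i$ has $X_j\le g_i<X_i$ by the continuation rule. No $j$ is inspected after $i$, because the next index is at most $g_i<X_i$. So $i$ is the strict maximum among inspected alternatives, and your ``remaining case'' $g_j\ge X_i$ cannot occur.
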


\paragraph{The Seller's Problem.}

Given the buyer's optimal response protocol described above, the seller chooses prices $\vp$ to maximize their expected revenue, i.e., the expected sum of prices paid by the buyer. Once $\vp$ is posted, the buyer's behavior (and hence the seller's revenue) is fully determined, except for one important caveat: the buyer may be \emph{indifferent} among multiple inspection orders. In particular, whenever there are ties in the induced indices, the buyer is indifferent among all permutations $\sigma$ that respect the weak ordering of indices.

However, this indifference can matter substantially for the seller. Consider a deterministic alternative $X_1=1+\eps$ and a stochastic alternative $X_2$ taking value $X_2\in\{1,1+\epsilon^{-1}\}$ with probabilities $1-\epsilon$ and $\epsilon$ respectively, for some tiny $\epsilon>0$. The prices $(p_1,p_2) = (\epsilon, 1)$ induce indices $(g_1,g_2)=(1,1)$ and the buyer is indifferent towards the inspection order, always achieving a utility of $1$. If they inspect $X_1$ first, then they observe $1+\epsilon>1$ and they immediately halt; the seller's revenue is $p_1=\epsilon$. On the other hand, if they inspect $X_2$ first, the seller's revenue will be at least $p_2 = 1 \gg \epsilon$.

Since tie-breaking is inconsequential to the buyer yet can substantially affect the seller’s revenue, we adopt the standard modeling convention that ties are broken in the seller’s favor throughout this work. Under this convention, the buyer selects—among all utility-maximizing strategies—the one that maximizes the seller’s expected revenue.\footnote{Alternatively, the seller can also enforce any tie-breaking rule for the buyer up to infinitesimal losses. In particular, to prioritize alternative $1$ over alternative $2$ priced at $(p_1,p_2)$ with $g_1=g_2$, the seller can decrease $p_1$ to $p'_1<p_1$ to set $g'_1=g_1+\epsilon$ for a tiny $\epsilon>0$ small enough to ensure that no distributions contain any point masses in $(g_1, g_1 +\epsilon]$, and no other indices are surpassed. Note that this won't affect the buyer's continuation probabilities and the corresponding loss in revenue will be at most $p_1-p'_1=\expectt{(X_1-g_1)^+-(X_1-g_1-\epsilon)^+}\le \epsilon$.}

We can now define the Pandora's Box Pricing (henceforth, $\pbp$) problem. Whenever an alternative $i$ is inspected, the seller collects its price $p_i=\expectt{(X_i-g_i)^+}$. Furthermore, by the buyer's protocol, this event happens only if $g_i\ge 0$ and all alternatives $j$ that are ordered before $i$ have already been inspected and realized to a value $X_j\le g_i$. Formally:

\begin{definition}[Pandora's Box Pricing]\label{claim:revenue-exp-mandatory}
    An instance $\inst = (X_1,\dots , X_n) $ of $\pbp$ is defined over $n$ stochastic alternatives. For any pricing $\vp\in\R^n_+$ with corresponding indices $\vg = \vg(\vp)$, we say that a permutation $\sigma$ is consistent with $\vp$ if $g_{\sigma_1}\ge\dots \ge g_{\sigma_n}$. The seller's revenue is defined as
    \[\rev(\inst,\vp):=\max_{\sigma:\text{ consistent with $\vp$}}\sum_{i: g_{\sigma_i} \ge 0} \E[(X_{\sigma_i} - g_{\sigma_i})^+] \prod_{j < i} F_{\sigma_j}(g_{\sigma_i}).\]
    We also use $\rev(\inst,\vg) = \rev(\inst,\vg(\vp))$ for the revenue of the pricing that induces indices $\vg$.
\end{definition}

Note that the seller’s revenue is fully determined by the indices induced by the posted prices. To avoid repetition, from now on, whenever we state an ordering $g_{\sigma_1}\ge \dots \ge g_{\sigma_n}$ of the alternatives with respect to some pricing $\vp$, we will implicitly assume $\sigma$ is the optimal consistent permutation of $\vp$ and therefore that the buyer inspects the alternatives in that order. Furthermore, we will be implicitly assuming that $g_i\ge 0$ for any alternative $i$; otherwise, we can decrease its price from $p_i$ to $\mu_i$ so that $g_i(\mu_i)=0$; as the buyer would never pay price $p_i$, this transformation never harms the seller's revenue.

Finally, a central part of this work is devoted to studying a particular family of pricing schemes that equalize the indices across all alternatives. Formally:

\begin{definition}[Uniform-Index Prices]
    For any $\pbp$ instance $\inst = (X_1, \dots, X_n)$ and a fixed index $g \in \R$, we let $\vp(g)$ be the unique price vector that induces index $g$ across all alternatives, i.e. $p_i = \E[(X_i - g)^+]$ for all $i \in [n]$. The family of uniform-index prices is then $\{\vp(g) : g \in \R\}$ and we write $\rev(\inst,g):=\rev(\inst,\vp(g))$ to denote the corresponding revenues.
\end{definition}

\noindent The objective of $\pbp$ is to determine the pricing $\vp$ that achieves the maximum revenue in a given instance $\inst$, which we denote as $\optrev(\inst):=\max_{\vp\in\R^n_+} \rev(\inst,\vp)$\footnote{From~\Cref{claim:revenue-exp-mandatory}, $\mathrm{Rev}(\vec{g})$ is an upper semicontinuous function and $\mathrm{Rev}(\vec{g})\in [0,\sum_{i=1}^n\mathbb{E}[X_i]]$ for all $\vec{g}\in \mathbb{R}^n_+$. We further assume all expectations to be finite as otherwise we can achieve infinite revenue by pricing just one alternative; therefore, we can assume that the revenue function is bounded. By definition of the indices we have that $g_i\in [0,\sup_i\mathrm{support}(X_i)]$ and therefore for distributions with bounded supports, a maximizer of $\mathrm{Rev}(\vec{g})$ always exists. For unbounded supports, since $\mathrm{Rev}(\vec{g})$ is bounded, a supremum exists and therefore there exists a set of indices $(g'_1,\ldots , g'_n)$ that $\epsilon$-approximate this supremum up to an (arbitrarily small) additive error $\epsilon>0$. All of our proofs can then be stated with respect to $\vec{g}'$.}. 
When clear from the context, we omit the dependency on the instance $\inst$ and write $\optrev$ and $\rev(\vp)$, $\rev(\vg)$, and $\rev(g)$.

%% file: mandatory-inspection.tex
\section{Approximately Optimal Pricings for Pandora's Box} \label{sec:mandatory}
In this section, we study the seller's problem of maximizing their revenue under a given $\pbp$ instance $\inst = (X_1, \dots, X_n)$.

Recall the two-alternative example discussed in the introduction. As we saw, the optimal revenue was obtained by setting prices $(p_1,p_2)=(\mu_1,\mu_2)$ or equivalently, uniform indices $(g_1,g_2)=(0,0)$. Since setting any price $p_i>\mu_i$ ensures that $g_i<0$ and the buyer never inspects the alternative, one may ask whether ``greedily'' maximizing prices by setting $g_i=0$ for all $i$ is always near-optimal. We demonstrate that this is not the case.

Consider $n$ identically distributed alternatives taking either value $\epsilon^{2}$ or $1$ with $\Pr[X_1 = 1] = \epsilon$ for some tiny $\epsilon>0$. If we set their prices to $p_i=\mu_i =\epsilon^2(1-\epsilon) + \epsilon$, then the buyer will inspect a single alternative, realize a value $X_1>0$ and halt; the seller's revenue will therefore be $\mu_1\approx \epsilon$. Now consider a different pricing $\vp=\vp(\epsilon^2)$, corresponding to (uniform) prices $p = \epsilon(1-\epsilon^2)$ and (uniform) indices $g_i=\epsilon^2$. The buyer will start inspecting the alternatives until one of them gets realized to $X_i=1$ and then stop. Therefore, the expected revenue of this pricing will be $p + p(1-\epsilon) + \dots + p(1-\epsilon)^{n-1} = (1-\epsilon^2)\cdot(1-(1-\epsilon)^n)\approx n\epsilon$ for sufficiently small $\epsilon$. 

In other words, the seller must carefully balance maximizing prices against maximizing the buyer's exploration. Nevertheless, we show that in the special case of MHR distributions, setting $g=0$ is indeed optimal. Furthermore, for many other special cases, uniform-index prices $\vp(g)$ for some $g\in \R$ (not necessarily $0$) suffice to achieve the optimal revenue. 

\begin{theorem} \label{lem:exact-opt-mandatory-special-cases}
    Let $\inst$ be any $\pbp$ instance satisfying (at least) one of the following:
    \begin{enumerate}
        \item There are only $n=2$ alternatives.

        \item All alternative distributions are identical.

        \item All alternative distributions have support size $\leq 2$.
        
        \item All alternative distributions possess a monotone hazard rate.\footnote{A random variable $Y$ has monotone hazard rate if $h_Y(x):=f_Y(x)/(1-F_Y(x))$ is a non-decreasing function.}
    \end{enumerate}
    Then, there exist uniform-index prices $\vp(g)$ such that $\rev(\inst,\vp(g))=\optrev(\inst)$. In Case (4) the optimal uniform index is given by $g=0$.
\end{theorem}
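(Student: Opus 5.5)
Our approach is a local improvement argument on the index vector. We fix an optimal pricing with indices $g_{\sigma_1}\ge\dots\ge g_{\sigma_n}\ge 0$, inspected in this order. We then show that, under each of the four hypotheses, the indices can be modified toward equality without losing revenue. Writing $p_i(g)=\E[(X_i-g)^+]$, the revenue is
\[\sum_k p_{\sigma_k}(g_{\sigma_k})\prod_{j<k}F_{\sigma_j}(g_{\sigma_k}).\]
The basic move is to lower the first (largest) index $g_{\sigma_1}$ down to $g_{\sigma_2}$. This strictly raises $p_{\sigma_1}$ and leaves every factor $F_{\sigma_j}(g_{\sigma_k})$ for $k\ge2$ unchanged, since those depend only on later indices. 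So we may assume $g_{\sigma_1}=g_{\sigma_2}$. This alone settles Case (1).

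\textbf{Case (2), identical distributions.} Here $p(\cdot)$ and $F(\cdot)$ are common to all alternatives, so the revenue is $\sum_k p(g_k)F(g_k)^{k-1}$ with $g_1\ge\dots\ge g_n$. Each summand is bounded by $\max_g p(g)F(g)^{k-1}$, but these maximizers differ across $k$. Instead we compare with a single uniform index. Set $\phi(g)=\sum_k p(g)F(g)^{k-1}$, which is the revenue of $\vp(g)$. We claim that choosing the best $g$ among $\{g_1,\dots,g_n\}$ suffices, via an exchange argument on adjacent pairs. Suppose $g_k>g_{k+1}$ and consider the block of alternatives sharing the value $g_k$. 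Moving that whole block's index to its upper neighbour or to its lower neighbour changes revenue in a way that is linear in its $p$-weight, and one of the two directions does not decrease revenue. Iterating this merges all the indices. The step to check is that the revenue restricted to moving one block is quasi-convex, or that comparing its two endpoints dominates the interior value. We intend to carry this out by writing the block's contribution as $p(g)\cdot C_1 + F(g)^m\cdot C_2$ with $C_1,C_2\ge 0$, with the tail's contribution factored appropriately, and showing the maximum over the interval is at an endpoint or is achieved uniformly. We expect this step to be the main obstacle.

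\textbf{Case (3), support at most two.} Assume each $X_i\in\{a_i,b_i\}$ with $a_i<b_i$. Then $F_i(g)$ is piecewise constant, with breakpoints only at $a_i$ and $b_i$. On any interval free of support points, raising an index within the interval weakly lowers its own price and leaves every continuation factor unchanged, since no $F_j$ jumps there. Hence each $g_i$ can be pushed down to the nearest support point, or to $0$. We then argue that the relevant indices satisfy $g_i\ge a_i$; otherwise alternative $i$ is inspected with $F_i(g)=0$ for everything below it, so later alternatives are never reached. Together with the above, this reduces to a configuration in which everything after the first alternative with $g_i<b_i$ is effectively irrelevant. From there we collapse all indices to a common value $g$ by applying the Case (1) move repeatedly.

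\textbf{Case (4), MHR.} We show $\rev(\vg)\le\rev(\vp(0))=\max$ over orders of $\sum_k\mu_{\sigma_k}\prod_{j<k}F_{\sigma_j}(0)$. Taking $X_i\ge0$ continuous, each $F_j(0)=0$, so $\rev(\vp(0))=\max_i\mu_i$ (with seller tie-breaking). It then suffices to prove $\sum_k p_{\sigma_k}(g_{\sigma_k})\prod_{j<k}F_{\sigma_j}(g_{\sigma_k})\le\max_i\mu_i$. The key MHR inequality we would use is $p_i(g)=\E[(X_i-g)^+]\le \mu_i(1-F_i(g))$, which is the NBUE property implied by MHR: the mean residual life $\E[X_i-g\mid X_i>g]$ is at most $\mu_i$. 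Using $g_{\sigma_k}\le g_{\sigma_j}$ for $j<k$, each term is then at most
\[\mu_{\sigma_k}(1-F_{\sigma_k}(g_{\sigma_k}))\prod_{j<k}F_{\sigma_j}(g_{\sigma_k})\le \max_i\mu_i\,\Pr\Bigl[X_{\sigma_k}>g_{\sigma_k},\ X_{\sigma_j}\le g_{\sigma_j}\ \forall j<k\Bigr].\]
These events are disjoint across $k$, so the sum is at most $\max_i\mu_i$, which completes Case (4).
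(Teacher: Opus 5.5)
Your Cases (1) and (4) are correct and match the paper. Case (1) lowers the top index to the second one. Case (4) uses $\E[(X_i-g)^+]\le\mu_i(1-F_i(g))$ together with disjointness of the events ``$i$ is the first alternative exceeding its index''. The gaps are in Cases (2) and (3).

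\textbf{Case (2).} You leave the key step open, and the route you sketch cannot work. Moving one block's index does not affect the tail at all. Alternative $k$ is reached with probability $F(g_k)^{k-1}$, which depends only on its own index, as you note yourself in the opening paragraph. So there is no $F(g)^m C_2$ term. The only change is in the block's own contribution $E(g)F(g)^{s-1}\sum_{r=0}^{m-1}F(g)^r$, where $E(g):=\E[(X-g)^+]$. This function can be strictly maximized in the interior of the feasible interval. Take $X$ uniform on $\{0,1,2,3\}$ and $\vg=(2,2,1,0)$. The third term $h_3(g):=E(g)F(g)^2$ equals $3/32$, $3/16$, $9/64$ at $g=0,1,2$, so moving the singleton block $\{3\}$ to either neighbour strictly loses revenue.

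The missing idea is to compare two \emph{adjacent} blocks, using that $h_{k+1}(g)/h_k(g)=F(g)$ is non-decreasing in $g$, where $h_k(g):=E(g)F(g)^{k-1}$. Suppose the top block sits at $a$ on positions $1,\dots,i$ and $g_{i+1}=b<a$. If $h_{i+1}(a)\ge h_{i+1}(b)$, raising $g_{i+1}$ to $a$ weakly helps. Otherwise $E(b)F(b)^i>E(a)F(a)^i$, which forces $F(b)>0$. Dividing by $F(b)^{i-k+1}\le F(a)^{i-k+1}$ then gives $h_k(b)>h_k(a)$ for all $k\le i$, so lowering the whole top block to $b$ helps. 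Either way the top block grows, so the process terminates with uniform indices. This is essentially the paper's exchange argument.

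\textbf{Case (3).} Your truncation criterion is the wrong one; also ``$g_i<b_i$'' presumably should read $g_i<a_i$. Having $g_i\ge a_i$ does not mean alternative $i$ is ever reached. It is blocked forever as soon as some earlier $j$ has $a_j>g_i$. In that situation, repeatedly lowering the top block to the next index destroys earlier inspection probabilities.

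For instance, take $X_1\in\{5,10\}$, $X_2\in\{0,10\}$, $X_3\in\{1,10\}$ and $\vg=(8,6,2)$. Every $g_i\ge a_i$, yet lowering $g_1,g_2$ to $2$ turns the inspection probability of alternative $2$ from $F_1(6)=\probb{X_1=5}$ into $F_1(2)=0$.

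The paper instead lets $k$ be the largest label with $g_k\ge\max_{j<k}a_j$. Every alternative after $k$ is then never inspected. For $j<i\le k$ one has $a_j\le g_k\le g_i\le g_j<b_j$, so $F_j(g_i)=F_j(g_k)$. Switching to the uniform index $g_k$ therefore keeps all relevant inspection probabilities and weakly raises all prices. Alternatives with $g_i\ge b_i$ have price $0$ and are first moved to index $0$.
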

\begin{proof}
    Cases (1) and (2) are proven in~\Cref{appendix:mandatory-special-proofs}. Here, we prove cases (3) and (4).

    \paragraph{Two Point Distributions.}
    We begin with case (3) and use $\ell_i \le r_i$ to denote the two values in the support of $X_i$. Let $\vp$ be any optimal pricing and $\vg = (g_1,\dots , g_n)$ be the corresponding indices with $g_1\ge \dots \ge g_n\ge 0$. We assume that $g_i<r_i$ for all $i\in [n]$, as otherwise $p_i=0$ and it doesn't contribute to the revenue; we can instead set $p_i=\mu_i$ so that $g_i=0$ and push alternative $i$ to the end of the buyer's priority without affecting the inspection probabilities of the remaining alternatives. 
    
    Let $k$ be the maximum label with $g_k \ge \max_{j < k} \ell_j$. Observe that as $g_{k + 1} < \max_{j \le k} \ell_j$, the buyer never inspects any alternative with label $k + 1$ or larger. We claim that setting uniform prices $\vp_u=\vp(g_k)$ (and maintaining the same inspection order) weakly increases the seller's revenue:

    \begin{itemize}
        \item All alternatives $i\in \{k+1,\dots , n\}$ were never inspected under $\vg$ and therefore didn't contribute to the revenue. Furthermore, they are ordered after any alternative $j\le k$ in both pricings and therefore they do not affect the probability of inspecting $j$.

        \item All alternatives $i\in \{1,\dots , k\}$ are now priced higher, since their index decreases from $g_i$ to $g_k$ and therefore $\expectt{(X_i-g_i)^+}\le \expectt{(X_i-g_k)^+}$. Therefore, it suffices to argue that the probabilities of inspecting them do not decrease. Initially, alternative $i\le k$ was inspected with probability $\prod_{j<i}F_j(g_i)$ whereas now it is inspected with probability $\prod_{j<i}F_j(g_k)$; we will argue that $F_j(g_i)=F_j(g_k)$ for all $j<i\le k$. Indeed, we have $\ell_j \le g_k \le g_i \le g_j < r_j$ by definition of $k$ and therefore $F_j(g_i)=F_j(g_k) = \probb{X_j=\ell_j}$.
    \end{itemize}

    \paragraph{MHR Distributions.}
    We now focus on case (4). It is a well-known fact that if a random variable $X$ is MHR, then the mean residual life (that is, the quantity $\E[X - g \mid X > g]$) is non-increasing in $g$~\citep{BMP63}.\footnote{In particular,
    $\E[X - g \mid X > g] = \int_0^\infty \exp\left(-\int_{g}^{g+t} h_X(y) \, dy\right) \, dt$ and the proof follows by the monotonicity of $h_X(\cdot)$.} Furthermore, as MHR variables are continuous, $\E[X \mid X > 0] = \E[X]$. 

    Consider the prices $\vp_0 = \vp(0) = (\mu_1, \dots , \mu_n)$ corresponding to $g_i=0$ for all $i$. The buyer is indifferent towards which alternative to inspect first; and once an alternative is inspected, the buyer will immediately accept it as $\probb{X_i>0}=1$. Therefore, the seller's revenue under $\vp$ is maximized if the buyer prioritized alternative $j\in \argmax_i \mu_i$, and we obtain $\rev(\vp_0)=\max_i\mu_i$. To complete the proof, we will argue that $\optrev\le \max_i\mu_i$. Let $\vp$ be any revenue-maximizing set of prices with $\vg(\vp) = (g_1,\dots , g_n)$ and re-label the alternatives via the buyer's inspection order so that $g_1\ge \dots \ge g_n\ge 0$; alternatives with $g_i<0$ can be ignored. We have:

    \begin{align*}
        \optrev &= \sum_{i=1}^n \prod_{j=1}^{i-1} F_{j}(g_i) \cdot \E[(X_i - g_i)^+] \\
            &\le \sum_{i=1}^n \left(\prod_{j=1}^{i-1} F_{j}(g_j)\right) \Pr[X_i > g_i] \cdot \E[X_i - g_i \mid X_i > g_i]  && \text{($g_j\geq g_i$ for all $j<i$)}\\
            &\le \sum_{i=1}^n \left(\prod_{j=1}^{i-1} F_{j}(g_j)\right) \Pr[X_i > g_i] \cdot \E[X_i \mid X_i > 0] && \text{($g_i\ge 0$)}\\
            &\le \sum_{i=1}^n \left(\prod_{j=1}^{i-1} \Pr[X_j \le g_j]\right) \Pr[X_i > g_i] \cdot \left(\max_k \mu_k\right) && \text{($\E[X_i \mid X_i > 0]=\mu_i$)}\\
            &\le \max_k \mu_k
    \end{align*}
    with the last inequality following from the fact that $q_i :=  \Pr[X_i > g_i]\cdot\prod_{j=1}^{i-1} \Pr[X_j \le g_j] $ is the probability of the event that ``alternative $i$ is the first alternative whose value exceeds its index'' and therefore $\sum_i q_i \le 1$.
\end{proof}

\Cref{lem:exact-opt-mandatory-special-cases} raises the question of whether uniform-index pricings could be optimal in general. 
We answer this question in the negative by proving the following (constant) lower bound on the approximate optimality of uniform-index prices. We defer the proof to~\Cref{app:unif-index-lb}.

\begin{theorem} \label{thm:const-gap-mandatory-unif-id}
    There exists a $\pbp$ instance $\inst$ with $\optrev(\inst) \geq 1.14\cdot\max_{g \in \R}\rev(\inst,\vp(g))$.
\end{theorem}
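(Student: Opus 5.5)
The plan is to exhibit an explicit small instance, most likely with $n=3$ alternatives of small support. On it I will compute the revenue of one hand-picked non-uniform index vector exactly. I will then show that every uniform-index pricing earns at least a factor $1.14$ less.

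The instance should realize the failure mode described in the introduction. There, lowering the common index of two alternatives that are inspected consecutively raises both prices, but makes the second one less likely to be inspected, because the first one's realization now exceeds the lowered index more often. So I want alternatives of two kinds.
\begin{itemize}
\item The first kind consists of ``high'' alternatives. Each has a rare large value. It is only profitable to sell at a large index $g_H$, which keeps the buyer from stopping early.
\item The second kind consists of ``low'' alternatives. Each should be priced near its mean, i.e.\ at index close to $0$, where it extracts much more.
\end{itemize}
A concrete starting family is the following, to be tuned numerically:
\begin{itemize}
\item $X_1 \in \{a, M\}$ with $\Pr[X_1=M]=\epsilon$;
\item $X_2$ a similar two-point variable;
\item $X_3$ deterministic or nearly so, with value slightly below $a$.
\end{itemize}
The intended non-uniform pricing is $g_1=g_2=a$ and $g_3=0$ (or a slightly positive value). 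Then alternative $3$ is still reached whenever both high boxes realize low, because $a \le g_3$ fails only on the rare event.

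The second step is computing the benchmark. For the chosen $\vg$, I evaluate \Cref{claim:revenue-exp-mandatory} directly. The consistent order is forced (up to seller-favorable ties), so the revenue is an explicit finite sum of terms $\E[(X_i-g_i)^+]\prod_{j<i}F_j(g_i)$.

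The third step is to upper bound $\max_g \rev(\inst,g)$. First I reduce to finitely many candidate $g$:
\begin{itemize}
\item Between two consecutive points of $\{0\}\cup\bigcup_i \support(X_i)$, every $F_j(g)$ is constant. Hence all inspection probabilities are constant, and for a uniform index the optimal order depends only on the distributions.
\item Meanwhile each price $\E[(X_i-g)^+]$ is non-increasing in $g$.
\item So on each such interval the revenue is maximized at the left endpoint (using the convention $F_j(g)=\Pr[X_j\le g]$ and upper semicontinuity at support points).
\end{itemize}
It then suffices to evaluate $\rev(\inst,g)$ at $g=0$ and at each support point. At each of these I maximize over the consistent orders; with ties everywhere, this means all orders. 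This gives a finite table of numbers.

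The main obstacle is parameter selection. I need to find values $(a,M,\epsilon,\dots)$ for which the ratio actually exceeds $1.14$, rather than some smaller constant. Several competing uniform choices are all reasonable: $g=0$ extracts $\max_i\mu_i$, $g=a$ extracts the high boxes, and intermediate support points mix the two. The instance must make all of them simultaneously worse than the two-level pricing. My first attempt would be to write the revenue of each candidate as a closed-form rational function of the parameters. I would then optimize the ratio numerically, possibly adding a fourth alternative or a third support point if three two-point boxes cannot be used. Recall that Case (3) of \Cref{lem:exact-opt-mandatory-special-cases} shows that with support size at most $2$ for all alternatives, uniform-index prices are already optimal. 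So at least one alternative must have support size $3$; the deterministic-plus-spread alternative is where I would put it. Finally, I would round the parameters to rationals and verify the inequality $\optrev(\inst)\ge 1.14\max_g\rev(\inst,g)$ by exact arithmetic on the finite table.
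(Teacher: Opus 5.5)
There is a genuine gap. The statement is a purely quantitative existence claim, so its proof \emph{is} an explicit instance together with a verification of the constant. Your proposal defers exactly this step (``parameter selection''), and the one concrete family it sketches provably fails.

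\textbf{The sketched family cannot work.} With $X_1,X_2\in\{a,M\}$, $a>0$, and $g_3=0$ (or any $g_3<a$), alternative $3$ is inspected only if $X_1,X_2\le g_3$. That event has probability $F_1(g_3)F_2(g_3)=0$; your justification conflates $X_j\le g_3$ with $X_j\le g_1$. Raising $g_3$ to $a$ instead makes $p_3=\E[(X_3-g_3)^+]=0$.

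More fundamentally, every alternative in that family has support size at most $2$, so Case (3) of \Cref{lem:exact-opt-mandatory-special-cases} already gives ratio exactly $1$. Moving the third support point onto $X_3$ does not help. If the first-inspected alternative is two-point with $\ell_1\le g_3<g_1<r_1$, then $F_1(g_3)=F_1(g_1)$, and the uniform index $g_3$ weakly dominates. The extra support point must sit on the high-index alternatives. Each one needs:
\begin{itemize}
\item mass at $0$, so the low alternative can be reached;
\item a blocking mass exactly at the high index $z$, which is harmless at index $z$ but blocks at index $0$;
\item a rare large value.
\end{itemize}
The low alternative should be essentially deterministic at $z$, hence worthless at the uniform index $z$. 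Your reduction of the uniform side to finitely many candidate indices is fine; it is \Cref{lem:gittins-index-in-support-mandatory}.

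\textbf{Even repaired, a small instance should not reach $1.14$.} Take two i.i.d.\ high alternatives with $\Pr[Y=1]=\gamma$, $\Pr[Y=z]=\delta$, $\Pr[Y=0]=s$, plus a third alternative $X\equiv z$. The indices $(z,z,0)$ earn $\gamma(1-z)(2-\gamma)+zs^2$. The uniform indices $z$ and $0$ earn $\gamma(1-z)(2-\gamma)$ and $(\gamma+\delta z)(1+s)+zs^2$ respectively. A short computation shows the ratio never exceeds $1+\max_s s^2(1-s)/2=29/27\approx 1.074$, approached as $z\to 0$ with $\gamma\approx 3z$ and $s=2/3$.

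With $k$ such high alternatives, the best ratio is about $1+\max_s s^k\bigl(1-\tfrac1k\sum_{i<k}s^i\bigr)$. That is roughly $1.097$ for $k=3$, and it tends to $1+\max_\delta e^{-\delta}(\delta-1+e^{-\delta})/\delta\approx 1.1401$ only as $k\to\infty$.

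This is why the paper uses two bundles of $n=10^6$ i.i.d.\ three-point alternatives on $\{0,z,1\}$, with probabilities of order $1/n$, so that one bundle acts in aggregate like a deterministic $z$. Its parameters are extreme and were found by numerical search. The constant $1.14$ sits essentially at the supremum of this mechanism and is reachable only in the many-box (Poisson-like) limit. The missing idea is this amplification through many small i.i.d.\ alternatives: $z$-realizations pass repeatedly at index $z$, but shut off the low alternative at index $0$.
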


On the positive side, our main result for this section shows that while sub-optimal, uniform-index prices always obtain a constant approximation to the optimal revenue, for all instances.

\begin{theorem} \label{thm:4-approx-mandatory-unif-id}
     For all $\pbp$ instances $\inst$, we have $\optrev(\inst)\leq 4\cdot \max_{g \in \R}\rev(\inst,\vp(g))$.
\end{theorem}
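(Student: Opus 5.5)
The plan is to compare both sides through an integral representation of prices. Throughout, $F_i$ is the CDF of $X_i$, $M:=\max_i X_i$, and $Q(t):=\prod_{i}F_i(t)=\probb{M\le t}$. Since $p_i(g_i)=\expectt{(X_i-g_i)^+}=\int_{g_i}^\infty(1-F_i(t))\,dt$, \Cref{claim:revenue-exp-mandatory} can be rewritten as
\[
\rev(\vg)=\int_0^\infty \sum_{i:\, g_i\le t}(1-F_i(t))\prod_{j<i}F_j(g_i)\,dt .
\]

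\emph{Step 1 (upper bound on $\optrev$).} Fix an optimal $\vg$ with $g_1\ge\dots\ge g_n\ge 0$, and fix $t$. For each $i$ with $g_i\le t$, the $i$-th summand is the probability of the event ``alternative $i$ is reached and $X_i>t$''. These events are pairwise disjoint. Indeed, if an earlier alternative $i'$ with $g_{i'}\le t$ is reached and has $X_{i'}>t$, then $X_{i'}>t\ge g_i$ for every later $i$, so the buyer halts before reaching $i$. Each of these events implies $M>t$. Hence the integrand is at most $\min\{1,\,1-Q(t)\}$. Besides this, I also intend to keep the cruder bound $\sum_{i:g_i\le t}(1-F_i(t))$ for the integrand.

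\emph{Step 2 (lower bound on uniform-index revenue).} For a uniform index $g$, every alternative is reached with probability at least $\prod_{j}F_j(g)=Q(g)$, whatever the inspection order. Therefore
\[
\rev(g)\ \ge\ Q(g)\int_g^\infty \sum_i (1-F_i(t))\,dt\ \ge\ Q(g)\,\expectt{(M-g)^+},
\]
where the last step uses the union bound $\sum_i(1-F_i(t))\ge 1-Q(t)$.

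\emph{Step 3 (splitting at a median).} Let $g^*$ be a median of $M$, so that $Q(g^*)\ge 1/2$ and $Q(t)<1/2$ for $t<g^*$. Split the integral from Step 1 at $g^*$. The tail part is at most $\int_{g^*}^\infty(1-Q(t))\,dt=\expectt{(M-g^*)^+}$, which by Step 2 is at most $2\rev(g^*)$. It remains to bound the low part $\int_0^{g^*}(\cdots)\,dt\le g^*$, and the target is $2\max_g\rev(g)$. For this I would use uniform indices $g$ just below $g^*$. There $Q(g)<1/2$, so with probability more than $1/2$ some alternative exceeds $g$. The buyer pays $p_i(g)$ at every inspection until the first alternative with $X_i>g$ appears. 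I would lower bound $\rev(g)$ by $\sum_i \probb{\text{$i$ reached}}\cdot\probb{X_i>g}\cdot\expectt{X_i-g\mid X_i>g}$ and compare this term by term with the optimal revenue collected from alternatives with $g_i<g^*$. Such an alternative can only be reached if all earlier alternatives realize at most $g_i$, which is exactly the structure exploited in the MHR argument of \Cref{lem:exact-opt-mandatory-special-cases}: the ``first alternative exceeding its index'' probabilities sum to at most $1$. If the direct comparison fails, the fallback is to integrate over the choice of $g\in[0,g^*]$, averaging $\rev(g)$ against the measure $dt$ on $[0,g^*]$, so that the integrand $\sum_{i:g_i\le t}(1-F_i(t))\prod_{j<i}F_j(g_i)$ is matched pointwise in $t$ by the uniform pricing at index $t$. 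The key step there is the monotonicity fact that $\prod_{j<i}F_j(g_i)\le \prod_{j<i}F_j(t)$ whenever $g_i\le t$. This gives $\int_0^{g^*}(\cdots)\,dt$ bounded by an average of $\rev(t)$ over $t\in[0,g^*]$ with constant weight, up to the factor $2$ coming from $Q$ crossing $1/2$.

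The main obstacle is this low-threshold part. The crude bound of Step 2 loses the factor $Q(g)$, which can be tiny below the median. So the argument must instead use the fact that, under the uniform index $t$, alternative $i$ is reached with probability exactly $\prod_{j<i}F_j(t)$. One also needs to handle the mismatch between the optimal inspection order and the order chosen under uniform indices. This is resolved by using the seller-favorable tie-breaking, which allows keeping the optimal order $\sigma$. Combining the two halves, each bounded by $2\max_g\rev(g)$, yields the factor $4$.
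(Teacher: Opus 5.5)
\textbf{There is a genuine gap: the low-threshold half of your Step 3 is not proved, and neither route you suggest for it can work.}

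\textbf{What is correct.} Your Steps 1 and 2 hold, and so does your treatment of levels $t\ge g^*$. There the integrand is at most $1-Q(t)$, and $\expectt{(M-g^*)^+}\le 2\rev(g^*)$ because every alternative is reached with probability at least $Q(g^*)\ge 1/2$ under the uniform index $g^*$.

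\textbf{The gap.} It is the low part $\int_0^{g^*}\sum_{i:g_i\le t}(1-F_i(t))\prod_{j<i}F_j(g_i)\,dt$, where all the difficulty lies. You give no actual argument for it.

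\emph{The pointwise fallback fails.} The monotonicity $\prod_{j<i}F_j(g_i)\le\prod_{j<i}F_j(t)$ bounds the integrand at $t$ by $\sum_i(1-F_i(t))\prod_{j<i}F_j(t)=1-Q(t)$. That is the probability that the uniform-index-$t$ search stops on a value above $t$. It is not the revenue of index $t$ in the order $\sigma$, which is $\sum_i\prod_{j<i}F_j(t)\int_t^\infty(1-F_i(s))\,ds$ and only sees levels above $t$. So this route yields at best $\int_0^{g^*}(1-Q(t))\,dt$, which is not $O(\max_g\rev(g))$. Take $n$ i.i.d.\ exponential alternatives of mean $1$. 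They are MHR, so $\optrev=\max_g\rev(g)=1$ by \Cref{lem:exact-opt-mandatory-special-cases}. Yet $Q(t)\le e^{-ne^{-t}}\le 1/n$ for all $t\le \ln n-\ln\ln n<g^*$, so the bound is $\Theta(\log n)$. The ``factor $2$ from $Q$ crossing $1/2$'' exists only at levels $t\ge g^*$; below the median $Q(t)<1/2$ and reach probabilities can be arbitrarily small.

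\emph{The direct comparison also fails.} For a single deterministic alternative $X_1\equiv H$ we have $g^*=H$. The optimum (index $0$, price $H$) collects its entire revenue $H$ from levels in $[0,H)$, while $\rev(g)=H-g$ is negligible for $g$ just below $g^*$.

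\textbf{The missing idea.} Steps 1--2 compare everything to the distribution of $M$. But $\expectt{M}$, and even its part below the median, can exceed $\optrev$ by a $\Theta(\log n)$ factor (same exponential example). So the low part needs a bound that refers back to $\optrev$ itself. The paper does this by splitting alternatives rather than levels:
\begin{itemize}
\item In the optimal order, let $k$ be the last position whose reach probability $\lambda_k=\prod_{j<k}F_j(g_k)$ is at least $1/2$.
\item Every $i>k$ satisfies $\prod_{j\le k}F_j(g_i)\le\lambda_{k+1}<1/2$. Hence the suffix earns at most $\lambda_{k+1}$ times the revenue of the feasible pricing that sets $p_i=\infty$ for $i\le k$ and keeps the other indices. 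That is at most $\frac12\optrev$, a self-bounding step with no analogue in your argument.
\item The prefix earns at most $\sum_{i\le k}p_i(g_k)\le 2\rev(g_k)$, since under the uniform index $g_k$ each $i\le k$ is reached with probability at least $\lambda_k\ge 1/2$. This is the counterpart of your tail argument.
\item Then $\optrev\le\frac12\optrev+2\rev(g_k)$, which gives the factor $4$.
\end{itemize}
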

\begin{proof}

Let $\vp$ be any revenue-maximizing set of prices and $\vg(\vp) = (g_1,\dots , g_n)$. We assume without loss of generality that all alternatives have non-negative index and are re-labeled so that $g_1 \ge \dots \ge g_n \ge 0$. By~\Cref{claim:revenue-exp-mandatory}, we have $\optrev =  \sum_{i=1}^n \E[(X_i - g_i)^+]\cdot \prod_{j < i} F_{j}(g_i)$.

Observe that as $g_i$ are non-increasing in $i$, the same is true for the products $\lambda_i:=\prod_{j < i} F_{j}(g_i)$. Furthermore, $\lambda_1 = 1$. We also let $\lambda_{n+1}=0$ for consistency. Then, there exists an index $k\in [n]$ such that $\lambda_k\geq \frac{1}{2}$ and $\lambda_{k+1}< \frac{1}{2}$. We first analyze the revenue contribution of alternatives $\{k+1,\dots , n\}$:

\begin{align*}
    \rev_{k+1}^n &:=\sum_{i=k+1}^n \E[(X_i - g_i)^+] \cdot\prod_{j < i} F_{j}(g_i) = \sum_{i=k+1}^n \E[(X_i - g_i)^+] \cdot \prod_{j < k+1} F_{j}(g_i) \cdot \prod_{j=k+1}^{i-1} F_{j}(g_i) \\
    &\le\sum_{i=k+1}^n \E[(X_i - g_i)^+] \cdot \prod_{j < k+1} F_{j}(g_{k+1}) \cdot \prod_{j=k+1}^{i-1} F_{j}(g_i) = \lambda_{k+1}\cdot \sum_{i=k+1}^n \E[(X_i - g_i)^+] \cdot \prod_{j=k+1}^{i-1} F_{j}(g_i)\\
    &\le \frac{1}{2} \sum_{i=k+1}^n \E[(X_i - g_i)^+] \cdot \prod_{j=k+1}^{i-1} F_{j}(g_i) \le \frac{1}{2} \optrev\\
\end{align*}
where the first inequality follows from the fact that $i\geq k+1$ implies $g_i\le g_{k+1}$ and therefore $F_j(g_i)\le F_j(g_{k+1})$ for all $j$, and the second inequality follows by definition of $\lambda_{k+1}$. Finally, observe that the final sum is precisely equal to the revenue collected by pricing $p_i=\infty$ for all $i\leq k$ (removing them from the buyer's consideration) and pricing all alternatives $i>k$ so that their index is $g_i$ (maintaining the same ordering). Therefore, we conclude that $\rev_{k+1}^n \leq \frac{1}{2}\cdot\optrev$.

We now proceed to analyze the revenue contribution of alternatives $\{1,\dots , k\}$:
\begin{align*}
    \rev_{1}^k &:=\sum_{i=1}^k \E[(X_i - g_i)^+] \cdot 
    \prod_{j < i} F_{j}(g_i) \leq \sum_{i=1}^k \E[(X_i - g_i)^+] \\
    &\le \sum_{i=1}^k \E[(X_i - g_k)^+] \le 2\lambda_k\cdot \sum_{i=1}^k \E[(X_i - g_k)^+] \\
    &= 2\cdot \sum_{i=1}^k \E[(X_i - g_k)^+]\cdot \prod_{j<k}F_j(g_k) \leq 2\cdot \sum_{i=1}^n \E[(X_i - g_k)^+]\cdot \prod_{j<i}F_j(g_k)
\end{align*}
where the first inequality follows from $F_j(x)\leq 1$, the second inequality follows from $g_i\geq g_k$ for all $i\leq k$, the third follows by definition of $\lambda_k$ and the fourth follows by the fact that we only add non-negative terms. Since $\optrev = \rev_{1}^k + \rev_{k+1}^{n}$, we conclude that 
\[\optrev \leq 4\sum_{i=1}^n \E[(X_i - g_k)^+]\cdot \prod_{j<i}F_j(g_k)\]
and the proof follows by observing that the sum in the right-hand side is precisely equal to the revenue achieved by the uniform-index prices $\vp(g_k)$.
\end{proof}

\subsection{Computational Considerations}
\Cref{thm:4-approx-mandatory-unif-id} implies the existence of ``good'' uniform-index prices but does not show how to efficiently obtain them, as the ones used in our analysis are constructed from the optimal (non uniform-index) price vector. The following lemma demonstrates an important structural property of $\pbp$: it is always in the interest of the seller to set the indices of the alternatives at some points in their (combined) supports.

\begin{lemma} \label{lem:gittins-index-in-support-mandatory}
    Fix any $\pbp$ instance $\inst$ and let $S = \{0\} \cup \bigcup_{i=1}^n \support(X_i)$ be the union of the supports of all the alternatives (and zero). Then, for all prices $\vp \in \R^n_+$, there exist prices $\vec{p'}\in\R^n_+$ whose indices are in $S$ (i.e. $g_i(p'_i)\in S$ for all $i\in [n]$) that achieve weakly better revenue: $\rev(\inst, \vec{p'}) \ge \rev(\inst, \vp)$. Similarly, for any (uniform) index $g\in\R$, there exists $g'\in S$ such that $\rev(\inst,\vp(g'))\ge \rev(\inst,\vp(g))$.
\end{lemma}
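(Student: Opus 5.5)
The plan is to lower indices that lie strictly between consecutive points of $S$ down to the nearest point of $S$ below them. Because $g_i\ge 0$ and $0\in S$, such a point always exists. This moves an index inside an interval on which every CDF $F_j$ is constant, so the inspection probabilities stay the same, while each price $p_i(g)=\E[(X_i-g)^+]$ is non-increasing in $g$ and therefore weakly rises.

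Fix $\vp$ with indices $\vg$. As in the earlier WLOG, assume all $g_i\ge 0$ and relabel so that $g_1\ge\dots\ge g_n$, where this order is the seller-optimal consistent permutation. For each $i$ set $g'_i:=\max\{s\in S: s\le g_i\}$. When $S$ is not closed (e.g.\ continuous supports with gaps), I will use the supremum of $S\cap[0,g_i]$, which lies in the closure of the support and gives the same CDF values. Also, if some $g_i$ exceeds $\max\support(X_i)$ then $p_i=0$; lowering it to the support maximum keeps the price at $0$ and is covered by the same map.

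The map $g\mapsto g'$ is monotone, so $g'_1\ge\dots\ge g'_n$ and the original order $\sigma$ remains consistent with $\vg'$. Revenue is a maximum over consistent permutations, so it suffices to compare the term-by-term sum under this fixed $\sigma$. The central step is that no point of any support lies in $(g'_i,g_i]$, hence $F_j(g'_i)=F_j(g_i)$ for every $j$. Therefore each factor $\prod_{j<i}F_j(g'_i)$ equals the original $\prod_{j<i}F_j(g_i)$. The factor $\E[(X_i-g'_i)^+]$ is at least $\E[(X_i-g_i)^+]$ because $g'_i\le g_i$. The condition $g'_i\ge 0$ is kept, so no term is dropped, and summing gives $\rev(\vp')\ge\rev(\vp)$. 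The uniform statement follows from the same argument with all $g_i=g$: the vector $\vg'$ is then uniform with value $g'$, and any $g<0$ can first be raised to $0$ as in the preliminaries.

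I expect the main obstacle to be the measure-theoretic care in the claim $F_j(g'_i)=F_j(g_i)$ when the supports are not finite. Here I would define $\support$ as the closed support, so its complement is open and carries zero probability. The CDFs are then constant on each gap $[s,t)$ of $S$, where $s$ is the supremum of $S$ below $g_i$. The point $s$ itself belongs to $S$ because $S$ is closed; this needs the union to be closed, which holds for finitely many closed sets. Tie-breaking needs no separate treatment, since $\sigma$ stays consistent and the new revenue is a max that includes it.
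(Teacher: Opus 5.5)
Your proof is correct and follows the paper's argument: lowering each index to the largest point of $S$ below it leaves every $F_j$ unchanged (no support mass lies in $(g'_i,g_i]$), weakly raises each price, and preserves a consistent inspection order. The differences are cosmetic. You round all indices simultaneously and rely on the monotonicity of the rounding map, whereas the paper repeatedly lowers the smallest index outside $S$. You also make explicit that $S$ must be closed so that the maximum exists, which the paper leaves implicit.
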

\begin{proof}
    Let's focus on the first statement. Suppose $\vg = \vg(\vp)$ are the indices corresponding to $\vp$. We re-label the alternatives so that $g_1\ge \dots \ge g_n\ge 0$. Recall that from~\Cref{claim:revenue-exp-mandatory},
    \[\rev(\vp) = \sum_{i =1}^n \E[(X_i - g_i)^+] \prod_{j < i} F_{j}(g_i).\]

    Assume that $\vg$ contains at least one index that does not belong in $S$ and let $g_k$ be the last (minimum) such index. Then, we claim that decreasing $g_k$ to $g'_k$ being the largest value in $S$ that is smaller than $g_k$ weakly increases the revenue. We first note that this transformation still maintains the order of the buyer's strategy: as $g_{k + 1} \in S$ and $g_k \ge g_{k + 1}$, by the choice of $g'_k$ we must necessarily have $g'_k \ge g_{k + 1}$ and $g'_k \le g_{k - 1}$. If $k=n$, then we still have $g'_k\ge 0$ by our definition of $S$.
    
    Consider the change in revenue under this transformation. Observe that the only term that is affected in the sum above is the one corresponding to $i=k$; that is, $\E[(X_k - g_k)^+] \prod_{j < i} F_{j}(g_k)$, which is turned into $\E[(X_k - g'_k)^+] \prod_{j < i} F_{j}(g'_k)$. By the choice of $g'_k$ being the largest value in $S$ that is smaller than $g_k$, we have $F_{j}(g'_k) = F_{j}(g_k)$ for all $j$, and $\E[(X_k - g'_k)^+] \ge \E[(X_k - g_k)^+]$ as $g'_k < g_k$. Therefore, the revenue weakly increases by modifying $g_k$ to $g'_k$. 

    For the second statement regarding the uniform index vector, we apply the same argument starting from an index $g\notin S$ and decreasing it to $g'=\max\{g'\in S: g'<g\}$.
\end{proof}

Furthermore, once the set of indices is specified, computing the optimal tie-breaking rule for the buyer (i.e. the one that maximizes the seller's revenue) is simple.

\begin{lemma} \label{lem:gitins-index-tiebreak-mandatory}
    Fix any $\pbp$ instance $\inst$ and any prices $\vp$ corresponding to indices $\vg$. The tie-breaking order that maximizes the seller's revenue is to inspect all alternatives $i$ of the same index $g_i = g$ in decreasing order of $\expectt{X_i\mid X_i > g}$.
\end{lemma}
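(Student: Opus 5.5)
The plan is to reduce the question to an exchange argument on a block of tied alternatives. Under the revenue formula of \Cref{claim:revenue-exp-mandatory}, reordering alternatives within a block of equal index $g$ does not affect any alternative outside the block. An alternative before the block sees none of the block. An alternative $i$ after the block, with index $g_i\le g$, has inspection probability containing $\prod_{j\in\text{block}}F_j(g_i)$, and this product does not depend on the order. So it suffices to maximize the block's own contribution. Let $\Lambda$ be the common factor $\prod_{j \text{ before block}}F_j(g)$. The block then contributes
\[\Lambda\sum_{t=1}^{m}\E[(X_{\sigma_t}-g)^+]\prod_{s<t}F_{\sigma_s}(g)\]
over the orderings $\sigma$ of the block $\{1,\dots,m\}$.

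Next, rewrite each term using $\E[(X_i-g)^+]=\Pr[X_i>g]\cdot a_i$, where $a_i:=\E[X_i-g\mid X_i>g]=\E[X_i\mid X_i>g]-g$. Write $q_i:=\Pr[X_i>g]=1-F_i(g)$. The block revenue (dropping $\Lambda$) becomes $\sum_t a_{\sigma_t} q_{\sigma_t}\prod_{s<t}(1-q_{\sigma_s})$. This is the expected value of $a$ at the first ``success'' in a sequence of independent Bernoulli trials. Alternatives with $q_i=0$ contribute nothing and do not change later probabilities, so their position is irrelevant and we may ignore them.

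Finally, apply an adjacent-swap argument. Take two consecutive positions holding $i$ then $j$, with prefix probability $P$. Their combined contribution is $P\,(a_iq_i+(1-q_i)a_jq_j)$, and every later term is unchanged by swapping them, since the factor $(1-q_i)(1-q_j)$ is symmetric. Swapping therefore changes revenue by $P\,q_iq_j(a_j-a_i)$. Hence putting $i$ first is weakly better exactly when $a_i\ge a_j$. Every ordering can be bubble-sorted into decreasing order of $a_i$ (equivalently, of $\E[X_i\mid X_i>g]$) without decreasing revenue, so that order is optimal.

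The main obstacle is the first step. We must check carefully that tie-breaking within one block leaves the inspection probabilities of all other blocks unchanged, including the edge cases of $g_i<0$ and of $q_i=0$ (where the conditional expectation is undefined and any placement is fine). Once this separability is established, the swap computation is routine.
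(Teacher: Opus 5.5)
Your proposal is correct and follows essentially the same route as the paper. The paper also uses an adjacent-swap exchange argument within a block of tied indices, and your swap change $P\,q_iq_j(a_j-a_i)$ is exactly the paper's condition $\E[(X_{k+1}-g)^+](1-F_k(g)) \ge \E[(X_k-g)^+](1-F_{k+1}(g))$. You are somewhat more explicit than the paper on three points: why blocks outside the tie are unaffected, the $q_i=0$ edge case, and the bubble-sort step that the paper summarizes as transitivity of the comparison.
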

The proof of~\Cref{lem:gitins-index-tiebreak-mandatory} follows from a simple exchange argument and is deferred to the appendix. Importantly, \Cref{lem:gitins-index-tiebreak-mandatory} implies that given any set of indices, we can efficiently compute the seller's revenue under the corresponding prices. Combined with~\Cref{lem:gittins-index-in-support-mandatory}, we obtain a simple algorithm for optimizing revenue over the family of uniform-index prices: we simply iterate over all indices in the supports of the alternatives 
and compare the corresponding revenues. For discrete random variables $X_i$, this algorithm runs in polynomial time to the size of the input. Therefore, from~\Cref{thm:4-approx-mandatory-unif-id}, we obtain the following corollary:
\begin{corollary}
\label{cor:efficient}
    For any $\pbp$ instance $\inst$ over discrete alternatives, we can efficiently compute uniform-index prices $\vp(g)$ such that $\optrev(\inst)\leq 4\cdot \rev(\inst,\vp(g))$.
\end{corollary}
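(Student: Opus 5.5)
The corollary follows by combining three earlier results: \Cref{thm:4-approx-mandatory-unif-id} for existence, \Cref{lem:gittins-index-in-support-mandatory} for a finite candidate set, and \Cref{lem:gitins-index-tiebreak-mandatory} for evaluating each candidate.

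First, by \Cref{thm:4-approx-mandatory-unif-id} there is some $g^\star\in\R$ with $\optrev(\inst)\le 4\cdot\rev(\inst,\vp(g^\star))$. By the second statement of \Cref{lem:gittins-index-in-support-mandatory}, there is some $g'\in S=\{0\}\cup\bigcup_i\support(X_i)$ with $\rev(\inst,\vp(g'))\ge\rev(\inst,\vp(g^\star))$. So it suffices to maximize $\rev(\inst,\vp(g))$ over $g\in S$. For discrete alternatives given explicitly, $|S|$ is at most one plus the total support size, which is polynomial in the input.

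Second, I will show that $\rev(\inst,\vp(g))$ can be evaluated in polynomial time for each fixed $g\in S$. Under uniform index $g$, every order is consistent with $\vp(g)$. By \Cref{lem:gitins-index-tiebreak-mandatory}, the seller-optimal order sorts the alternatives in decreasing order of $\expectt{X_i\mid X_i>g}$, placing alternatives with $\probb{X_i>g}=0$ anywhere, since they contribute price $0$. Given this order $\sigma$, the revenue equals
\[\sum_{i}\E[(X_{\sigma_i}-g)^+]\prod_{j<i}F_{\sigma_j}(g)\]
if $g\ge 0$, which always holds since $S\subseteq\R_{\ge 0}$. Each quantity $\E[(X_i-g)^+]$, $F_i(g)$ and $\expectt{X_i\mid X_i>g}$ is a finite sum over the support, so computing them, sorting, and summing takes polynomial time.

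Finally, the algorithm outputs the maximizing $g\in S$ together with $p_i=\E[(X_i-g)^+]$. Its revenue is at least $\rev(\inst,\vp(g'))\ge\optrev(\inst)/4$.

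The only delicate point is the tie-breaking convention. The seller-favorable order must actually be the one implementable by the posted prices, and this is exactly what \Cref{lem:gitins-index-tiebreak-mandatory}, together with the paper's convention that ties break in the seller's favor, provides. So no step is a real obstacle beyond invoking these lemmas correctly.
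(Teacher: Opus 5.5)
Your proof is correct and follows the paper's own argument: like the paper, you combine \Cref{thm:4-approx-mandatory-unif-id} for existence, \Cref{lem:gittins-index-in-support-mandatory} to restrict the search to the finite candidate set $S$, and \Cref{lem:gitins-index-tiebreak-mandatory} to evaluate each candidate's revenue efficiently. You also fill in a detail the paper leaves implicit: alternatives with $\probb{X_i>g}=0$ have zero price and $F_i(g)=1$, so their position in the order does not affect revenue.
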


For continuous alternatives, if they all have monotone hazard rate then~\Cref{lem:exact-opt-mandatory-special-cases} already implies that the optimal revenue is attained at $g_i=0$ (equivalently, $p_i=\mu_i$). Finally, for arbitrary continuous distributions, computational efficiency is not well-defined as it is not even clear how the distributions are described to the algorithm. In any case, we show in~\Cref{app:discetization} that a simple discretization of their supports (that reduces them to the discrete case) would only incur an additive loss. 
\begin{lemma} \label{lem:discretize-support-mandatory}
    Fix any $\pbp$ instance $\inst$ such that $X_i\le U$ for all $i$ with probability $1$. For some $\eps > 0$, consider the grid $S_\eps = \left\{0, \eps, 2\eps, \dots, \lfloor U/\eps \rfloor\eps, U\right\}$. Let $\vg$ be any index vector, and $\vg'$ be the vector via rounding up every index in $\vg$ to the nearest grid point in $S_\eps$, i.e. $g'_i = \min\{U, \lceil g_i/\eps \rceil \cdot \eps\}$. Then,
    \[\rev(\inst, \vp(\vg')) \ge \rev(\inst, \vp(\vg)) - n\eps.\]

\end{lemma}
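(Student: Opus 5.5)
The plan is a term-by-term comparison of the revenue formula in \Cref{claim:revenue-exp-mandatory}, keeping the buyer's inspection order fixed. We show that rounding indices up costs at most $\eps$ in price per alternative and never lowers any inspection probability.

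First, we preserve the order. Let $\sigma$ be the optimal consistent permutation for $\vg$, and relabel so that $g_1\ge\dots\ge g_n$. The map $g\mapsto\min\{U,\lceil g/\eps\rceil\eps\}$ is non-decreasing, so $g'_1\ge\dots\ge g'_n$. Hence the identity permutation is still consistent with $\vg'$, although new ties may appear. Since $\rev$ is a maximum over consistent permutations, it suffices to bound the revenue of $\vg'$ under this same order.

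Second, we handle the boundary cases. Alternatives with $g_i<0$ contribute nothing under $\vg$. After rounding they still sit at the end of the order and can only add non-negative terms, so they only help; assume all $g_i\ge 0$ from now on. Then $g'_i\ge g_i\ge 0$, so every alternative counted under $\vg$ is still counted under $\vg'$. If $g_i\ge U$, then $p_i(g_i)=0$ because $X_i\le U$, so that term is already zero and the cap at $U$ is harmless.

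Third, we compare terms. For each $i$ we use two monotonicity facts.
\begin{itemize}
\item Since $F_j$ is non-decreasing and $g'_i\ge g_i$, the inspection probability satisfies $\lambda'_i:=\prod_{j<i}F_j(g'_i)\ge\prod_{j<i}F_j(g_i)=:\lambda_i$.
\item Since $g\mapsto\E[(X_i-g)^+]$ is $1$-Lipschitz and $0\le g'_i-g_i\le\eps$, we get $p_i(g'_i)\ge p_i(g_i)-\eps$.
\end{itemize}
If $p_i(g_i)\ge\eps$, then $p_i(g'_i)\lambda'_i\ge(p_i(g_i)-\eps)\lambda_i\ge p_i(g_i)\lambda_i-\eps$. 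If $p_i(g_i)<\eps$, then $p_i(g'_i)\lambda'_i\ge 0> p_i(g_i)\lambda_i-\eps$, using $\lambda_i\le 1$. Summing over the $n$ alternatives gives $\rev(\inst,\vp(\vg'))\ge\rev(\inst,\vp(\vg))-n\eps$.

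The only delicate point is the first step. Rounding can create ties or collapse distinct indices, so we need the fact that ties are broken in the seller's favour: this lets us evaluate $\vg'$ under the original order rather than whatever order the buyer might otherwise choose. Beyond that, the argument is just these two monotonicity facts and the elementary case split on whether $p_i(g_i)\ge\eps$.
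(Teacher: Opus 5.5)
Your proposal is correct and follows essentially the same argument as the paper: you keep the original inspection order, which stays consistent because rounding is monotone, and compare term by term using $p_i(g'_i)\ge p_i(g_i)-\eps$ and $\prod_{j<i}F_j(g'_i)\ge\prod_{j<i}F_j(g_i)$. Your case split on whether $p_i(g_i)\ge\eps$, and your treatment of negative indices and of the cap at $U$, only make explicit details that the paper's proof leaves implicit.
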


\subsection{Uniform Prices}
Finally, in light of the strong performance of uniform-index pricing, it is natural to ask whether uniform (i.e., anonymous) prices of the form $p_i=p$ might enjoy similar guarantees. Unfortunately, this is not the case. As a first obstruction, the optimal uniform price need not coincide with any value in the support of the alternatives. For example, consider $n=3$ identical alternatives taking value $X_i\in \{1,9\}$ with probabilities $0.9$ and $0.1$ respectively. From~\Cref{lem:exact-opt-mandatory-special-cases} we know that the optimal indices (and therefore also the optimal prices) will be uniform. A simple computation implies that indeed, the maximum revenue is uniquely attained by $p_1=p_2=p_3=0.8$ which does not lie in the support of the alternatives. Therefore, even if uniform prices were to achieve optimal revenue, it would not be clear how to optimize over the price $p$.

More importantly, it turns out that uniform prices cannot achieve \textit{any} constant approximation with respect to the optimal revenue, even if the distributions of the alternatives are Bernoulli.

\begin{theorem}\label{thm:uniform-cost-gap}
    For any number of alternatives $n\geq 1$, there exists a $\pbp$ instance $\inst$ over $n$ alternatives with $\optrev(\inst)=\Omega(\log n)\cdot \max_{p\geq 0}\rev(\inst, (p,\dots, p))$.
\end{theorem}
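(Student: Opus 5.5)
We construct an instance of $n$ two-point alternatives at geometrically spread scales. Alternative $i$ takes the value $H_i$ with a small probability $q_i$ and the value $0$ otherwise. The natural template is $X_i = 2^i$ with probability $2^{-i}$, so that every alternative has mean $1$; the low value is $0$.

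\textbf{Lower bound on $\optrev$.} We set all indices to $0$, i.e.\ $p_i=\mu_i=1$. Since the low realization $0$ does not exceed index $0$, the buyer keeps inspecting while everything realizes to $0$. With ties broken in the seller's favor, the revenue is $\sum_i \prod_{j<i}(1-2^{-j})$. This is $\Theta(n)$, because $\prod_{j\ge1}(1-2^{-j})$ is a positive constant.

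We could instead use a stronger template with $\mu_i=1$ but $q_i = 1/i$-type probabilities so that stopping becomes likely. In either case we aim for $\optrev = \Theta(n)$ or $\Theta(\sum_i \mu_i)$, which is also an upper bound.

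\textbf{Upper bound on uniform prices.} Fix a uniform price $p$. Alternative $i$ is acceptable only when $p\le \mu_i=1$, and its index is $g_i = H_i - p/q_i = 2^i - p\,2^i = 2^i(1-p)$. For uniform $p$ the buyer therefore inspects in decreasing order of $i$, and this is consistent with the indices. Every low realization is $0$, so the buyer never stops early, and the revenue is $np$ times a constant. That makes uniform pricing optimal here, so this template fails and we need nonzero low values.

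\textbf{Fixing the construction.} We take $X_i\in\{a_i, H_i\}$ with the means deliberately heterogeneous, e.g.\ $\mu_i \approx 1/i$ or with scales chosen so that a single price $p$ is high for some alternatives and low for others. The mechanism we want is the following. Under price $p$, alternatives with $\mu_i<p$ drop out, while alternatives with $\mu_i \gg p$ get high indices. These high-index alternatives are inspected first and often realize above the later indices, which stops the search. So only alternatives with $\mu_i\approx p$ within a constant factor contribute, and there are $O(1)$ of them per dyadic scale.

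Concretely, we plan to give the $n$ alternatives means $\mu_i = 1/i$ (a harmonic profile), so that $\sum_i\mu_i = \Theta(\log n)$. The supports are chosen so that $\optrev=\Theta(\log n)$ under $p_i=\mu_i$ with indices $0$: each alternative is small with probability close to $1$, the low values lie below $0$-index continuation, and so nearly all alternatives get inspected. For any uniform $p$, alternatives with $\mu_i< p$ (i.e.\ $i>1/p$) are excluded. Among the alternatives with $i\le 1/p$, each charges $p$, and we need to show that the buyer stops after $O(1)$ inspections in expectation. Then the revenue is $O(1)\cdot p\cdot$ (expected number of inspections), which is also at most $\max_i \mu_i=O(1)$.

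\textbf{Main obstacle.} The hard step is designing the low values $a_i$ so that both requirements hold. Under indices $0$ the buyer must not stop, which forces $a_i \le 0$ and hence $a_i=0$. Under a uniform $p$, the cheap high-mean alternatives must get large indices $g_i$ and stop the search quickly. These requirements pull against each other. The plan is to first try $a_i=0$ with $H_i = i\cdot 2^{i}$ and $q_i=2^{-i}/i^2$, then compute $g_i(p)=H_i-p/q_i$. We would check that for each $p$ only $O(1/p)$ indices are positive but that the resulting sum is bounded. If that fails, we fall back on the alternative counterexample: make the optimal-index revenue grow like $\sum_i 1/i$ through $\lambda_i\approx 1$, while any uniform price captures only one dyadic scale of means.
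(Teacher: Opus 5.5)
\textbf{There is a genuine gap.} The proposal does not yet prove the statement, and the step you call the main obstacle rests on a requirement that is both unnecessary and false.

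You say that for a uniform price $p$ you must show the buyer stops after $O(1)$ inspections in expectation, and that this conflicts with the buyer not stopping under index-$0$ pricing. Test this on your own candidate: $X_i\in\{0,\,i2^i\}$ with $\Pr[X_i>0]=q_i=2^{-i}/i^2$, so $\mu_i=1/i$.

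\begin{itemize}
\item Under a uniform price $p$, every alternative with $\mu_i\ge p$ has $g_i\ge 0$.
\item A realization of $0$ never halts the search.
\item $\prod_i(1-q_i)\ge 1-\sum_i 2^{-i}/i^2>0.4$.
\end{itemize}

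So the buyer inspects all $\lfloor 1/p\rfloor$ eligible alternatives with constant probability. That is $\Theta(1/p)$ inspections in expectation, not $O(1)$. If you followed your plan literally, you would wrongly conclude that the construction fails.

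The missing observation is a counting bound that needs no early stopping. Since $X_i\ge 0$, for $g\le 0$ we have $\E[(X_i-g)^+]=\mu_i-g$. Hence $p>\mu_i$ forces $g_i<0$, and alternative $i$ is never inspected. Every other alternative is inspected at most once and pays $p$. Therefore
\[
\rev(p,\dots,p)\;\le\; p\cdot|\{i:\mu_i\ge p\}|\;\le\; p\lfloor 1/p\rfloor\;\le\; 1
\]
for the harmonic profile. There is no tension to resolve: a single price is capped by the smallest eligible mean, which is the classical equal-revenue phenomenon.

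With this bound, your candidate does work. Pricing $p_i=\mu_i$ sets every index to $0$. In whatever order the buyer inspects, alternative $i$ is reached with probability at least $\prod_j(1-q_j)>0.4$. This gives $\optrev\ge 0.4\sum_{i=1}^n 1/i=\Omega(\log n)$.

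However, your lower-bound paragraph only asserts that the supports are ``chosen so that'' this holds. What is actually needed is $\sum_i q_i=O(1)$; each $q_i$ being individually small is not enough. The proposal also ends with a fallback plan rather than a committed, verified instance.

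For comparison, the paper uses a common $q\le 1/(2n)$ and high values $A_i=(1-(1-q)^i)^{-1}$, so again $\mu_i=qA_i\approx 1/i$. It computes the uniform-price revenue exactly as $\max_k A_k(1-(1-q)^k)=1$, and bounds $\optrev\ge\frac12\sum_{i\le n}1/i$ using $(1-q)^{i-1}\ge 1/2$. The counting bound above would work equally well for the paper's instance.
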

\noindent We prove~\Cref{thm:uniform-cost-gap} in~\Cref{app:unif-prices-gap}.

%% file: mandatory-extensions.tex
\section{Cost-Bearing Sellers and Combinatorial Buyers} \label{sec:mandatory-ext}

In this section, we study several generalizations of $\pbp$. We begin with a modest extension that incorporates (i) an outside option $y\ge 0$, under which the buyer obtains utility $y$ whenever they select no alternative, and (ii) sellers that need to exert some effort $c_i\ge 0$ in order to reveal the value of the $i$-th alternative to the buyer. Our baseline model from~\Cref{sec:preliminaries} is recovered by setting $y=0$ and $c_i=0$ for all $i\in[n]$. Leveraging the robustness of our proof of \Cref{thm:4-approx-mandatory-unif-id}, we show that uniform-index prices remain $4$-approximately optimal and efficiently optimizable in this extension.

We then turn to the more interesting regime of \emph{combinatorial buyers} who may select more than one alternative. We model feasibility via a matroid constraint over the set of alternatives. We show that, for several natural matroid families, one can still efficiently obtain a $4$-approximation to the optimal revenue, even though uniform-index prices are no longer constant-competitive in general. Finally, we note that all our positive results would still hold under outside options and seller costs in the matroid setting; the proofs are straightforward and omitted for brevity.

\subsection{Seller Costs and Outside Options}

An instance $\inst$ is now specified by the $n$ alternatives $X_i$ and their distributions, a cost vector $\vc\in\R_+^n$, and an outside option $y\ge 0$. This information is common knowledge to the buyer and the seller. As before, the seller posts a price vector $\vp\in\R_+^n$, and the buyer sequentially inspects alternatives. To inspect alternative $i$, the buyer pays $p_i$ to the seller, who incurs cost $c_i$ to reveal its realization (equivalently, the seller's net revenue from inspecting $i$ is $p_i-c_i$). At any point, the buyer may stop and either (i) select one of the inspected alternatives $i$, obtaining reward $x_i$, or (ii) take the outside option and obtain reward $y$.

Both $y$ and $\vc$ affect the seller's optimal prices. If $y$ is large, the seller cannot price too aggressively: overpriced alternatives will never be inspected. Formally, treating the outside option as an additional alternative with deterministic value $y$ and price $0$ (and hence index $g=y$), it follows that the buyer will never inspect any alternative $i$ with $g_i<y$. Conversely, the seller should never set $p_i<c_i$, since inspecting $i$ would yield negative net revenue. Thus, if $c_i$ is large, the seller is forced to set a high price for $i$, which may discourage inspection and foreclose otherwise attainable revenue.

Concretely, given a set of prices $\vp\in\R^n_+$, let $\vg$ be the corresponding indices and re-label the alternatives so that $g_1\ge \dots \ge g_n$; as usual, we break ties in favor of the seller. Then, the buyer inspects the alternatives in that order until the maximum of the best realization and the outside option dominates the next index, at which point they halt. Therefore, the corresponding expression for the revenue of the seller is given by the following generalization of~\Cref{claim:revenue-exp-mandatory}:
\[\rev(\inst,\vp) := \sum_{i: g_i\ge y}\left(\E[(X_{i} - g_{i})^+] - c_{i}\right) \cdot\prod_{j < i} F_{j}(g_{i}).\]

As already noted, setting $p_i<c_i$ is never beneficial for the seller and can only decrease revenue, since inspecting such an alternative yields negative net proceeds. Consequently, a \emph{pure} uniform-index pricing rule may fail to approximate the optimal revenue in this extension: a common index may induce inspection of alternatives that generate a net loss for the seller. Nevertheless, we show that a simple post-processing step suffices. Given a uniform-index price vector, we drop all loss-making alternatives by setting their prices to $\infty$ (equivalently, making them unavailable). The resulting pricing rule still achieves a $4$-approximation to the optimal revenue. The proof follows the same blueprint as \Cref{thm:4-approx-mandatory-unif-id} and is included in \Cref{app:seller_cost_proof} for completeness.

\begin{theorem}
    \label{lem:4-approx-mandatory-unif-id-sellercost}
     For any $\pbp$ instance $\inst$ with seller costs $\vc$ and outside option $y$, we can efficiently compute prices $\vp\in\R^n_+$ with $\vg(\vp)\in\{g,-\infty\}^n$ such that $\optrev(\inst)\leq 4\cdot \rev(\inst, \vp)$.
\end{theorem}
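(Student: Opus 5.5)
Take an optimal pricing $\vp$ with indices $\vg$ and mirror the proof of \Cref{thm:4-approx-mandatory-unif-id}. Without loss of generality, every alternative with $g_i<y$ or with $\E[(X_i-g_i)^+]<c_i$ is removed by setting its price to $\infty$. Removing an alternative only removes its (non-positive) term and raises the continuation probabilities of later alternatives, since it deletes a factor $F_j(g_i)\le 1$. Hence this step cannot hurt, provided every remaining term is nonnegative. After relabeling, we have $g_1\ge\dots\ge g_n\ge y$, each net margin $m_i(g_i):=\E[(X_i-g_i)^+]-c_i$ is nonnegative, and
\[\optrev=\sum_i m_i(g_i)\prod_{j<i}F_j(g_i).\]
As before, define $\lambda_i=\prod_{j<i}F_j(g_i)$, which is non-increasing with $\lambda_1=1$, and pick $k$ with $\lambda_k\ge 1/2>\lambda_{k+1}$.

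The tail is handled exactly as in \Cref{thm:4-approx-mandatory-unif-id}. Its contribution is at most $\lambda_{k+1}$ times the revenue of the pricing that removes alternatives $1,\dots,k$ and keeps the indices of the rest. That pricing is feasible, its indices are still $\ge y$, and its terms are nonnegative, so its revenue is at most $\optrev$. This gives $\rev_{k+1}^n\le \optrev/2$.

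For the head, use that $m_i(\cdot)$ is non-increasing in the index. Since $g_i\ge g_k$ for $i\le k$,
\[\rev_1^k\le\sum_{i\le k}m_i(g_i)\le\sum_{i\le k}m_i(g_k)\le 2\lambda_k\sum_{i\le k}m_i(g_k).\]
Each $m_i(g_k)\ge m_i(g_i)\ge0$, so alternatives $1,\dots,k$ remain profitable at the common index $g_k$. Now compare with the post-processed uniform pricing: use index $g_k$ for every $i$ with $m_i(g_k)\ge0$ and price $\infty$ (index $-\infty$) for the rest. Since $g_k\ge y$, the buyer participates at this index. Under this pricing, an alternative $i\le k$ in the surviving set $T$ is inspected with probability $\prod_{j\in T,\, j\text{ before } i}F_j(g_k)$. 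Ordering $1,\dots,k$ first (allowed, since ties are broken in the seller's favor), this probability is $\ge\prod_{j<k}F_j(g_k)=\lambda_k$. The surviving alternatives beyond $k$ contribute nonnegatively. Therefore $\sum_{i\le k}m_i(g_k)\lambda_k\le\rev(\vp_{\text{post}}(g_k))$, and combining with the tail gives $\optrev\le 4\,\rev(\vp_{\text{post}}(g_k))$.

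For efficiency, the analogue of \Cref{lem:gittins-index-in-support-mandatory} should hold: we only need to try $g\in\{y\}\cup\bigcup_i\support(X_i)$ with $g\ge y$. The reason is that lowering a uniform index $g$ to the next point of this set leaves every $F_j(g)$ unchanged and weakly raises every margin. It also never turns a profitable alternative unprofitable, while the filtered set can only grow with nonnegative terms. For each candidate $g$, compute the filtered set, order ties by \Cref{lem:gitins-index-tiebreak-mandatory}, evaluate the revenue, and output the best pricing. The main point needing care is the head step: the uniform pricing must not be harmed by loss-making alternatives. This is exactly why post-processing is needed, and the monotonicity $m_i(g_k)\ge m_i(g_i)\ge 0$ guarantees that none of $1,\dots,k$ is discarded.
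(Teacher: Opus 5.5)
Your approximation argument is correct and is the paper's: the same removal of loss-making alternatives, the same split at the last $k$ with $\lambda_k\ge 1/2$, the same tail and head bounds, and the same post-processing. It is slightly more careful than the paper's in two places. You check explicitly that alternatives $1,\dots,k$ survive the filtering at $g_k$. You also use $\{y\}\cup\bigcup_i\support(X_i)$ as the candidate set rather than $\{0\}\cup\bigcup_i\support(X_i)$. When $y>0$, dropping a uniform index to the next support point can push it below $y$: for $X_1\equiv 1$, $c_1=0$, $y=1/2$, the optimal index is $1/2$, which is not in the paper's set.

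The gap is in the efficient-computation step. You evaluate each candidate $g$ with ties ordered ``by \Cref{lem:gitins-index-tiebreak-mandatory}'', i.e.\ by decreasing $\E[X_i\mid X_i>g]$. That rule maximizes revenue only when there are no costs. Redo the exchange argument with net margins: swapping adjacent alternatives $k,k+1$ of common index $g$ changes revenue by $\prod_{j<k}F_j(g)\,\bigl[m_{k+1}(g)(1-F_k(g))-m_k(g)(1-F_{k+1}(g))\bigr]$. The correct order is therefore decreasing $m_i(g)/(1-F_i(g))$, which is the rule the paper uses.

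This is not cosmetic. Your bound $\optrev\le 4\,\rev(\vp_{\text{post}}(g_k))$ is about the seller-favorable revenue. Your algorithm, however, outputs the $g$ that maximizes the \emph{computed} value. So the chain needs the computed value at $g_k$ to be at least that true revenue, and with the cost-free rule it can be far below it.

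Here is a concrete failure. Take $y=0$; $X_A\equiv H>2$ with $c_A=H-\epsilon$; and $X_C$ uniform on $\{1,2\}$ with $c_C=\tfrac12-2\epsilon$.
\begin{itemize}
\item At $g=0$, your rule places $A$ first: its margin is $\epsilon$ and its realization always exceeds the index, so it blocks $C$. The computed revenue is $\epsilon$, yet placing $C$ first yields $1+2\epsilon$.
\item At $g=1$, only $C$ survives the filter, and it yields $2\epsilon$.
\item At $g\in\{2,H\}$, nothing survives.
\end{itemize}
Your algorithm therefore outputs $g=1$ with revenue $2\epsilon$, while $\optrev\ge 1+2\epsilon$. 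Replacing your tie-breaking rule with the margin-ratio rule closes the gap, as would any other way of computing the best order for each candidate $g$.
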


\subsection{Matroid Selection}

We now consider a combinatorial generalization of our model in which the buyer may select \emph{multiple alternatives}, obtaining their additive value as a reward. In particular, the $n$ alternatives correspond to the elements of an underlying \textit{matroid} $\matroid = ([n],\independent).$\footnote{That is, $\independent$ is a collection of subsets of $[n]$ such that (i) $\emptyset\in\independent$; (ii) for all $S,S'\subseteq [n]$ with $S\subseteq S'$ we have $S'\in\independent\Rightarrow S\in\independent$; and (iii) for all $S,S'\in \independent$ with $|S|<|S'|$, there exists $i\in S'\setminus S$ such that $S\cup\{i\}\in\independent$.} The seller posts prices $\vp\in\R_+^n$, and the buyer sequentially inspects the alternatives. At any point, the buyer may stop and choose any subset $S$ of the inspected alternatives that is independent ($S\in\independent$), obtaining reward $\sum_{i\in S} X_i$. The single-selection setting studied so far is recovered as the special case of a $1$-uniform matroid, i.e.,
$\independent=\{S\subseteq[n]: |S|\le 1\}$.

\citet{S18} introduced the corresponding buyer problem and showed that Weitzman's indices continue to characterize the buyer's optimal behavior. Specifically, for each alternative $i$, define its index $g_i$ as before and relabel the alternatives so that $g_1\ge \dots \ge g_n$ (breaking ties in favor of the seller). For each $i$, let
\[
S_i \;:=\; \{\, j<i : \text{$j$ was inspected and } X_j > g_i \,\}
\]
denote the (random) set of earlier inspected alternatives whose realized value exceeds the threshold $g_i$. The utility-maximizing buyer processes alternatives in this order and inspects alternative $i$ \emph{only if} $i$ is \emph{not spanned} by $S_i$ in $\matroid$. After iterating through all elements, the buyer selects a maximum-weight independent set among the inspected alternatives (with weights given by their realizations). As before, the buyer may opt out and receive a utility of $0$; consequently, alternatives with $g_i<0$ are never inspected. Therefore, it follows that the seller's expected revenue under an index vector $\vg$ in the matroid-selection setting can be written as
\[
\rev(\inst,\vp(\vg))
:= \sum_{i:\, g_i\ge 0}
\expectt{(X_i-g_i)^+}\cdot
\probb{i \notin \spn(S_i)}.
\]

The inspection probabilities in the above expression behave quite differently than in the single-selection case (\Cref{claim:revenue-exp-mandatory}), where the probability of inspecting alternative $i$ depends only on $g_i$ and the distributions of earlier alternatives $j<i$. Under a general matroid, however, it also depends on the \emph{identity} of element $i$---more precisely, on how $i$ relates to earlier elements through the matroid structure. As a result, the inspection probabilities need not vary monotonically with $i$. This loss of monotonicity breaks a key ingredient in the proof of the $4$-approximate optimality of uniform-index prices in the single-selection setting (\Cref{thm:4-approx-mandatory-unif-id}). 

On the positive side, we show that for some special families of matroids, one can still compute prices that can approximate the optimal revenue up to a constant factor. We begin with $k$-\textit{uniform matroids}, which model the setting where the buyer may ultimately select any subset $S$ of at most $k$ inspected alternatives. In this case, the probability that an alternative is inspected remains monotone with respect to the index order and to the index itself. Exploiting this monotonicity, and adapting the arguments from \Cref{thm:4-approx-mandatory-unif-id}, we prove the following (see~\Cref{app:uniform-matroid-proof} for the details):

\begin{theorem}
    \label{lem:4-approx-mandatory-unif-id-unifmatroid}
     For any $\pbp$ instance $\inst$ over a uniform matroid constraint $\matroid$, we can efficiently compute a uniform-index pricing $\vp(g)$ such that $\optrev(\inst)\leq 4\cdot \rev(\inst, \vp(g))$.
\end{theorem}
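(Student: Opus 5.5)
We will adapt the proof of \Cref{thm:4-approx-mandatory-unif-id} to the $k$-uniform matroid. In this case $i\notin\spn(S_i)$ holds iff $|S_i|<k$, i.e. fewer than $k$ of the earlier alternatives $j<i$ (all inspected up to that point, when $i$ is still reachable) realized above $g_i$. Fix an optimal index vector $\vg$, relabeled so that $g_1\ge\dots\ge g_n\ge 0$, and write
\[
\lambda_i := \probb{\text{fewer than } k \text{ of } X_1,\dots,X_{i-1} \text{ exceed } g_i}.
\]
We first check that this event coincides with the event that $i$ is inspected. If fewer than $k$ earlier values exceed $g_i$, then for every $j<i$ fewer than $k$ values before $j$ exceed $g_j\ge g_i$, so every earlier element was inspected.

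The first step is monotonicity: $\lambda_i$ is non-increasing in $i$. Moving from $i$ to $i+1$ both adds one more variable and lowers the threshold from $g_i$ to $g_{i+1}$. Each change can only increase the count of exceedances, so the probability that the count stays below $k$ can only drop. We set $\lambda_{n+1}=0$ and choose $k^*$ with $\lambda_{k^*}\ge 1/2>\lambda_{k^*+1}$.

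The tail term is bounded as follows. For $i>k^*$, we need $\probb{|S_i|<k}\le \lambda_{k^*+1}\cdot \probb{\text{fewer than } k \text{ of } X_{k^*+1},\dots,X_{i-1} \text{ exceed } g_i}$. Here the single-selection product factorization is lost, and this is the step I expect to be the main obstacle. The bound does not follow from simple independence. The event ``total count $<k$'' implies both ``the count among the first $k^*$ is $<k$'' and ``the count among the later ones is $<k$'', and these two events are independent. Their product is therefore an upper bound. Moreover, ``fewer than $k$ of $X_1,\dots,X_{k^*}$ exceed $g_i$'' has probability at most $\lambda_{k^*+1}$ because $g_i\le g_{k^*+1}$. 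So $\rev_{k^*+1}^n\le \frac12\sum_{i>k^*}\E[(X_i-g_i)^+]\cdot\probb{\text{fewer than }k\text{ of }X_{k^*+1..i-1}\text{ exceed }g_i}$. The right-hand sum is exactly the revenue of the pricing that removes alternatives $1,\dots,k^*$ (prices $\infty$) and keeps the indices $g_i$ for the rest, so it is at most $\frac12\optrev$.

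For the head, we bound $\rev_1^{k^*}\le\sum_{i\le k^*}\E[(X_i-g_i)^+]\le\sum_{i\le k^*}\E[(X_i-g_{k^*})^+]\le 2\lambda_{k^*}\sum_{i\le k^*}\E[(X_i-g_{k^*})^+]$. Under the uniform index $g_{k^*}$, with the same order, alternative $i\le k^*$ is inspected iff fewer than $k$ of $X_1,\dots,X_{i-1}$ exceed $g_{k^*}$. This probability is at least $\lambda_{k^*}$, since it involves a subset of the variables at the same threshold. Hence $\rev_1^{k^*}\le 2\rev(\vp(g_{k^*}))$. Combining the two bounds gives $\optrev\le\frac12\optrev+2\rev(\vp(g_{k^*}))$, i.e. $\optrev\le 4\rev(\vp(g_{k^*}))$.

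For efficiency we mirror \Cref{lem:gittins-index-in-support-mandatory}. Lowering a uniform index $g\notin S$ to the next point of $S$ leaves every $F_j(g)$, and hence every inspection probability, unchanged while raising the prices. So it suffices to enumerate $g\in S$. For each candidate we compute the revenue by a dynamic program over the count of exceedances, using the tie-breaking of \Cref{lem:gitins-index-tiebreak-mandatory} (sort by $\E[X_i\mid X_i>g]$ decreasing; by an exchange argument this rule still applies here, since every tied alternative faces the same threshold). This takes polynomial time for discrete distributions.
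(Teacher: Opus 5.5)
Your proposal is correct and follows the same route as the paper: the same $\lambda_i$, the same split at the last index with $\lambda \ge 1/2$, the same head and tail bounds, and the same enumeration over supports with $\E[X_i\mid X_i>g]$ tie-breaking for efficiency. The paper only says the single-selection argument carries over and omits the details; your observation that $\{\text{total count}<k\}$ is contained in the intersection of two independent events, each with count $<k$, is exactly what replaces the product factorization in the tail bound, so your write-up is more complete than the paper's.
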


We then shift our attention to \emph{partition matroids}. The ground set of alternatives $[n]$ is partitioned into $m$ components $C_1,\dots,C_m$, and the buyer may select any set $S$ satisfying $|S\cap C_j|\le k_j$ for all components $j\in [m]$. Observe that, for any fixed price vector $\vp$, the buyer's utility maximization problem \emph{decomposes} across components: the buyer can optimize independently within each $C_j$, treating it as a $k_j$-uniform matroid. Since \Cref{lem:4-approx-mandatory-unif-id-unifmatroid} enables us to (approximately) optimize revenue extraction from each component, the following result follows immediately:

\begin{corollary}
    \label{corr:4-approx-mandatory-unif-id-partitionmatroid}
    For any $\pbp$ instance $\inst$ over a partition matroid constraint $\matroid$, we can efficiently compute a price vector $\vp$ such that $\optrev(\inst)\leq 4\cdot \rev(\inst,\vp)$.
\end{corollary}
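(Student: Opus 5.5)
The plan is to reduce the partition matroid to independent uniform-matroid subinstances, one per component, and then apply \Cref{lem:4-approx-mandatory-unif-id-unifmatroid} to each separately.

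\textbf{Step 1: revenue is additive across components.} Fix any price vector $\vp$ with indices $\vg$. The rank function of a partition matroid is additive across components, so for $i\in C_j$ we have $i\in\spn(S_i)$ if and only if $|S_i\cap C_j|\ge k_j$. Hence $\probb{i\notin\spn(S_i)}$ depends only on the alternatives in $C_j$ ordered before $i$ and on their realizations. Consequently,
\[
\rev(\inst,\vp)=\sum_{j=1}^m \rev(\inst_j,\vp|_{C_j}),
\]
where $\inst_j$ is the $k_j$-uniform matroid instance on the alternatives of $C_j$.

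\textbf{Step 2: tie-breaking is compatible with the decomposition.} Ties between alternatives in different components do not affect any inspection probability. Ties inside a component are broken in the seller's favor exactly as in $\inst_j$. So the seller-optimal tie-breaking in $\inst$ restricts to the seller-optimal tie-breaking in each $\inst_j$.

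\textbf{Step 3: the optimum decomposes as well.} From Steps 1 and 2,
\[
\optrev(\inst)=\max_{\vp}\sum_j \rev(\inst_j,\vp|_{C_j})\le\sum_j\optrev(\inst_j).
\]
The reverse inequality also holds, because the prices in different components can be chosen independently, but only the upper bound is needed.

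\textbf{Step 4: per-component uniform indices.} For each component $j$, invoke \Cref{lem:4-approx-mandatory-unif-id-unifmatroid} to efficiently compute an index $g^{(j)}$ with $\optrev(\inst_j)\le 4\,\rev(\inst_j,\vp(g^{(j)}))$. Define $\vp$ by setting $p_i=\expectt{(X_i-g^{(j)})^+}$ for $i\in C_j$. The resulting pricing is uniform within each component but generally differs across components. By Step 1,
\[
\rev(\inst,\vp)=\sum_j\rev(\inst_j,\vp(g^{(j)}))\ge\frac14\sum_j\optrev(\inst_j)\ge\frac14\optrev(\inst).
\]
The whole procedure takes $m$ calls to the efficient routine from \Cref{lem:4-approx-mandatory-unif-id-unifmatroid}, so it runs in polynomial time.

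The main obstacle is Step 1. One must verify that interleaving the components in a single global index order does not couple their inspection events. This holds because the spanning test for $i\in C_j$ involves only $S_i\cap C_j$. In addition, the event that $j'$ was inspected and $X_{j'}>g_i$, for earlier $j'$ in the same component, depends only on alternatives of that component, by induction along the order. Once this independence is established, the remaining steps are bookkeeping.
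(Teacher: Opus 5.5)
Your proposal is correct and follows the paper's own route. The paper argues that for fixed prices the buyer's problem, and hence the revenue, decomposes across the components, each behaving as a $k_j$-uniform matroid instance; it then applies the uniform-matroid theorem to each component, and your Steps 1--3 spell out the independence and tie-breaking compatibility that the paper asserts in a single line.
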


Note that the indices induced by the above approach need not be \emph{globally} uniform: while each component of the partition matroid admits an (approximately) optimal uniform-index price, the resulting indices may differ across components. This naturally raises the question of whether a \emph{single} uniform-index pricing can still guarantee a constant-factor approximation to the optimal revenue for general matroids. We answer this question in the negative even for partition matroids. In particular, we show that the approximation gap between uniform-index and optimal pricings can grow linearly with the number of components.

\begin{theorem} \label{lem:k-gap-partition-matroid-unif-id}
    For any $m\ge 1$, there exists a $\pbp$ instance $\inst$ over a partition matroid constraint with $m$ components, $n=\Theta(m \cdot 2^{m^2})$ elements and $k_j=1$ for all $j\in [m]$, such that
    \[\optrev(\inst) \ge \Omega(m) \cdot \max_{g \in \R}\rev(\inst, \vp(g)).\]
\end{theorem}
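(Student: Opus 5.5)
The plan is a direct construction. Because the buyer's problem decomposes across components of a partition matroid (as noted before \Cref{corr:4-approx-mandatory-unif-id-partitionmatroid}), for any index vector the seller's revenue is a sum of per-component revenues. Write $R_j(\cdot)$ for the revenue of component $C_j$ as a function of the indices on $C_j$. Then
\[
\optrev(\inst)\ \ge\ \sum_{j=1}^m \max_{g} R_j(g),
\]
since each component may be given its own uniform index. A single uniform index instead earns only
\[
\max_g \sum_{j=1}^m R_j(g).
\]
The goal is therefore to build $m$ single-selection components ($k_j=1$), each of $N=2^{m^2}$ i.i.d.\ alternatives. Each function $R_j$ should peak at value $\approx 1$ inside a ``window'' $W_j$ of indices. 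The windows should be geometrically separated, and $\sum_j R_j(g)=O(1)$ must hold for every $g$. This gives the $\Omega(m)$ gap.

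Within a component, the revenue of a uniform index $g$ for i.i.d.\ alternatives follows from \Cref{claim:revenue-exp-mandatory}:
\[
R_j(g)=\E[(X-g)^+]\sum_{i<N}F(g)^i=\E[(X-g)^+]\cdot\frac{1-F(g)^N}{1-F(g)}\approx \E[X-g\mid X>g]\cdot \min\{1,\,N\Pr[X>g]\}.
\]
This is the quantity to shape. For $g$ above the window, the factor $\E[(X-g)^+]$ should be tiny. For $g$ below the window, the buyer inspects one alternative and halts immediately, paying only $\E[(X-g)^+]$. So the distribution for $C_j$ needs almost all of its mass at a point that is small compared to the window scale, while the conditional upside above each $g\in W_j$ stays about $1/(N\Pr[X>g])$.

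My first attempt would use distributions with geometrically spaced atoms: component $j$ has atoms at scales $B^{j}$ and probabilities $\approx B^{-j}/N$, together with a base atom. I would then verify three regimes for a fixed $g\approx B^k$. First, components $j\neq k$ with windows above $g$ contribute $O(B^{-(j-k)})$. Second, components with windows below $g$ contribute $O(B^{-(k-j)})$. Third, component $k$ contributes $O(1)$. Summing the geometric series gives $O(1)$. The size $N=2^{m^2}$ supplies enough alternatives that the amplification factor $\min\{1,N\Pr[X>g]\}$ stays in the linear regime $N\Pr[X>g]\le 1$ across all $m$ scales, each separated by a factor $2^{m}$. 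This is how I would read the stated $n=\Theta(m\cdot 2^{m^2})$.

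The main obstacle is the ``$g$ below the window'' regime. There a component with a high base atom $a_j>g$ pays $\E[X]-g\approx a_j$, which can easily be as large as that component's peak revenue. Two-point distributions fail for exactly this reason, because making $a_j$ small also makes $R_j$ large for all small $g$. I would handle it by giving each component a multi-point (truncated equal-revenue-like) distribution whose mass sits near $0$, with the upside spread so that $\E[X-g\mid X>g]\cdot N\Pr[X>g]$ is flat at $\approx1$ only on $W_j$ and decays geometrically outside it. If exact flatness proves awkward, the fallback is to weight the components unequally and only need $\sum_j \max R_j \ge \Omega(m)\cdot\max_g\sum_j R_j(g)$. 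Finally, \Cref{lem:gittins-index-in-support-mandatory} lets me check only $g$ in the union of supports, which reduces the verification to finitely many cases per scale.
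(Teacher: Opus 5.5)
Your reduction is correct and is also how the paper begins: $\optrev\ge\sum_j\max_g R_j(g)$, a single uniform index earns $\max_g\sum_j R_j(g)$, and $R_j(g)=\E[X_j-g\mid X_j>g]\,(1-F_j(g)^N)$. The gap is the construction itself, and the guidance you give for it points the wrong way. If $N\Pr[X_j>g]\le 1$, then $1-F_j(g)^N\ge(1-1/e)\,N\Pr[X_j>g]$, so $R_j(g)\ge(1-1/e)\,N\,\E[(X_j-g)^+]$. On the other hand, $R_j(g')\le N\,\E[(X_j-g')^+]\le N\,\E[(X_j-g)^+]$ for every $g'\ge g$. Hence in the linear regime a component earns a constant fraction of its peak at every index below its window. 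Your first attempt imposes exactly this regime ``across all $m$ scales''. With a low base atom, each of the $m-1$ higher components earns $\Omega(1)$ at the lowest window, so $\max_g\sum_j R_j(g)=\Omega(m)$ and the ratio is $O(1)$. With a high base atom you get $R_j(g)=\E[X_j]-g$ for $g<a_j$, which you yourself note is too large. Either way, the claimed $O(B^{-(j-k)})$ decay cannot hold. Your fallback target fails for the same reason: $\E[X_j-g\mid X_j>g]\cdot N\Pr[X_j>g]$ is identically $N\,\E[(X_j-g)^+]$, which is non-increasing in $g$ and so cannot decay below $W_j$.

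The missing idea is that below $W_j$ the component must be saturated, $N\Pr[X_j>g]\gg 1$, so that $R_j(g)\le\E[X_j-g\mid X_j>g]$. In that regime the buyer makes many cheap inspections rather than one. This mean residual life must then be uniformly small. A memoryless (geometric) body does this. A truncated equal-revenue body does not, since its mean residual life $g\ln(H/g)$ is at least $g$ for all $g\le H/e$. The window then begins where the body's tail mass drops to about $1/N$, and an atom of probability about $1/N$ marks its top end.

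This is exactly the paper's construction: $\Pr[X=x]=2^{-x-1}$ for $x<\ell_j=mj$, mass $(1-1/m)2^{-\ell_j}$ at $\ell_j$, and mass $1/n_j$ at $r_j=m(j+1)$, with $n_j=m2^{\ell_j}$. The identity $\sum_{i=1}^{x-1}i/2^i+(x+1)/2^{x-1}=2$ gives $\E[X-g\mid X>g]=2$ for all integers $0\le g<\ell_j$, so $R_j(g)\le 2$ there. Meanwhile $R_j(g)=\Theta(r_j-g)$ for integers $g\in[\ell_j,r_j)$, and $R_j(g)=0$ for $g\ge r_j$. Each component has peak $\Theta(m)$, so $\optrev=\Theta(m^2)$. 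Any single index earns $O(m)$ from the window containing it plus at most $2$ from every other component, hence $O(m)$ in total.

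Note also that the windows are consecutive intervals of length $m$, not geometrically separated. The count $n_m=m2^{m^2}$ comes from the body's tail mass $2^{-\ell_m}$ at the top window $\ell_m=m^2$, not from keeping the linear regime. With such a body $\log_2 n_j\approx\ell_j$, so geometrically spaced windows would make $n$ doubly exponential.
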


The following corollary is also immediate from our construction.

\begin{corollary}\label{corr:sqrt-logn-gap-partition-matroid-unif-id}
    For any $n\ge 1$, there exists a $\pbp$ instance $\inst$ over $n$ alternatives and a partition matroid constraint such that
    \[\optrev(\inst) \ge \Omega(\sqrt{\log n}) \cdot  \max_{g \in \R}\rev(\inst, \vp(g)).\]
\end{corollary}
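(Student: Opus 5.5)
The plan is to derive the corollary directly from the construction behind \Cref{lem:k-gap-partition-matroid-unif-id}, by re-expressing its gap in terms of the number of alternatives $n$ rather than the number of components $m$.

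\textbf{Step 1: invert the size relation.} For a given $m$, that construction has $n=\Theta(m\cdot 2^{m^2})$ elements. Taking logarithms gives
\[\log n = m^2 + \log m + O(1) = \Theta(m^2),\]
so $m=\Theta(\sqrt{\log n})$. The gap $\Omega(m)$ is therefore $\Omega(\sqrt{\log n})$ for these particular values of $n$.

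\textbf{Step 2: handle arbitrary $n$.} Fix any $n\ge 1$.
\begin{itemize}
\item If $n$ is below a fixed constant, the claim is trivial: $\optrev\ge \max_g\rev(\inst,\vp(g))$, so the required constant factor is at least $1$.
\item Otherwise, choose the largest $m$ such that the construction's size $n_m=\Theta(m 2^{m^2})$ is at most $n$. Since $n_{m+1}/n_m = 2^{\Theta(m)}$, this choice gives $\log n_m \ge \log n - O(m)$. Hence $m^2 \ge \Omega(\log n)$, and so $m=\Omega(\sqrt{\log n})$.
\end{itemize}

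\textbf{Step 3: pad the instance.} Add $n-n_m$ dummy alternatives that are deterministically $0$. Place each dummy in its own new component with $k=1$, or equivalently add them to an existing component; partition matroids allow either. It remains to check that these dummies change neither side of the inequality.
\begin{itemize}
\item A dummy has $\E[(X_i-g)^+]=0$ for every $g\ge 0$, so it never contributes revenue.
\item A dummy's CDF is $F_i(g)=1$ for $g\ge 0$. In its own component it also cannot span any other element, so it never blocks the inspection of real elements.
\end{itemize}
With the seller's tie-breaking, dummies can be ordered last. Consequently both $\optrev$ and $\max_g \rev(\inst,\vp(g))$ are the same as in the unpadded instance. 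The ratio is then $\Omega(m)=\Omega(\sqrt{\log n})$.

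\textbf{Main obstacle.} The only delicate point is Step 3: confirming that padding is truly neutral under the matroid inspection rule, in particular that an element is inspected iff it is not in $\spn(S_i)$. Placing each dummy in its own component settles this, because such components decouple from the rest of the matroid.
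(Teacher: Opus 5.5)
Your proposal is correct and takes the same route as the paper, which calls the corollary immediate from the construction in \Cref{lem:k-gap-partition-matroid-unif-id}: inverting $n=\Theta(m\cdot 2^{m^2})$ gives $m=\Theta(\sqrt{\log n})$. Your Steps 2--3 make explicit the extension to arbitrary $n$, which the paper leaves implicit. You pick the largest admissible $m$ and pad with deterministic-zero alternatives, and these dummies never earn revenue and never enter any $S_i$, since $0>g_i$ fails whenever $g_i\ge 0$.
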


As already mentioned, going beyond partition matroids presents significant challenges. Not only are uniform-index pricings no longer a good option, but the underlying matroid enforces strong correlations between the inspection probabilities of the different alternatives. We leave the design of (approximately) optimal prices for general matroids as an intriguing open question. The remainder of this section is dedicated towards proving our lower bound for partition matroids.

\begin{proof}[Proof of~\Cref{lem:k-gap-partition-matroid-unif-id}]

As the theorem states, our matroid will be composed of $m$ components $C_j$ and the buyer will only be able to select $k_j=1$ alternative from each of them. Furthermore, each component will be composed of $n_j$ identically distributed alternatives, following the same distribution $\dist_j$. Let $\inst$ be used to denote the (entire) matroid selection instance and $\inst_j$ be used to denote the single-selection instance corresponding to the $j$-th component (i.e. with $n_j$ alternatives of distribution $\dist_j$). Since the optimal buyer will optimize over each component independently, and each component contains identically distributed alternatives, we know that the optimal prices will be uniform-index on each component (\Cref{lem:exact-opt-mandatory-special-cases}). Furthermore, the seller's total revenue decomposes to the sum of revenues achieved in each component. Therefore, we have

\begin{equation}\label{eq:partition-decomposition}
    \frac{\optrev(\inst)}{\max_{g\ge 0}\rev(\inst,\vp(g))} = \frac{\sum_{j=1}^m\optrev(\inst_j)}{\max_{g\ge 0}\rev(\inst,\vp(g))} = \frac{\sum_{j=1}^m  \max_{g\ge 0}\rev(\inst_j,\vp(g))}{\max_{g\ge 0}\sum_{j=1}^m \rev(\inst_j,\vp(g))} .
\end{equation}

To complete our proof, we will define the instances $\inst_j$ in a manner so that we can extract non-trivial revenue from each component $C_j$ only for indices $g$ in some interval $I_j$; by making these intervals disjoint across the different components, the proof will follow.

We now define the instances $\inst_j=(n_j,\dist_j)$. For each $j\in [m]$, let $\ell_j = m\cdot j$ and $r_j = m\cdot (j + 1)$. We set $n_j = m \cdot 2^{\ell_j}$ and define distribution $\dist_j$ such that
\[
X\sim\dist_j = 
\begin{cases}
    x& \text{w.p. $\; 2^{-x-1}$ for all $x \in \{0,1,\dots, \ell_j-1\}$} \\
    \ell_j & \text{w.p. $\; (1-1/m)\cdot 2^{-\ell_j}$} \\
    r_j& \text{w.p. $\; n_j^{-1}$}
\end{cases}
\]
Note that $\sum_{x=0}^{\ell_j-1}2^{-x-1} + (1-1/m)\cdot 2^{-\ell_j} + n_j^{-1}=1$ by definition of these quantities. Also note that the total number of alternatives across all components is 
$n=\sum_{j=1}^m n_j = \sum_{j=1}^m m\cdot 2^{mj} = \Theta(m\cdot 2^{m^2})$.

We will next compute the revenue $\rev(\inst_j, \vp(g))$ that a uniform-index pricing achieves on instance $\inst_j$. A straightforward application of~\Cref{claim:revenue-exp-mandatory} for identical alternatives implies
\begin{align*}
    \rev(\inst_j, \vp(g)) &= 
     \expectt{X-g \mid X > g}\cdot (1 - \probb{X\le g}^{n_j})
\end{align*}
for $X\sim\dist_j$. Since all values in the support of $\dist_j$ are integers, we can focus on integer indices without loss of generality (see~\Cref{lem:gittins-index-in-support-mandatory}). Observe that:
\begin{itemize}
    \item For $g\ge r_j$, we have $\rev(\inst_j, \vp(g)) = 0$.

    \item For $g\in \{\ell_j, \dots, r_j-1\}$, we have $\rev(\inst_j, \vp(g)) = (r_j-g)\cdot (1-(1-1/n_j)^{n_j}) = \Theta(1)\cdot (r_j-g)$.
    
    \item For $g\in\{0,1,\dots , \ell_j-1\}$ we have that $\probb{X\le g}=1-2^{-g-1}$ and also
    \begin{align*}
    \E[X - g \mid X > g] &= 2^{g+1} \left(\sum_{x = g+1}^{\ell_j - 1} \frac{x - g}{2^{x + 1}} + (\ell_j - g) \frac{m - 1}{m \cdot 2^{\ell_j}} + (r_j - g) \frac{1}{m \cdot 2^{\ell_j}} \right) \\
    &= 2^{g+1} \left(\sum_{x = 1}^{\ell_j - g-1} \frac{x}{2^{x + g + 1}} +  (\ell_j - g)\frac{1}{2^{\ell_j}} +  (r_j - \ell_j)\frac{1}{m2^{\ell_j}}\right) \\
    &= \sum_{x = 1}^{\ell_j - g-1} \frac{x}{2^{x}} + \frac{1+\ell_j - g}{2^{\ell_j - g - 1}} = 2
\end{align*}
with the last equality following from the identity $\sum_{j=1}^{x-1} \frac{j}{2^j} + \frac{x+1}{2^{x-1}} = 2$ for $x \in \mathbb{Z}_{\ge 1}$. Therefore, we have that $\rev(\inst_j,  \vp(g))=O(1)$.
\end{itemize}

Since the instance $\inst_j$ is i.i.d., by~\Cref{lem:exact-opt-mandatory-special-cases}, we obtain that the optimal revenue for $\inst_j$ is given by $\optrev(\inst_j) = \max_{g\ge 0}\rev(\inst_j,\vp(g)) = \Theta(m)$, achieved by $g=\ell_j$. Furthermore, the revenue achieved by any index $g\notin [\ell_j, r_j)$ is $O(1)$. Since the intervals $[\ell_j, r_j)= [j\cdot m,(j+1)\cdot m)$ do not overlap between different components $C_j$, this implies that any (globally) uniform-index can only extract $\Theta(m)$ revenue from one component and $O(1)$ revenue from the remaining ones. Together with~\Cref{eq:partition-decomposition}, these imply the theorem.
\end{proof}

%% file: optional-inspection.tex
\section{Beyond Mandatory Inspection: The Optional Inspection Setting} \label{sec:optional}

We now shift our attention to the more challenging setting of \emph{optional inspection}. As in the mandatory inspection model, there are $n$ stochastic alternatives with distributions $\dist_i$ that are known to both the buyer and the seller; the seller announces a price vector $\vp \in \R^n_+$; the buyer may pay $p_i$ to observe the realization of alternative $i$; and the buyer ultimately selects a single alternative so as to maximize their utility. The only difference is that, under optional inspection, the buyer may also select alternative $i$ \emph{without inspecting it}, in which case they obtain the (random) reward $X_i$---equivalently, its expected value $\mu_i$---without making any payment to the seller.

Although this modification may appear innocuous at first glance, it is in fact consequential: \citet{FLL22} show that even for three-point distributions, the buyer's utility-maximization problem is NP-hard. This stands in sharp contrast to the mandatory inspection setting, where the optimal buyer response can be computed efficiently for arbitrary distributions. Consequently, the seller's revenue-maximization problem becomes substantially more challenging: we no longer have a clean characterization of the buyer's best response, nor a closed-form expression---analogous to \Cref{claim:revenue-exp-mandatory}---for the revenue induced by a given pricing $\vp$.

Indeed, the optimal revenue achievable under mandatory and optional inspection can differ sharply, even for the same set of alternatives. To illustrate, when $n=1$, the seller can extract revenue $\mu_1$ in the mandatory setting by setting $p_1=\mu_1$; in contrast, in the optional setting the buyer will never pay any strictly positive price, since $X_1 \ge 0$ implies that they can always obtain nonnegative utility by selecting the alternative without inspection. 
Moreover, several useful and intuitive properties from the mandatory setting may fail under optional inspection: for instance, increasing the price of an alternative $j$ can \emph{decrease} the probability that the buyer inspects a different alternative $i \neq j$, even when all other prices are held fixed. We provide such an example in~\Cref{app:proofs-for-optional-examples}.

In this section, we initiate the study of revenue maximization under optional inspection and develop approximately optimal mechanisms for two special cases: identically distributed alternatives and two-point distributions. We remark that in the intersection of these two cases (i.e. identically distributed two-point distributions) uniform prices always achieve the optimal revenue; a closed-form solution is derived in~\Cref{app:proofs-for-optional-examples}.

\subsection{Identically Distributed Alternatives}\label{sec:optional-iid}

We first consider the special case in which all alternatives are identically distributed, i.e., $X_i\sim\dist$ for every $i\in[n]$. As a warm-up, we first consider $n=2$ alternatives (recall that when $n=1$, the seller cannot extract any positive revenue under optional inspection). Perhaps unsurprisingly, the optimal pricing in this case is uniform, and the optimal price can be computed efficiently.

\begin{theorem}\label{thm:optional-iid-2}
    For any optional inspection instance $\inst$ over $n=2$ identical alternatives, we can efficiently compute a uniform price vector $\vp = (p,p)$ that achieves optimal revenue.
\end{theorem}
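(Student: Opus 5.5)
We begin by writing down the buyer's optimal response under optional inspection with two alternatives $X_1,X_2\sim\dist$ and prices $(p_1,p_2)$. Let $\mu$ be the common mean, and assume without loss of generality $p_1\le p_2$. With two boxes the buyer's decision tree is shallow. At the root there are three options: take an alternative uninspected (utility $\mu$), or inspect one box, say $i$, paying $p_i$. After inspecting box $i$ and observing $x$, only one box remains, so the continuation is explicit. The buyer takes the best of $x$, of $\mu$ (claiming $j$ blind), and of $\expectt{\max(x,X_j)}-p_j$ (inspecting $j$ and choosing the better). Equivalently, the continuation value is $\max\{x,\mu,\, x+\expectt{(X_j-x)^+}-p_j\}$.

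Hence the revenue has a closed form. If the buyer starts with $i$, the seller collects $p_i+p_j\cdot\probb{\text{second inspection}}$. The second inspection occurs exactly when $\expectt{(X_j-x)^+}-p_j\ge \max\{0,\mu-x\}$. Writing $\phi(x):=\expectt{(X-x)^+}$ and $\psi(x):=\phi(x)-(\mu-x)^+$, this is the event $\psi(x)\ge p_j$. The function $\psi$ is increasing on $x<\mu$ and decreasing on $x\ge\mu$, so this event is an interval of $x$ around $\mu$. Inspecting first is chosen only when its utility is at least $\mu$.

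The plan is to show that any optimal $(p_1,p_2)$ can be replaced by a uniform vector with at least the same revenue. Suppose the buyer opens box $1$ first. The seller's revenue is $R(p_1,p_2)=p_1+p_2\Pr[\psi(X)\ge p_2]$. This is subject to the participation constraint $U(p_1,p_2)\ge \mu$, where $U$ is the utility of inspecting box $1$ first, and to the constraint that opening box $1$ first is weakly preferred to opening box $2$ first.

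The key observation is that $U$ is decreasing in $p_1$ one-for-one, since $p_1$ is paid unconditionally. Therefore, for a fixed $p_2$, revenue is maximized by raising $p_1$ until either the participation constraint binds or the ordering constraint binds. I would then argue that raising $p_1$ up to $p_2$ never violates participation. By symmetry of the identical distributions, the utility of the order ``$1$ first'' at prices $(p,p)$ equals the utility of the order ``$2$ first'' at prices $(p_1,p)$ with $p_1\le p$ for the second box. The latter is at least $U(p_1,p_2)\ge\mu$, because the buyer already preferred order $1$-first there.

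This argument needs care. Raising the first price while holding the second fixed changes only the constant term of $U$. The comparison has to be made between (first price $p_2$, second price $p_2$) and (first price $p_1$, second price $p_2$). The cleaner route is a direct exchange. Under $(p_2,p_2)$, the buyer inspecting first pays $p_2$ instead of $p_1$ and then behaves identically, since the continuation depends only on the second price. So the change in utility is $p_1-p_2\le 0$, and participation may fail. I expect this to be the main obstacle. In that case I would instead lower the common price to the largest $p\ge p_1$ keeping participation. The revenue $p+p\Pr[\psi(X)\ge p]$ must then be compared with $p_1+p_2\Pr[\psi(X)\ge p_2]$, using the fact that the participation slack of the original vector, together with the ordering constraint, bounds $p_2\Pr[\psi\ge p_2]$ via the buyer's expected gain from the second inspection.

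Once uniformity is established, efficiency follows from optimizing the one-dimensional function $R(p)=p\,(1+\Pr[\psi(X)\ge p])\cdot\mathbb{1}[U(p,p)\ge\mu]$. For a discrete $\dist$, the map $p\mapsto\Pr[\psi(X)\ge p]$ is a step function with breakpoints at the values $\psi(x)$, $x\in\support(\dist)$. Likewise, $U(p,p)-\mu$ is piecewise linear and decreasing in $p$, so the participation threshold $\bar p$ is computable. The optimum is therefore attained either at $\bar p$ or at one of the $O(|\support(\dist)|)$ breakpoints below it, with ties broken in the seller's favor. We enumerate these candidates and take the best.
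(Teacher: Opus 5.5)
\textbf{There is a genuine gap.} Your setup matches the paper's. Your $\psi$ is exactly the paper's $W(x)=\mathbb{E}[\max\{x,X\}]-\max\{x,\mu\}$. Fixing $p_2$ and raising $p_1$ until either $p_1=p_2$ or participation binds is precisely the paper's choice $p_1=\min\{p_2,\tau(p_2)\}$, where $\tau(p_2)=\mathbb{E}[(X-\mu)^+]+\mathbb{E}[(\psi(X)-p_2)^+]$. When $p_1=p_2$ binds you immediately have a uniform vector. The other regime, $\tau(p_2)<p_2$, is the one you correctly flag as the main obstacle. It is where all the content of the theorem lies, and you leave it as a plan (``must then be compared \ldots using \ldots''). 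Nothing in the proposal shows that the uniform price you pick achieves at least $p_1+p_2\Pr[\psi(X)\ge p_2]$. Nor does it show that a largest $p\ge p_1$ with $U(p,p)\ge\mu$ even exists. So the uniformity claim is not yet proved.

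The missing step is short. Participation gives $p_1\le\tau(p_2)$. The pointwise identity $(\psi-t)^++t\,\mathbbm{1}[\psi\ge t]=\psi\,\mathbbm{1}[\psi\ge t]$ then yields $p_1+p_2\Pr[\psi(X)\ge p_2]\le \mathbb{E}[(X-\mu)^+]+\mathbb{E}[\psi(X)\mathbbm{1}[\psi(X)\ge p_2]]$. By Jensen, $\psi\ge 0$, so this bound is nonincreasing in $p_2$. The map $p\mapsto\tau(p)-p$ is continuous and strictly decreasing, so it has a unique root $p^*$, and $\tau(p_2)<p_2$ forces $p^*<p_2$. At $(p^*,p^*)$ participation holds with equality, and the tie is broken for the seller. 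Its revenue is therefore $\tau(p^*)+p^*\Pr[\psi\ge p^*]=\mathbb{E}[(X-\mu)^+]+\mathbb{E}[\psi\,\mathbbm{1}[\psi\ge p^*]]$, which dominates the bound above. The ordering constraint you invoke plays no role here. What does the work is the participation bound combined with monotonicity in $p_2$.

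Your efficiency step is fine. Including the threshold $\bar p=p^*$ as a candidate, not only the breakpoints $\psi(x)$, is actually necessary. For example, if $X$ is uniform on $\{0,2\}$, then $\psi\equiv 0$ and the optimum is at $p^*=1/2$.
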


For $n \ge 3$, it is no longer clear whether uniform prices remain optimal in the optional inspection model (though we conjecture that they do). Regardless, the symmetry of the alternatives can still be leveraged to design ``nearly uniform'' prices that are efficiently computable and achieve ``nearly optimal'' revenue as the number of alternatives $n$ grows.

\begin{theorem} \label{thm:optional-iid-approx}
    For any optional inspection instance $\inst$ over $n$ identical alternatives, we can efficiently compute a price vector $\vp=(\infty,p,p,\dots ,p)$ such that
    \[\rev(\inst,\vp) \geq \left(1-\frac{1}{n}\right)\cdot \optrev(\inst).\]
\end{theorem}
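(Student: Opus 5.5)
The plan is to reduce both the algorithm and the upper bound to the mandatory-inspection model with an outside option, where \Cref{lem:exact-opt-mandatory-special-cases} (case (2), identical alternatives) and its outside-option analogue give a closed form. Throughout, write $X\sim\dist$, $\mu=\expectt{X}$ and $F$ for the CDF.

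\emph{Step 1 (the algorithm).} Pricing alternative $1$ at $\infty$ turns it into a permanently available uninspected option worth $\mu$. Selecting any other uninspected alternative also yields exactly $\mu$. So under $\vp=(\infty,p,\dots,p)$ the optional-inspection buyer faces the same problem as a mandatory-inspection buyer with $n-1$ identical boxes and outside option $y=\mu$. Weitzman's rule then describes the buyer exactly: inspect while the common index $g$ exceeds both $\mu$ and the best realization so far. The seller's revenue is therefore
\[R_{n-1}(g):=\expectt{(X-g)^+}\cdot\frac{1-F(g)^{n-1}}{1-F(g)}\qquad\text{for } g\ge\mu,\]
and it is $0$ for $g<\mu$. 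By \Cref{lem:gittins-index-in-support-mandatory} we only need to try $g$ in $\{\mu\}\cup\support(\dist)$, and we output $p=\expectt{(X-g^\star)^+}$ for the maximizing $g^\star$. This step is computationally routine.

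\emph{Step 2 (the comparison $R_{n-1}$ versus $R_n$).} For every $g$ we have
\[\frac{1-F^{n}}{1-F^{n-1}}=\frac{1+F+\dots+F^{n-1}}{1+F+\dots+F^{n-2}}\le\frac{n}{n-1}.\]
This holds because the terms $F^k$ are non-increasing, so the average of the first $n$ terms is at most the average of the first $n-1$. Hence
\[\max_g R_{n-1}(g)\ \ge\ \left(1-\tfrac1n\right)\max_g R_n(g),\]
where $R_n$ is the analogous revenue with $n$ boxes and outside option $\mu$. By the identical-distributions case of \Cref{lem:exact-opt-mandatory-special-cases} (with outside option), $\max_g R_n(g)$ is the optimal mandatory revenue $\optrev^{\manlabel}_{\mu}$ for $n$ boxes with outside option $\mu$.

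\emph{Step 3 (upper bound, the main obstacle).} It remains to show $\optrev(\inst)\le\optrev^{\manlabel}_{\mu}$, i.e.\ that no optional-inspection pricing beats the mandatory instance with $n$ boxes and outside option $\mu$. Monotonicity of revenue in the buyer's options is false in general, so I would argue directly on the payment stream.

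Fix optimal prices $\vp$ with indices $g_i$, and fix the buyer's (possibly adaptive) optimal policy. The first thing to note is that a box with $g_i<\mu$ is never inspected. This needs an exchange argument: while some box remains uninspected, inspecting box $i$ is dominated by the option of taking an uninspected box for $\mu$. The case where box $i$ is the last uninspected box needs separate care; there taking it uninspected is still worth $\mu$.

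Next I would charge payments to boxes. Using the ``covered call'' amortization (Kleinberg--Waggoner--Weyl), the expected payment $p_i\cdot\Pr[\text{inspect } i]$ equals $\expectt{(X_i-g_i)^+\mathbb{1}[\text{inspect } i]}$. The goal is to show the induced revenue is at most
\[\sum_i \expectt{(X_i-g_i)^+}\cdot\Pr[\text{every previously inspected } j \text{ has } X_j\le g_i].\]
Sorting by index then bounds this by the mandatory revenue for indices $\vg$ with outside option $\mu$, which is at most $\optrev^{\manlabel}_{\mu}$.

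The delicate part is the adaptive, non-index order of the optional buyer. I would first try to show that the optimal optional policy only inspects box $i$ when the current best realization is at most $g_i$. This should follow from optimality, since otherwise stopping weakly dominates. Given that, the conditional inspection events satisfy the product-form bound above, and the adaptive order can only weakly decrease each term, because identical distributions make the ordering irrelevant except through the index values. If this direct bound fails, the fallback is to prove $\optrev\le\max_g R_n(g)$ by a symmetrization argument over the identical boxes.
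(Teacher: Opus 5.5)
Your Steps 1 and 2 match the paper. Pricing box $1$ at $\infty$ makes the buyer a mandatory searcher over $n-1$ identical boxes with outside option $\mu$. The pointwise bound $(1-F^n)/(1-F^{n-1})\le n/(n-1)$ then compares $n-1$ boxes with $n$ boxes.

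Step 3 takes a genuinely different route. The paper never analyzes the optional buyer in $\inst$ directly. It introduces an optional instance $\inst'$ with $n$ boxes and outside option $\mu$, and observes that $\inst'$ is equivalent to the mandatory instance with outside option $\mu$, since with $\mu$ already available, taking a box uninspected is useless. It then couples $\inst$ with $\inst'$ for each fixed $\vp$. The stopping payoff $\max\{v,\mu\}$ is the same in both, while the continuation value in $\inst$ can only be lower, so whenever the $\inst'$-buyer stops, so does the $\inst$-buyer. You instead derive a necessary condition for inspection inside $\inst$ itself: box $i$ is inspected only if $g_i\ge\max\{v,\mu\}$, where $v$ is the current best. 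This bounds $\rev(\inst,\vp)$ term by term by the mandatory revenue with outside option $\mu$. The paper's route needs only the fact that extra options cannot lower the continuation value. Yours needs a local optimality lemma, but it gives a direct description of the optional buyer (a Weitzman buyer with outside option $\mu$ who may stop earlier) and never compares two optional instances.

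To make your route rigorous, spell out three things you currently only gesture at.

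\emph{The inspection order.} For identical boxes, take a subtree where a costlier box $k$ is inspected first while a cheaper box $i$ is uninspected, and swap the roles of $i$ and $k$. This leaves the value distribution unchanged and changes the expected payment by $(p_i-p_k)\bigl(1-\Pr[i\text{ is inspected later}]\bigr)\le 0$. So without loss of generality boxes are inspected in decreasing index order, and this is what makes the previously inspected set exactly $\{j<i\}$; the paper also relies on this order without proof. Your remark that an adaptive order ``can only weakly decrease each term'' is false for general orders, since inspecting a low-index box early raises its term.

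\emph{The inspection condition.} With the order fixed, backward induction gives the condition and covers the last-box case uniformly. Suppose $g_i<m:=\max\{v,\mu\}$. Then no later box is inspected, so inspecting $i$ is worth at most $m+\E[(X_i-m)^+]-p_i<m$, while stopping (taking $i$ or another box uninspected) is worth $m$.

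\emph{Strictness.} Because ties are broken in the seller's favor, that inequality must be strict, not merely weak. It is strict whenever $\Pr[X_i>g_i]>0$, and boxes with $p_i=0$ contribute no revenue anyway.
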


The proof proceeds by setting the price of one alternative to $\infty$, effectively preventing the buyer from ever inspecting it. Since the alternatives are identical, selecting any alternative without inspection yields the same expected reward $\mu$ to the buyer; thus, without loss of generality, we may assume that the only alternative the buyer can select without inspection is precisely the one priced at $\infty$. Therefore, the task of pricing the remaining alternatives $2,\dots,n$ reduces to a mandatory inspection instance with an outside option of $\mu$. As established in previous sections, such settings admit efficiently computable uniform prices that are optimal. It remains to argue that sacrificing one alternative can reduce the optimal revenue by at most a $\tfrac{1}{n}$ fraction. While this claim is intuitive, proving it directly is subtle, since even with identical alternatives the optional inspection model lacks a simple characterization of the buyer's optimal policy. We therefore avoid a direct analysis and instead apply a reduction from the optional to the mandatory inspection setting, where the desired bound is straightforward to establish.

The detailed proofs of~\Cref{thm:optional-iid-2,thm:optional-iid-approx} are presented in~\Cref{app:proofs-for-optional-iid}.

\subsection{Two-Point Distributions and Reveal Mechanisms}
We now shift our attention to the special case where each alternative $X_i$ takes two distinct values $\{l_i,r_i\}$ such that $0\leq l_i \le r_i$ with probabilities $(1-q_i)$ and $q_i$ respectively, for some $q_i\in [0,1]$.

To approximate the optimal revenue in such optional inspection settings, we broaden our focus to a more general class of (partially) adaptive mechanisms, which we call \emph{reveal mechanisms}. As the name suggests, a reveal mechanism may disclose the realizations of an arbitrary subset of alternatives to the buyer for free. Doing so forfeits any direct revenue from these alternatives, since once revealed they can be selected without further payment. The key feature, however, is that the seller is then permitted to price access to the remaining alternatives \emph{as a function of the revealed values}, potentially extracting more revenue from them compared to non-adaptive prices. Moreover, among the revealed alternatives the buyer will only ever consider choosing the one with the highest realized value. Consequently, without loss of generality, the seller's contingent pricing rule needs only to depend on the revealed information through this maximum value. Formally:

\begin{definition}[Reveal Mechanisms]
    A reveal mechanism $\mech$ is characterized by a set $R\subseteq [n]$ of alternatives and a mapping $p(\cdot)$ from real numbers $y\geq 0$ to prices over $[n]\setminus R$. Initially, all realizations $x_i$ for $i\in R$ are revealed to both the seller and the buyer for free. The seller then proceeds to set prices $p_i(y_R)$ on the alternatives $i\notin R$ where $y_R:=\max_{i\in R}x_i$. Finally, the buyer proceeds to maximize their utility by potentially inspecting alternatives $i\notin R$, having the option to select any alternative $i\in [n]$ without inspection at any point. We use $\revealset$ to denote the set of all reveal mechanisms and $\rev(\inst,\mech)$ for the expected revenue of mechanism $\mech\in\revealset$ on instance $\inst$.
\end{definition}

We note that the definition of reveal mechanisms extends beyond the special case of two-point distributions and is natural from an economics perspective: by disclosing a subset of realizations at no charge, the seller can infer how valuable additional information is likely to be and then set prices that are better aligned with the buyer’s posterior willingness to pay. We also note that while one could define reveal mechanisms in the mandatory inspection model as well, revealing information for free is \textit{never beneficial} to the seller in that setting. We prove this formally in~\Cref{app:proofs-for-optional-reveal-mandatory}.


Back to the special case of optional inspection under two-point distributions, our main result for this section is the following:
\begin{theorem}\label{thm:optional-bernoulli}
    For any optional inspection instance $\inst$ over two-point distributions, we can efficiently compute a reveal mechanism $\mech$ such that 
    \[\rev(\inst,\mech)\geq \frac{1}{2}\cdot \max_{\mech'\in \revealset}\rev(\inst,\mech').\]
\end{theorem}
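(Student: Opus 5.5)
Our plan is to compare an arbitrary reveal mechanism $\mech'$ with two simple reveal mechanisms and output the better one. The first candidate reveals nothing ($R=\emptyset$). The second reveals all but one alternative and prices the last adaptively. More generally, we will optimize over a small structured family. The main tool is a reduction that is already implicit in the proof of \Cref{thm:optional-iid-approx}. Once the set $R$ is revealed and $y_R$ is known, the buyer faces an optional-inspection problem on $[n]\setminus R$. Here the deterministic outside option is $\max\{y_R,\max_{i\notin R}\mu_i\}$.

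\textbf{Step 1: an upper bound on any reveal mechanism.} Fix any $\mech'=(R,p(\cdot))$ and condition on $y_R=y$. We upper bound the revenue from the unrevealed alternatives by the revenue of a mandatory-inspection instance with outside option $y$. This holds because the only extra buyer option under optional inspection is taking some $X_i$ blind. For two-point distributions, taking $X_i$ blind is dominated by a mixture whose value is at most $\mu_i$.

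We will then split the revenue of $\mech'$ into the contribution of the first alternative whose realization the buyer pays for, and the contribution of all subsequent inspections. The key structural fact for two-point $X_i$ concerns what happens after an inspected alternative realizes its low value $l_i$. At that point the buyer's continuation problem is identical to the one in which $i$ had been revealed as $l_i$ for free. So all revenue beyond the first paid inspection is at most what an adaptive reveal mechanism could earn after revealing that alternative.

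\textbf{Step 2: the two candidate mechanisms.} Write $\optrev_{\revealset}$ for the optimum over reveal mechanisms. We aim to show $\optrev_{\revealset}\le A+B$, where:
\begin{itemize}
\item $A$ is the best revenue from a single inspection, i.e.\ the first paid inspection of the optimum;
\item $B$ is the best revenue of a mechanism that reveals a set $R$ and then extracts revenue in a mandatory-inspection style from the rest.
\end{itemize}
For $B$, we argue that after revealing $y$ the seller can restrict to prices with Weitzman indices equal to the outside option $\max\{y,\mu\}$. Two-point supports give optimality of uniform indices via case (3) of \Cref{lem:exact-opt-mandatory-special-cases}, together with \Cref{lem:gittins-index-in-support-mandatory} extended to an outside option. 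With indices equal to the outside option, the buyer is exactly indifferent between inspecting and taking the outside option. We then need that the optional-inspection buyer also behaves as in the mandatory model; the check is that inspecting a two-point box priced so that its index equals the current best value is weakly preferred to taking it blind. Both $A$ and $B$ are computable by enumerating a polynomial number of candidate $(R,g)$ pairs, using the nested structure given by sorting on $l_i, r_i$. Outputting the better of the two yields the factor $2$.

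\textbf{Main obstacle.} The hardest step is Step 1: showing that the optimal buyer strategy under arbitrary prices in the optional model can be charged to $A$ plus a reveal-mechanism-with-mandatory-behavior term. The buyer's policy may be adaptive and non-index-based, as in the example of \Cref{app:proofs-for-optional-examples}. We will first try an induction on $n$. Consider the first box the optimal buyer inspects. Its price is at most the one-shot revenue, which contributes to $A$. Conditioned on its low realization, the residual is a reveal mechanism on fewer boxes, and conditioned on the high realization $r_i$ we argue the buyer stops. The step that does not simply telescope is bounding the sum of these residual terms by a single $B$ rather than by a geometric series of them. We expect this to require the observation that the residual mechanisms after a low realization form a single mechanism revealing the inspected box.
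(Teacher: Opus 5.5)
\textbf{There is a genuine gap.} Step~1, the upper bound on an arbitrary reveal mechanism, is never established, and the route you sketch breaks at several concrete points.

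\emph{(i) The buyer need not stop after a high realization.} Take $X_1\in\{0,1\}$ and $X_2\in\{0,10\}$, each value with probability $1/2$, with $R=\emptyset$, $p_1=\epsilon$ and $p_2=0.2$. The optimal buyer inspects box $1$ first: this gives utility $5.15-\epsilon$, versus $5.05$ for inspecting box $2$ first and $5$ for taking box $2$ blind. After seeing $X_1=1$ the buyer goes on to inspect box $2$, and after $X_1=0$ the buyer takes box $2$ blind.

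\emph{(ii) $A$ is never tied to any computable mechanism, and the candidate you name does not dominate it.} Take $n\ge 2$ i.i.d.\ boxes in $\{0,1\}$ with $\probb{X_i=1}=q\in(0,1)$. The uniform price $q(1-q)$ is optimal and collects $q(1-q)$ on its first inspection. Yet every mechanism revealing all but one box earns $0$: given $y_R=0$, the buyer takes the last box blind rather than pay anything.

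\emph{(iii) The recursion does not collapse.} Your induction produces a sum of first-inspection terms along the recursion. Collapsing that sum into a single $B$ is left as a hope.

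\emph{(iv) The mandatory comparison is either too weak or unproved.} With outside option $y$ it cannot support any constant factor: for $n=1$ it gives $\mu_1$, while every reveal mechanism earns $0$. The version with outside option $\max\{y,\max_{i\notin R}\mu_i\}$ is unproved. Your Step~2 ``check'' also fails for the box $i^*$ attaining that maximum. Inspecting it at index $\mu_{i^*}$ forfeits the very outside option the index is calibrated to, so the buyer (generically strictly) prefers taking it blind unless it is priced at $\infty$.

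\textbf{The missing idea is that no analysis of the buyer's adaptive policy is needed.} For any reveal mechanism, revenue equals the buyer's expected value minus the buyer's expected utility. The value is at most $\E[\max_i X_i]$. The utility is at least $\mu_1:=\max_i\mu_i$, since alternative $1$ can always be taken, revealed or blind. Hence $\max_{\mech'\in\revealset}\rev(\inst,\mech')\le\E[\max_i X_i]-\mu_1$.

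The two candidates from your first paragraph meet this bound exactly, provided the special box is chosen as one of maximum mean.
\begin{itemize}
\item Reveal $[n]\setminus\{1\}$ and charge $\E[\max\{X_1,y_R\}]-\max\{\mu_1,y_R\}$ for box $1$. This makes the buyer indifferent, so with ties favoring the seller they always pay. The revenue is $\E[\max_i X_i]-\E[\max\{\mu_1,\max_{i\ge2}X_i\}]$.
\item Price box $1$ at $\infty$ and give every other box index $\mu_1$. Box $1$ then acts as a deterministic outside option $\mu_1$; boxes $j\ge 2$ have $\mu_j\le\mu_1$, so they are never taken blind. By \Cref{fact:utility} the buyer's utility is exactly $\mu_1$. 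Inspecting in decreasing order of $r_j$, two-point supports ensure the buyer ends with $\max\{\mu_1,\max_{i\ge2}X_i\}$. The revenue is $\E[\max\{\mu_1,\max_{i\ge2}X_i\}]-\mu_1$.
\end{itemize}
These two revenues telescope to the upper bound, so the better of the two is a $2$-approximation.
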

\begin{proof}
    Let $\mu_i = l_i\cdot (1-q_i) + r_i\cdot q_i$ denote the expectation of the $i$-th alternative and assume without loss of generality that  $\mu_1 = \max_{i\in [n]}\mu_i$. Observe that under any mechanism $\mech\in \revealset$, the utility of the buyer is at least $\mu_1$ as they can always select the first alternative without inspecting anything, and the (expected) value of the buyer is at most $\expectt{\max_i X_i}$ as in the best case, they always select the alternative of maximum value. Since the buyer is utility-maximizing, this implies that
    \[\max_{\mech'\in \revealset}\rev(\inst,\mech') \leq \expectt{\max_i X_i}-\mu_1.\]
    Our proof entails the (efficient) construction of two reveal mechanisms $\mech_1$ and $\mech_2$ whose combined revenue matches this upper bound - then, by selecting the best out of the two (or uniformly randomizing over our choice), the proof follows. The first mechanism will exploit the option of adapting to revealed realizations and will be responsible for extracting maximum revenue from the first alternative. The second mechanism will be responsible for extracting the revenue from alternatives $2$ through $n$ - since we don't care about the first alternative, we can price it at $p_1=\infty$ to ensure that the buyer never inspects it and that it is the unique alternative that potentially gets selected without inspection (as it has the maximum expectation). This essentially allows us to treat the remaining pricing problem as in the mandatory inspection setting and exploit the optimality of uniform-index prices for two-point distributions. We formalize these statements in the following two claims (proved in~\Cref{app:proofs-for-optional-bernoulli}), from which the proof of~\Cref{thm:optional-bernoulli} follows.
\end{proof}

    \begin{claim}\label{claim:bernoulli_mech_1}
        Let $\mech_1$ be the mechanism that reveals alternatives $R=[n]\setminus\{1\}$ and prices the first alternative at $p_1 = \expectt{\max\{X_1,y_R\}} - \max\{\mu_1,y_R\}$. Then,
        \[\rev(\inst,\mech_1) = \expectt{\max_{i\in [n]}X_i} - \expectt{\max\{\mu_1, \max_{i\ge 2}X_i\}}.\]
    \end{claim}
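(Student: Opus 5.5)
The plan is to condition on the revealed information and reduce to a single-alternative decision problem for the buyer. Fix a realization of the revealed alternatives $R=[n]\setminus\{1\}$ and write $y=y_R=\max_{i\ge 2}x_i$. Since all alternatives in $R$ are revealed, the buyer can take any of them for free, and the best such option is worth $y$. The only remaining non-revealed alternative is $1$, so the buyer has exactly three options:
\begin{enumerate}
    \item take the best revealed alternative, for value $y$;
    \item select alternative $1$ without inspection, for expected value $\mu_1$;
    \item pay $p_1(y)$ to inspect $X_1$, and then take the better of $X_1$ and $y$.
\end{enumerate}
Selecting alternative $1$ after inspection is covered by the third option. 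Hence the best non-inspecting utility is $\max\{\mu_1,y\}$, while inspecting yields $\expectt{\max\{X_1,y\}}-p_1(y)$.

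First I would check that the posted price is well defined. The function $x\mapsto \max\{x,y\}$ is convex, so Jensen's inequality gives $\expectt{\max\{X_1,y\}}\ge \max\{\mu_1,y\}$. Therefore $p_1(y)\ge 0$ and it is a legitimate price.

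Next, with $p_1(y)=\expectt{\max\{X_1,y\}}-\max\{\mu_1,y\}$, the two quantities above coincide exactly, so the buyer is indifferent between inspecting and not inspecting. By the standing convention that ties are broken in the seller's favor, the buyer inspects alternative $1$. (Alternatively, lowering the price by an arbitrarily small amount makes inspection strictly optimal at negligible loss.) The conditional revenue given the revealed values is therefore exactly $p_1(y_R)$. Inspection of alternative $1$ is the only source of revenue, since the alternatives in $R$ are revealed for free.

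Finally, I would take expectations over the independent realizations of $X_2,\dots,X_n$. Because $X_1$ is independent of $y_R$, we have $\expectt{\expectt{\max\{X_1,y_R\}\mid y_R}}=\expectt{\max_{i\in[n]}X_i}$, and the subtracted term averages to $\expectt{\max\{\mu_1,\max_{i\ge 2}X_i\}}$. This gives the claimed identity.

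The only delicate point is the tie-breaking and indifference step. Nothing about two-point distributions is needed here; the argument uses only independence and Jensen's inequality.
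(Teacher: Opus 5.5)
Your proposal is correct and matches the paper's argument: condition on $y_R$, note that the price makes the buyer exactly indifferent between inspecting alternative $1$ and the best non-inspecting option $\max\{\mu_1,y_R\}$, invoke seller-favorable tie-breaking, and average over $y_R$. Your Jensen argument for $p_1\ge 0$ and your explicit use of the independence of $X_1$ and $y_R$ supply details that the paper only asserts.
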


      \begin{claim}\label{claim:bernoulli_mech_2}
        Let $\mech_2$ be the (non-adaptive) pricing mechanism corresponding to $R=\emptyset$ and prices $\vp=(p_1,\dots , p_n)$  where $p_1 = \infty$ and prices $p_j\ge 0$ satisfy $g_j(p_j)=\mu_1$ for all $j\ge 2$. Then,
        \[\rev(\inst,\mech_2) = \expectt{\max\{\mu_1, \max_{i\ge 2}X_i\}} - \mu_1.\]
    \end{claim}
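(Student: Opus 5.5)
The approach is to reduce the buyer's problem under $\mech_2$ to a mandatory-inspection instance over alternatives $2,\dots,n$ with outside option $y=\mu_1$. Then I would compute revenue as (expected value of the selected alternative) minus (buyer utility). This identity holds because the buyer's utility is exactly the selected value minus the total payments, and those payments are the seller's revenue.

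\textbf{Step 1 (reduction).} Since $p_1=\infty$, alternative $1$ is never inspected. Selecting it uninspected yields expected reward $\mu_1$, so it behaves exactly like a deterministic outside option of value $\mu_1$. For $j\ge 2$, selecting $j$ uninspected yields $\mu_j\le\mu_1$. Hence, at any point, taking alternative $1$ weakly dominates taking an uninspected $j\ge 2$. So the buyer's optimal utility in the optional game equals the optimal utility in the mandatory game on $\{2,\dots,n\}$ with outside option $\mu_1$. Moreover, every optimal mandatory policy, with uninspected selections replaced by selecting alternative $1$, remains optimal in the optional game. I expect this step to be the main obstacle: the optional model lacks a clean characterization, so the argument has to go through this dominance argument rather than through any structural result about optional inspection.

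\textbf{Step 2 (buyer utility).} Treat the outside option as an extra box with price $0$ and index $\mu_1$, as in \Cref{sec:mandatory-ext}. By \Cref{fact:utility}, the buyer's utility is
\[
\expectt{\max\bigl\{\mu_1,\max_{j\ge 2}\min\{X_j,\mu_1\}\bigr\}}=\mu_1,
\]
since all indices equal $\mu_1$.

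\textbf{Step 3 (welfare).} I would show that, with ties broken in the seller's favor, the selected value equals $\max\{\mu_1,\max_{i\ge 2}X_i\}$ almost surely. First, for each $j\ge 2$ we have $l_j\le\mu_j\le\mu_1$. If $r_j\le\mu_1$, then $p_j=0$ and $X_j\le\mu_1$ always, so $j$ affects neither revenue nor the maximum. For the remaining $j$, each realization is either at most $\mu_1$ or equal to $r_j>\mu_1$. All of these boxes tie at index $\mu_1$. Among the optimal (Weitzman) orders, I would use the one in which the buyer inspects them in decreasing order of $r_j=\expectt{X_j\mid X_j>\mu_1}$. This is the seller-optimal tie-breaking by \Cref{lem:gitins-index-tiebreak-mandatory}, so it is the order the buyer follows under our convention. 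The buyer continues while every observed value is at most $\mu_1$, and halts at the first box realizing $r_j$. Because of the ordering, that $r_j$ is the largest realized value exceeding $\mu_1$. If no box realizes high, the buyer takes alternative $1$ with expected value $\mu_1$.

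\textbf{Conclusion.} The expected selected value is therefore $\expectt{\max\{\mu_1,\max_{i\ge 2}X_i\}}$. Subtracting the utility $\mu_1$ from Step 2 gives the claimed revenue. One point needs care: the utility identity uses the expected reward $\mu_1$ of an uninspected alternative $1$. Since $X_1$ is independent of everything observed, its conditional expectation at the time of selection is still $\mu_1$, so the accounting holds in expectation.
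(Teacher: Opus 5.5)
Your proposal is correct and follows essentially the same route as the paper: treat the never-inspected alternative $1$ as a free deterministic box of value $\mu_1$, use \Cref{fact:utility} with all indices equal to $\mu_1$ to get buyer utility $\mu_1$, order alternatives $2,\dots,n$ by decreasing $r_j$ so the selected value is $\max\{\mu_1,\max_{i\ge 2}X_i\}$, and take revenue as value minus utility. Two minor refinements are harmless: you treat boxes with $r_j\le\mu_1$ separately, and you justify the order via \Cref{lem:gitins-index-tiebreak-mandatory}, whereas the paper simply enforces that order.
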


%% file: conclusion.tex
\section{Conclusion}\label{sec:conclusion}

In this work, we initiated the study of revenue maximization in settings where a seller prices access to information and a buyer responds by solving a Pandora's Box sequential search problem. We identified a natural class of pricing mechanisms---\emph{uniform-index} pricings, which equalize the induced Weitzman index across alternatives---and proved that this class attains a constant-factor approximation to the optimal revenue while remaining amenable to efficient optimization. 



Our work suggests many directions for future investigation, including tightening the approximation factors we achieve. Two directions are particularly exciting, but appear challenging: approximating revenue when a mandatory inspection buyer faces a general matroid constraint; and approximating revenue from an optional inspection buyer with arbitrary alternative distributions.

%% file: appendix.tex
\input{appendix_mandatory}
\input{appendix_mandatory_ext}

\input{appendix_optional}

\input{appendix_examples}

%% file: appendix_mandatory.tex
\section{Omitted Proofs from~\Cref{sec:mandatory}}\label{app:proofs-for-mandatory}

\subsection{Optimal Prices Under Special Instances (Proof of~\Cref{lem:exact-opt-mandatory-special-cases})}\label{appendix:mandatory-special-proofs}

The proof of this theorem comes from the following two lemmas, combined with the cases of two-point and MHR distributions that we analyzed in the main body.

\begin{lemma}\label{lem:2-alternatives}
    If there are only 2 alternatives, then a uniform-index pricing is optimal.
\end{lemma}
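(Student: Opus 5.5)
We take an optimal pricing $\vp$ with indices $\vg=(g_1,g_2)$, relabeled so that $g_1\ge g_2\ge 0$. By the convention of \Cref{sec:preliminaries} we may assume both indices are non-negative: an alternative with negative index is never inspected, and raising its index to $0$ places it last without affecting the other term. By \Cref{claim:revenue-exp-mandatory},
\[\optrev=\rev(\vg)=\E[(X_1-g_1)^+]+\E[(X_2-g_2)^+]\cdot F_1(g_2).\]
Note that the first term does not involve $F_2$ at all, because alternative $1$ is always inspected first.

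The plan is to lower $g_1$ to $g_2$ while keeping the same order, with alternative $1$ still inspected first. This order remains consistent with the new indices $(g_2,g_2)$, since ties are broken in the seller's favor. The seller is therefore allowed to keep alternative $1$ first, and the resulting revenue can only be at least the revenue of that particular order.

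Under the uniform-index pricing $\vp(g_2)$, the revenue is at least
\[\E[(X_1-g_2)^+]+\E[(X_2-g_2)^+]\cdot F_1(g_2).\]
The second term is unchanged, because the probability of inspecting alternative $2$ is $F_1(g_2)$, which depends only on $g_2$. The first term weakly increases, because $\E[(X_1-g)^+]$ is non-increasing in $g$ and $g_2\le g_1$. Hence $\rev(\vp(g_2))\ge\optrev$, so the uniform-index pricing $\vp(g_2)$ is optimal.

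The only step that needs care is the degenerate case $p_1=0$. Here the convention $g_1=\max\support(X_1)$ applies, and we must check that lowering the index is still legitimate. It is: lowering $g_1$ corresponds to setting price $\E[(X_1-g_2)^+]\ge 0$, which is well defined. The case where alternative $2$ has a negative index (effectively only one alternative priced) is covered by the $g_2=0$ reduction above. I expect no real obstacle; the argument is the exchange observation already sketched in the introduction.
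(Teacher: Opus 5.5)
Your proposal is correct and matches the paper's proof. Both lower $g_1$ to $g_2$ while keeping alternative $1$ first. Alternative $1$'s contribution $\E[(X_1-g_1)^+]$ then weakly increases, while alternative $2$'s inspection probability $F_1(g_2)$ is unchanged. Your extra remarks on negative indices and the $p_1=0$ convention are harmless bookkeeping that the paper leaves implicit.
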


\begin{proof}
    Consider the index vector $\vg$ corresponding to the optimal prices, and order the alternatives so that $g_1 \ge g_2$. Note that the revenue contribution of alternative $1$ is exactly $\E[(X_1 - g_1)^+]$. This quantity weakly increases by setting $g_1$ to $g_2$. Furthermore, the probability of inspecting alternative $2$ is $\probb{X_1 \le g_2}$ and is not affected by the choice of $g_1\ge g_2$, proving our claim.
\end{proof}

\begin{lemma}\label{lemma:mandatory-iid}
    If all alternatives are identically distributed, then a uniform-index pricing is optimal.
\end{lemma}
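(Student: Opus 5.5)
For identically distributed alternatives, the revenue expression from \Cref{claim:revenue-exp-mandatory} becomes separable, and a single-crossing argument then merges all distinct index values into one. Write $P(g):=\E[(X-g)^+]$ and $F$ for the common CDF. Fix an optimal index vector and relabel so that $g_1\ge\dots\ge g_n\ge 0$. Then $\prod_{j<i}F_j(g_i)=F(g_i)^{i-1}$, so
\[\rev(\vg)=\sum_{i=1}^n r_i(g_i),\qquad r_i(x):=P(x)\,F(x)^{i-1}.\]
The only coupling between the terms is the ordering constraint $g_1\ge\dots\ge g_n$. Tie-breaking is irrelevant here, since all alternatives are identical.

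Group the positions into maximal blocks of equal index value. I will show that whenever there are at least two blocks, two adjacent blocks can be merged into one without decreasing revenue. Repeating this strictly reduces the number of distinct values, so it terminates at a uniform-index vector $\vp(g)$ with $\rev(\vp(g))\ge\optrev$.

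\textbf{Merging step.} Take two adjacent blocks. Positions $s,\dots,k$ have value $a$, positions $k+1,\dots,t$ have value $b$, and $a>b$. Positions outside $[s,t]$ are untouched, and any new common value $a$ or $b$ lies between the neighbouring blocks' values, so monotonicity is preserved. It therefore suffices to compare three local contributions:
\[
\sum_{j=s}^k r_j(a)+\sum_{j=k+1}^t r_j(b),\qquad \sum_{j=s}^t r_j(a),\qquad \sum_{j=s}^t r_j(b).
\]
The key observation is a single-crossing property. Since $F(a)\ge F(b)$, the ratio
\[
\rho_j:=\frac{r_j(a)}{r_j(b)}=\frac{P(a)}{P(b)}\Big(\frac{F(a)}{F(b)}\Big)^{j-1}
\]
is nondecreasing in $j$. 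If $\rho_{k+1}\ge1$, then $r_j(a)\ge r_j(b)$ for every $j>k$, so raising the second block to $a$ does not decrease revenue. Otherwise $\rho_j\le\rho_{k+1}<1$ for all $j\le k$, so $r_j(b)>r_j(a)$ on the first block and lowering it to $b$ does not decrease revenue.

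\textbf{Main obstacle.} The main care needed is the degenerate cases in the ratio. When $F(b)=0$ or $P(b)=0$, $\rho_j$ is $+\infty$ or undefined. I would handle these directly, or phrase the argument as the sign of $r_j(a)-r_j(b)$, which changes at most once as $j$ increases (from negative to positive). This is equivalent to the ratio argument and avoids division. I also need to note that merging never makes any index negative, which holds because $a,b\ge0$. With these checks, the iteration yields the claimed optimal uniform-index pricing.
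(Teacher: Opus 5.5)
Your proof is correct and rests on the same key fact as the paper's: because $F(a)\ge F(b)$, the sign of $P(a)F(a)^{j-1}-P(b)F(b)^{j-1}$ can only switch from negative to nonnegative as $j$ increases (the paper shows this by contradiction at two consecutive positions and then moves one index at a time, whereas you merge whole adjacent blocks). Your block-level merge, with the division-free step $r_{j+1}(a)=F(a)\,r_j(a)\ge F(a)\,r_j(b)\ge F(b)\,r_j(b)=r_{j+1}(b)$ covering the degenerate cases, gives a cleaner termination argument than the paper's ``repeat the same argument'' step.
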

\begin{proof}
    Let $X$ denote the random reward of the $n$ identically distributed alternatives, let $F$ be the common CDF and let $E(z):=\E[(X-z)^+]$. Now fix any optimal pricing $\vp$ and let $\vg$ be the corresponding indices with $g_1\geq \dots \geq g_n \ge 0$. Assume that $\vg$ is not uniform and let $i\in [n]$ be the smallest index for which $g_i>g_{i+1}$, so we have that $g_1=g_2=\dots = g_i > g_{i+1}\geq g_{i+2}\geq\dots \geq g_n$. 

    Now, let $\vp_+$ and $\vp_-$ be the prices corresponding to indices $\vec{g_+}=(g_1,\dots, g_i, g_i, g_{i+2},\dots , g_n)$ and $\vec{g_-}=(g_1,\dots, g_{i-1},g_{i+1}, g_{i+1}, g_{i+2},\dots , g_n)$ respectively. If we can show that either $\vp_+$ or $\vp_-$ achieve at least as large revenue as $\vp$ then we are done; repeating this argument will eventually result in uniform indices. In particular, if $\vp_+$ is better then we have made the first $i+1$ indices identical (instead of the first $i$ indices) and if $\vp_-$ is better then we can repeat the same argument to set $g'_1=g'_2=\dots =g'_{i}=g_{i+1}$ and make the first $i + 1$ indices identical~--- this last part assumes that the argument will not depend on the value of $i$.

    For simplicity, let $a=g_1=g_2=\dots = g_i$ and $b=g_{i+1}$. We have:
    \[\rev(\vp)=E(a) + E(a)F(a) + E(a)F^2(a) + \dots + E(a)F^{i-1}(a) + E(b)F^i(b) + \dots \]

     \[\rev(\vp_+)=E(a) + E(a)F(a) + E(a)F^2(a) + \dots + E(a)F^{i-1}(a) + E(a)F^i(a) + \dots \]

    \[\rev(\vp_-)=E(a) + E(a)F(a) + E(a)F^2(a) + \dots + E(b)F^{i-1}(b) + E(b)F^i(b) + \dots \]
    where the last dotted part corresponds to what happens after the first $i+1$ alternatives have been inspected; since the remaining indices are the same in all cases these terms cancel out. Therefore, we want to show that
    \[E(a)F^{i-1}(a) + E(b)F^i(b) \leq \max\bigg(E(a)F^{i-1}(a) + E(a)F^i(a), E(b)F^{i-1}(b) + E(b)F^i(b)\bigg)\]
    for all distributions $X$ and all $a>b$, independently of $i$. Assume that this is not true. Then, we have that
    \[E(b)F^i(b)> E(a)F^i(a)\quad\quad\text{and}\quad\quad E(a)F^{i-1}(a) > E(b)F^{i-1}(b)\]
    which implies that $F(b)>F(a)$, which is a contradiction as $a>b$.
\end{proof}

\subsection{Uniform-Index Prices Are Not Optimal (Proof of~\Cref{thm:const-gap-mandatory-unif-id})}\label{app:unif-index-lb}

Fix a large integer $n$. We consider an instance $\inst$ over $2n$ alternatives $X_1,\dots , X_n$ and $Y_1,\dots , Y_n$ partitioned into two bundles of identically distributed alternatives with probability mass functions
\[
X_i = 
\begin{cases}
1 &\text{w.p. }\; \alpha/n \\
z &\text{w.p. }\; \beta/n \\
0 &\text{otherwise} \\
\end{cases}
\quad \text{ and } \quad
Y_i = 
\begin{cases}
1 &\text{w.p. }\; \gamma/n \\
z &\text{w.p. }\; \delta/n \\
0 &\text{otherwise} \\
\end{cases}
\]
for some $z\in (0,1)$ and parameters $\alpha,\beta,\gamma,\delta>0$ with $\alpha+\beta < n$ and $\gamma+\delta < n$ to be specified later. 

As we establish in~\Cref{lem:gittins-index-in-support-mandatory}, the optimal uniform-index pricing will set an index $g\in\{0,z,1\}$; setting $g=1$ corresponds to prices $p_i=0$ for all $i$, therefore we can exclude that option as it achieves a revenue of $0$. Furthermore, from~\Cref{lem:gitins-index-tiebreak-mandatory}, we can assume without loss that under a uniform-index pricing, alternatives of the same type will be inspected in sequence, i.e. the buyer inspects all alternatives $X_i$ before any alternative $Y_j$ or vice versa.

When we set $g = z$, either order gives the optimal revenue here. Assuming the $X$ alternatives go first, then the expected revenue under $g = z$ is

\begin{align*}
    \rev_z &= \sum_{i=1}^n \expectt{(X_i-z)^+}\cdot \prod_{j=1}^{i-1}\probb{X_i\leq z} + \left(\prod_{i=1}^n \probb{X_i\leq z}\right)\cdot \sum_{i=1}^n \expectt{(Y_i-z)^+}\cdot \prod_{j=1}^{i-1}\probb{Y_i \leq z} \\
    &= (1-z)\cdot\left[1-\left(1 - \frac{\alpha}{n}\right)^n\cdot \left(1 - \frac{\gamma}{n}\right)^n\right].
\end{align*}

\noindent On the other hand, for $g = 0$, the revenue of sending either $X$ or $Y$ alternatives first are

\begin{align*}
    \rev_{0,X} &= \sum_{i=1}^n \expectt{(X_i-0)^+}\cdot \prod_{j=1}^{i-1}\probb{X_i= 0} + \left(\prod_{i=1}^n \probb{X_i= 0}\right)\cdot \sum_{i=1}^n \expectt{(Y_i-0)^+}\cdot \prod_{j=1}^{i-1}\probb{Y_i = 0} \\
    &=  \frac{\alpha + \beta z}{\alpha+\beta}\cdot\left[1-\left(1 - \frac{\alpha+\beta}{n}\right)^n\right] + \left(1 - \frac{\alpha+\beta}{n}\right)^n\cdot\frac{\gamma + \delta z}{\gamma+\delta}\cdot\left[1-\left(1 - \frac{\gamma+\delta}{n}\right)^n\right]
\end{align*}
and
\begin{align*}
    \rev_{0,Y} &= \frac{\gamma +\delta z}{\gamma+\delta}\cdot\left[1-\left(1 - \frac{\gamma+\delta}{n}\right)^n\right] + \left(1 - \frac{\gamma+\delta}{n}\right)^n\cdot\frac{\alpha + \beta z}{\alpha+\beta}\cdot\left[1-\left(1 - \frac{\alpha+\beta}{n}\right)^n\right]
\end{align*}
respectively. Which of the two orders is preferred by the seller depends on whether $\E[X_i \mid X_i > 0] > \E[Y_i \mid Y_i > 0]$ or not. Therefore, we conclude that $\max_{g\in\R}\rev(\vp(g)) = \max\{\rev_z, \rev_{0,X}, \rev_{0,Y}\}$.

Up next, we lower bound the optimal revenue by the best of two non uniform-index mechanisms: one that sets $g_i=z$ for all $X$-alternatives and $g_i=0$ for all $Y$-alternatives, and one that sets $g_i=0$ for all $X$-alternatives and $g_i=z$ for all $Y$-alternatives. For the former pricing, we compute the corresponding revenue to be
\begin{align*}
    \rev_1 &= \sum_{i=1}^n \expectt{(X_i-z)^+}\cdot \prod_{j=1}^{i-1}\probb{X_i\leq z} + \left(\prod_{i=1}^n \probb{X_i=0}\right)\cdot \sum_{i=1}^n \expectt{(Y_i-0)^+}\cdot \prod_{j=1}^{i-1}\probb{Y_i = 0} \\
    &= (1-z)\cdot\left[1-\left(1 - \frac{\alpha}{n}\right)^n\right] + \left(1 - \frac{\alpha+\beta}{n}\right)^n\cdot\frac{\gamma + \delta z}{\gamma+\delta}\cdot\left[1-\left(1 - \frac{\gamma+\delta}{n}\right)^n\right]
\end{align*}
and similarly, the revenue obtained from the second pricing is
\begin{align*}
    \rev_2 
    &= (1-z)\cdot\left[1-\left(1 - \frac{\gamma}{n}\right)^n\right] + \left(1 - \frac{\gamma+\delta}{n}\right)^n\cdot\frac{\alpha + \beta z}{\alpha+\beta}\cdot\left[1-\left(1 - \frac{\alpha+\beta}{n}\right)^n\right]\quad\quad\quad\quad\quad\quad\;\;\;
\end{align*}
from which we obtain $\optrev\geq \max(\rev_1,\rev_2)$. 

We therefore conclude that 
\[\frac{\optrev}{\max_{g\in\R}\rev(\vp(g))}\geq \frac{\max\{\rev_1,\rev_2\}}{\max\{\rev_z, \rev_{0,X}, \rev_{0,Y}\}}=:L(n,z,\alpha,\beta,\gamma,\delta)\]
as from the previous equations, for all parameters $n\geq 1$, $z\in (0,1)$ $\alpha,\beta,\gamma,\delta>0$ with $\alpha+\beta < n$ and $\gamma+\delta < n$. Unfortunately, we don't know how to optimize this lower bound analytically; even if we take $n\rightarrow \infty$ so that $(1-\lambda/n)^n\approx e^{-\lambda}$, these expressions are extremely non-linear and no closed form optimizer could be found. 

Instead, we implemented the lower bound expression in Python and performed a greedy grid-search to find a good set of parameters. Even for reasonable parameters, we obtain that uniform-index mechanisms are suboptimal in general: setting $z=0.01$, $\alpha = 0.033$, $\beta=0.8$, $\gamma = 0.0001$, $\delta=6$ and $n=1000$ yields $L(n,z,\alpha,\beta,\gamma,\delta)\geq 1.131$. By expanding our optimization grid to include more extreme parameters, we conclude our proof by obtaining $L(n,z,\alpha,\beta,\gamma,\delta)>1.14$ for $z=6.82\cdot 10^{-7}$, $\alpha = 7.5\cdot 10^{-11}$, $\beta = 1.2\cdot 10^{4}$, $\gamma = 2.27\cdot 10^{-6}$, $\delta = 7.625\cdot 10^{-1}$ and $n=10^6$.

\subsection{Optimal Ordering of the Alternatives (Proof of~\Cref{lem:gitins-index-tiebreak-mandatory})}

Fix any instance $\inst$, and consider any index vector $\vg\in\R^n_+$; we re-label the alternatives so that the buyer inspects them in increasing order of label; note, $g_1 \ge g_2 \ge \dots \ge g_n$. Under this buyer's order, the revenue obtained by the seller is
\[A = \sum_{i=1}^n \E[(X_i - g_i)^+] \prod_{j < i} F_{j}(g_i).\]

Now assume that there exists $k$ such that $g_k = g_{k + 1} = g$. By swapping the buyer's order for alternatives $k$ and $k+1$, their utility will not be affected whereas the seller's new revenue becomes

\[B = A + \E[(X_k - g)^+] \prod_{j < k} F_{j}(g) (F_{{k+1}}(g) - 1) + \E[(X_{k + 1} - g)^+] \prod_{j < k} F_{j}(g) (1 - F_{k}(g)),\]
since the only changes in the revenue expression are the probabilities of inspecting alternatives $k$ (which is now possibly blocked by alternative $k + 1$) and $k+1$ (which is now no longer blocked by alternative $k$). In order for this swap to weakly improve the revenue (i.e. $B\ge A$), we must have

\[ \E[(X_{k + 1} - g)^+] \cdot (1 - F_{k}(g)) \ge \E[(X_k - g)^+] \cdot (1-F_{{k+1}}(g))\]
which translates to
\[
\frac{\E[(X_{k + 1} - g)^+]}{1-F_{{k+1}}(g)}\ge \frac{\E[(X_k - g)^+]}{1 - F_{k}(g)}
\]
or equivalently, $\E[X_{k + 1} \mid X_{k + 1} > g] \ge \E[X_k \mid X_k > g]$. We therefore arrive at our claim, since this comparison is transitive.

\subsection{Discretization of Weitzman Indices (Proof of~\Cref{lem:discretize-support-mandatory})}\label{app:discetization}

Fix any price vector $\vp$ and its corresponding index vector $\vg$. Assume the alternatives are ordered such that the buyer's strategy goes from alternative $1$ to alternative $n$. Note that the revenue contribution of alternative $i$ is exactly
\[\E[(X_i - g_i)^+] \prod_{j < i} F_{j}(g_i).\]

Now, construct $\vg'$ by rounding up $\vg$ to the grid points in $S_{\eps}$. Specifically, we set the index of each alternative $i$ to $g'_i = \min\{U, \lceil g_i/\eps \rceil \eps\}$. Observe that we still have $g'_1\ge \dots \ge g'_n$, so the order from $1$ to $n$ is still compatible under $\vg'$. Furthermore, as $g_i \le g'_i \le g_i + \eps$, we have that the new prices satisfy $p'_i=\E[(X_i - g'_i)^+] \ge \E[(X_i - g_i)^+] - \eps$ and also that $1 \ge \prod_{j < i} F_{j}(g'_i) \ge \prod_{j < i} F_{j}(g_i)$. Therefore, the additive loss of revenue contribution of each alternative $i$ under $\vg'$ is at most $\eps$, thus proving our lemma.

\subsection{Uniform Price Incurs Unbounded Gap (Proof of~\Cref{thm:uniform-cost-gap})}\label{app:unif-prices-gap}

Fix a probability $q\in (0,1)$ and consider an instance $\inst$ composed of $n$ Bernoulli alternatives $X_i\sim A_i\cdot\mathrm{Ber}(q)$ taking value $A_i = (1-(1-q)^i)^{-1}$ with probability $q$ and value $0$ with probability $(1-q)$. Note that the alternatives are already labeled in decreasing order of $A_i = \expectt{X_i|X_i>0}$. 

We first turn our attention to uniform prices, parametrized by a single price $p> 0$. The corresponding indices of the alternatives satisfy $p = \expectt{(X_i-g_i)^+}$ and are therefore given by $g_i = A_i - p/q$; alternatives with $A_i<p/q$ are never inspected by the buyer. Therefore, for a given price $p$, the buyer inspects the alternatives in decreasing order of $A_i$, starting from $X_1$ until the maximum coordinate $k$ for which $A_k\geq p/q$. Since $g_i<A_i$, if an alternative gets realized to the high value the buyer immediately accepts it. If on the other hand it gets realized to $0$, the buyer continues by inspecting the next alternative. Therefore, we have that 
\[\rev(p,\dots,p) = \sum_{i=1}^{k(p)} p\cdot \prod_{j=1}^{i-1}(1-q) = \sum_{i=1}^{k(p)} p\cdot (1-q)^{i-1}\]
where $k(p)=\max\{i:A_i\geq p/q\}$. It is straightforward to see that across all prices $p$ with $k(p)=k$, the one that achieves maximum revenue is $p=A_kq$. Therefore,
\[\max_{p\geq 0}\rev(p,\dots ,p) = \max_{k\in [n]}\sum_{i=1}^{k} A_kq (1-q)^{i-1} = \max_{k\in [n]}A_k(1-(1-q)^k)=1\]

Next, we compute the optimal revenue. By setting $p_i=\mu_i$ for all $i\in [n]$, all the indices become $0$ and so does the buyer's utility (\Cref{fact:utility}). Furthermore, by inspecting the alternatives in decreasing order of $A_i$, we can ensure that the buyer accepts an alternative of value $\max_i X_i$ with probability $1$. Therefore, since these prices simultaneously minimize the buyer's utility and maximize their value, they also maximize the seller's revenue. We conclude that
\[\optrev =\sum_{i=1}^n\mu_i\cdot \prod_{j<i}\probb{X_j=0} = \sum_{i=1}^n A_i q (1-q)^{i-1} = \sum_{i=1}^n \frac{q (1-q)^{i-1}}{1-(1-q)^{i}}=f(q).\]
Therefore, the optimal revenue for $\inst$ is $f(q)$-times larger than the revenue of the optimal uniform-price mechanism. It remains to select the probability $q$.

First, note that $(1-q)^{i-1} \geq 1 - (i-1)q \geq 1-nq\geq 1/2$ for all $q\leq \frac{1}{2n}$. Similarly, we have that $(1-q)^i\geq 1 - iq$ and therefore $1-(1-q)^i \leq iq$. Combining everything, for all $q\leq \frac{1}{2n}$ we have
\[f(q) \geq \sum_{i=1}^n\frac{\frac{1}{2}q}{iq} = \frac{1}{2}\sum_{i=1}^n \frac{1}{i} = \frac{1}{2}H_n\]
and the proof follows.

%% file: appendix_mandatory_ext.tex
\section{Omitted Proofs from~\Cref{sec:mandatory-ext}}\label{app:proofs-for-mandatory-ext}

\subsection{Seller Costs and Outside Options (Proof of~\Cref{lem:4-approx-mandatory-unif-id-sellercost})}\label{app:seller_cost_proof}

Fix any instance $\inst$ and let $\vp$ be the revenue-maximizing prices, with corresponding indices $\vg$. As usual, alternatives are re-labeled so that the buyer inspects them in increasing order of label, that is, $g_1\ge \dots \ge g_n$. We have
\[\optrev = \sum_{i\in [n]: g_i\ge y} (\E[(X_i - g_i)^+]-c_i)\cdot \prod_{j < i} F_{j}(g_i).\]

We first note that $p_i = \E[(X_i - g_i)^+] \ge c_i$ for all $i\in [n]$; otherwise, the corresponding terms are strictly negative and by setting $p'_i=\infty$ for all $i\in S=\{i\in [n]: p_i<c_i\}$ these negative terms will vanish (as the buyer will never inspect) while the probabilities of inspecting all other alternatives $k\notin S$ with $g_k\ge y$ will (weakly) increase from $\prod_{j<k}F_j(g_{k})$ to $\prod_{j<k; j\notin S}F_j(g_{k})$; therefore the total revenue would strictly increase, contradicting the optimality of $\vp$.

Therefore, every term in the optimal revenue is non-negative. Let $\lambda_i=\prod_{j<i}F_j(g_i)$ be the probability of inspecting alternative $i$. Since the indices $g_i$ are non-increasing in $i$, the same will be true for the probabilities $\lambda_i$. Let $k \in [n]$ be the last index where $\lambda_k \ge \frac{1}{2}$, which means $\lambda_{k+1} < \frac{1}{2}$. By replicating the proof of~\Cref{thm:4-approx-mandatory-unif-id}, we can show that the revenue we collected from alternatives $k+1$ through $n$ is at most half the optimal revenue, and that by setting the index of alternatives $1$ through $k$ to $g_k$ their revenue contribution decreases by a factor of at most $2$. This means that the revenue obtained from alternatives $1$ through $k$ under uniform-index $g_k$ is at least $\frac{\optrev}{4}$. Note however that setting the same index to alternatives $k + 1$ through $n$ may introduce net losses to the revenue; in this case we can simply remove the alternatives contributing losses by setting their price to $\infty$.

Now the remaining task is to efficiently compute a price vector that achieves at least the revenue of the above price vector. Using the same argument as \Cref{lem:gittins-index-in-support-mandatory}, we can show that the optimal index $g$ lies in the set $S = \{0\} \cup \bigcup_{i=1}^n \support(X_i)$. By an exchange argument similar to that in \Cref{lem:gitins-index-tiebreak-mandatory}, where the new tie-breaking rule is now by decreasing $\frac{\E[(X_i - g_i)^+] - c_i}{1 - F_i(g_i)}$, we know that we can efficiently compute the expected revenue of any index vector $\vg$. Hence, we can now compute efficiently a price vector that $4$-approximates the optimum.

\subsection{Approximation via Uniform Indices for Uniform Matroids (Proof of \Cref{lem:4-approx-mandatory-unif-id-unifmatroid})}\label{app:uniform-matroid-proof}

Fix any instance $\inst$ corresponding to alternatives $X_1,\dots , X_n$ and let $\matroid$ be the $k$-uniform matroid for some $k\in \{1,\dots, n\}$. Let $\vg = (g_1, g_2, \dots, g_n)$ be the indices induced by any revenue maximizing prices and re-label the alternatives so that the optimal buyer response inspects them in increasing order of label, that is, $g_1 \ge g_2 \ge \dots \ge g_n$. Recall the definition of the random set 
\[
S_i \;:=\; \{\, j<i : \text{$j$ was inspected and } X_j > g_i \,\}.
\]
Since $M$ is the $k$-uniform matroid, we therefore have that $\Pr[i \notin \spn(S_i)] = \Pr[|S_i|<k] = \Pr[|\{j < i : X_j > g_i\}| < k]$. Therefore, we can derive the optimal revenue as
\begin{align*}
    \optrev = \rev(\vg) = \sum_{i=1}^n \E[(X_i - g_i)^+] \cdot \Pr[|\{j < i : X_j > g_i\}| < k]
\end{align*}

Let $\lambda_i := \Pr[|\{j < i : X_j > g_i\}| < k]$ be the probability of inspecting alternative $i$. Since the indices $g_i$ are non-increasing in $i$, the same will be true for the probabilities $\lambda_i$. Let $j \in [n]$ be the last index where $\lambda_j \ge \frac{1}{2}$, which means $\lambda_{j+1} < \frac{1}{2}$. By replicating the proof of~\Cref{thm:4-approx-mandatory-unif-id}, we can show that the revenue we collected from alternatives $j+1$ through $n$ is at most half the optimal revenue, and that by setting the index of alternatives $1$ through $j$ to $g_j$ their revenue contribution decreases by a factor of at most $2$; the arguments are precisely the same and the details are therefore omitted. From this, we obtain that 
\[\rev(g_j) \ge \frac{1}{4}\cdot\optrev.\]

Now the remaining task is to efficiently compute the maximum possible revenue over all uniform-index pricings. Using the exact same arguments as in the proof of \Cref{lem:gittins-index-in-support-mandatory}, we can show that the optimal index $g$ must lie in the set $S=\{0\} \cup \bigcup_{i=1}^n \support(X_i)$ and we can therefore efficiently iterate over all $g\in S$ for discrete distributions (or discretized continuous distributions). Given such a $g\in S$, in order to compute the revenue $\rev(g)$ we only need to compute the revenue-maximizing tie-breaking order and then compute the spanning probabilities under this order (which can also be done efficiently). Using the same exchange argument as in the proof of~\Cref{lem:gitins-index-tiebreak-mandatory}, one can show that the alternatives should still be inspected in decreasing $\E[X_i\mid X_i > g]$, completing the proof.

%% file: appendix_optional.tex
\section{Omitted Proofs from~\Cref{sec:optional}} \label{app:proofs-for-optional}

\subsection{Omitted Proofs for Identical Alternatives Under Optional Inspection}\label{app:proofs-for-optional-iid}

In this section, we present the omitted proofs from~\Cref{sec:optional-iid}, namely~\Cref{thm:optional-iid-2} and~\Cref{thm:optional-iid-approx}.

\begin{proof}[Proof of~\Cref{thm:optional-iid-2}]

We will analytically derive the optimal revenue for $n=2$ identical alternatives under optional inspection and show that the corresponding prices are uniform and efficiently computable. We use $X_1,X_2\sim\dist$ for the values of the alternatives and $\mu$ for their expectation. From now on, let $\vp = (p_1,p_2)$ with $0\leq p_1\leq p_2$ be any set of prices. Clearly, if the buyer inspects at least one alternative, they will first inspect the cheapest one (due to the identical distributions). We will derive a closed-form expression for $\rev(\inst , \vp)$.

Let's condition on the buyer having already inspected the first alternative by paying $p_1$ and having observed a value of $x_1$. At this point, the buyer can either select it (at a reward of $x_1$), or select the second alternative without inspection (at an expected reward of $\mu$) or inspect the second alternative by paying $p_2$ and then accept the best out of the two alternatives. Therefore, the second alternative is only inspected if $-p_2 + \expectt{\max\{x_1,X_2\}}\ge \max\{x_1,\mu\}$, or equivalently,
\[p_2 \leq W(x_1):= \expectt{\max\{x_1,X_2\}} - \max\{x_1,\mu\}.\]
Furthermore, the expected (remaining) utility of the buyer conditioned on $x_1$ will be precisely $\max\{x_1,\mu, -p_2 + \expectt{\max\{x_1,X_2\}}$, or equivalently,
\[
U(x_1):= \max(x_1,\mu) + (W(x_1)-p_2)^+
\]

We now consider the buyer's decision on whether to inspect the first alternative. Once again, the options are to either select any of the two alternatives uninspected (gaining reward $\mu$) or inspect the first one. Therefore, the first alternative is only inspected if $-p_1 + \expectt{U(X_1)} \ge \mu$, or equivalently,
\[p_1 \le \tau(p_2) := \expectt{U(X_1)} - \mu = \expectt{(X_1-\mu)^+} + \expectt{(W(X_1)-p_2)^+} .\]

Combining everything, we obtain that 
\[ \rev(\inst,\vp) =
\begin{cases}
 p_1 + p_2\cdot\probb{p_2\le W(X_1)}  &\text{if } p_1 \le \tau(p_2) \\
 0  &\text{otherwise} \\
\end{cases}
\]
Therefore, to maximize revenue conditioned on the price $p_2$, we should set the first price as high as possible conditioned on $p_1\le \tau(p_2)$ and $p_1\le p_2$, i.e. set $p_1 = \min\{p_2,\tau(p_2)\}$.

We now turn our attention to optimizing over the price $p_2$. Since we have already derived $p_1$ as a function of $p_2$, this is a one-dimensional problem corresponding to maximizing
\[R(p_2)=\min\{p_2,\tau(p_2)\} + p_2 \cdot\probb{p_2\le W(X_1)}.\]

Note that the function $g(p_2):= \tau(p_2)-p_2 = \expectt{(X_1-\mu)^+} + \expectt{(W(X_1)-p_2)^+}-p_2$ is clearly strictly decreasing on $p_2$ and since $g(0) > 0$ and $g(\infty)=-\infty$, it will have a unique root $p^*$ satisfying $p^*=\tau(p^*)$. We distinguish between two different cases:
\begin{itemize}
    
    \item If $p_2 \ge p^*$ then $g(p_2) \le 0$ which implies $\tau(p_2)\le p_2$ and therefore
    \begin{align*}
        R(p_2) &= \tau(p_2) + p_2\cdot\probb{p_2\le W(X_1)} \\
        &= \expectt{(X_1-\mu)^+} + \expectt{(W(X_1)-p_2)^+} + p_2\cdot\probb{p_2\le W(X_1)} \\
        &= \expectt{(X_1-\mu)^+} + \expectt{W(X_1)\cdot\mathbbm{1}[W(X_1)\ge p_2]}
    \end{align*}
    with the last equality following from the identity
    $(W-t)^+ + t\mathbf{1}_{\{W\ge t\}} = W\mathbbm{1}[W\ge t]$. From this expression, we can see that $R(p_2)$ is (weakly) decreasing in $p_2$ and will therefore be maximized at $p_2=p^*$.

    \item If $p_2 \le p^*$ then $g(p_2)\ge 0$ which implies $\tau(p_2)\ge p_2$ and therefore
    \[R(p_2) = p_2 + p_2\cdot\probb{p_2\le W(X_1)}.\]
    Observe that this is precisely the revenue that we would obtain had we set uniform prices $(p_2,p_2)$.
\end{itemize}

Combining everything, we summarize that there exists a uniform pricing $\vp = (p,p)$ with $p\in [0,p^*]$ that achieves optimal revenue. Furthermore, the corresponding revenue as a function of $p$ will be
$R(p) = p(1+\probb{p\le W})$
where $W$ is a random variable corresponding to first sampling $x\sim\dist$ and then returning $W(x) = \expectt{\max\{x,X\}} - \max\{x,\mu\}$. It is not hard to see that the function $R(p)$ is piece-wise linear with break-points corresponding to the support of $W$. Therefore, there exists an optimal price $p$ that lies in the support of $W$; by iterating over all prices $p\in\support(W)$; ignoring any prices above $p^*$ (which we can efficiently compute); and comparing the corresponding revenues $R(p)$, we can efficiently find a price $p$ such that $\rev(\inst, (p,p)) = \optrev(\inst)$.
\end{proof}

\begin{proof}[Proof of~\Cref{thm:optional-iid-approx}] Let $\inst_\optlabel$ denote any \textit{optional inspection} instance over $n\geq 2$ identical alternatives of distribution $\dist$ and let $\mu$ denote the expected value of the alternatives. We consider a \textit{mandatory inspection} instance $\inst_\manlabel^{n-1}$ composed of $(n-1)$-alternatives of distribution $\dist$ and an \textit{outside option} $y=\mu$ that the buyer can accept at any point and halt, at no extra cost. Since the boxes are identically distributed and we are in the mandatory inspection setting, we can use~\Cref{lem:exact-opt-mandatory-special-cases} to efficiently compute\footnote{The proof of~\Cref{lem:exact-opt-mandatory-special-cases} is stated for $y=0$ (i.e. in the absence of an outside option) but the exact same arguments extend to any $y\geq 0$ and the details are therefore omitted.} some $g'\ge y$ such that $\vec{p'}=\vp(g')$ is optimal, i.e. $\optrev(\inst_\manlabel^{n-1}) = \rev(\inst_\manlabel^{n-1},\vec{p'})$. 

Now, consider the price vector $\vp = (\infty ; \vec{p'} )$ for $\inst$ that sets an infinite price for the first alternative and prices alternatives $2$ through $n$. We focus on the behavior of the (optional inspection) buyer under $\vp$. Since the first alternative is priced at $\infty$ and the alternatives are identical (and therefore, they contribute the same expected reward of $\mu$ upon selecting them without inspection), we can assume without loss that (i) the buyer never inspects the first alternative and (ii) if the buyer opts to select an alternative without inspection, it will be the first one. Therefore, the buyer will behave as in the mandatory inspection model over $(n-1)$-alternatives priced at $\vec{p'}$ and an outside option of $y=\mu$ that can be accepted at no cost; i.e., as the buyer in $(\inst_\manlabel^{n-1},\vec{p'})$. From this, we obtain that $\rev(\inst_\optlabel,\vp)=\rev(\inst_\manlabel^{n-1},\vec{p'}) = \optrev(\inst_\manlabel^{n-1})$. Since the prices $\vec{p'}$ can be efficiently computed, the proof will be completed by showing that 
\begin{equation}\label{eq:opt_insp_iid_to_show}
\optrev(\inst_\manlabel^{n-1}) \geq \left(1-\frac{1}{n}\right)\cdot\optrev(\inst_\optlabel).
\end{equation}

To prove this inequality, we consider yet another mandatory inspection instance $\inst_\manlabel^{n}$ comprised by $n$-alternatives of distribution $\dist$ and an outside option of $y=\mu$. Clearly, the optimal revenue we can achieve in $\inst_\manlabel^{n}$ is no less than the optimal revenue in $\inst_\manlabel^{n-1}$ since we have an extra alternative; we first show that it cannot be much larger. In particular, we know that since the alternatives are identically distributed, the optimal revenue in $\inst_\manlabel^{n}$ is also achieved at some uniform index $g''\geq y$. Therefore, we have that
\[\optrev(\inst_\manlabel^{n})=\rev(\inst_\manlabel^{n},g'') = (\expectt{X|X>g''}-g'')\cdot (1-\probb{X\leq g''}^n)\]
where the last equality follows from a straightforward application of~\Cref{claim:revenue-exp-mandatory}. Similarly, we have that 
\[\optrev(\inst_\manlabel^{n-1}) = \rev(\inst_\manlabel^{n-1},g')\geq \rev(\inst_\manlabel^{n-1},g'') = (\expectt{X|X>g''}-g'')\cdot (1-\probb{X\leq g''}^{n-1})\]
and therefore
\begin{equation}\label{eq:opt_insp_iid_add_a_box}
    \frac{\optrev(\inst_\manlabel^{n})}{\optrev(\inst_\manlabel^{n-1})} \leq \frac{1-\probb{X\leq g''}^{n}}{1-\probb{X\leq g''}^{n-1}} \leq \max_{x\in [0,1]}\frac{1-x^n}{1-x^{n-1}} \leq \frac{n}{n-1}
\end{equation}
with the last inequality following by observing that $f(x) = (1-x^n)/(1-x^{n-1})$ is increasing and taking the limit at $x\rightarrow 1$.

Next, we consider the buyer's behavior on instance $\inst_\manlabel^n$ under some set of prices. The key observation is that even if the buyer was in the optional inspection model, their behavior would not change, as there is already an outside option of value $y=\mu$ and therefore we can assume without loss that none of the $n$ alternatives are ever selected without inspection. Therefore, we have $\optrev(\inst_\manlabel^n) = \optrev(\inst_\optlabel')$ where $\inst_\optlabel'$ is an optional inspection instance with $n$ identical alternatives and an outside option of $\mu$. Finally, we will show that $\optrev(\inst_\optlabel)\leq \optrev(\inst_\optlabel')$; once we have that, we also obtain $\optrev(\inst_\optlabel) \leq \optrev(\inst_\manlabel^n)$, and combining with~\eqref{eq:opt_insp_iid_add_a_box}, we immediately obtain~\eqref{eq:opt_insp_iid_to_show} and complete the proof. 

To see why $\optrev(\inst_\optlabel)\leq \optrev(\inst_\optlabel')$, note that both are optional inspection instances over $n$ identical alternatives with outside options $y=0$ and $y=\mu$, respectively. Fix any price vector $\vp$ over the $n$-alternatives and let $p_1\le p_2 \le\dots \le p_n$. We will show that $\rev(\inst_\optlabel,\vp)\leq \rev(\inst'_\optlabel,\vp)$ for all $\vp$. Since the alternatives are identical, the buyer will inspect them in increasing order of label in both instances. We couple the realizations of the alternatives. Focus on any point in time when alternatives $1$ through $j<n$ have been inspected and realized at $x_1,\dots , x_j$. If the buyer decides to not inspect the next box in $\inst_\optlabel'$, it must mean that by accepting $\mu$ and terminating their utility strictly improves. Then, the same will necessarily be true in $\inst_\optlabel$, as the same option of accepting $\mu$ is available (by selecting alternative $j+1$ without inspection) \textit{and} continuing to inspect can only yield as much utility as in $\inst_\optlabel'$ due to the lack of the outside option. Therefore, the probability of inspecting the $j$-th alternative in $\inst_\optlabel$ is upper bounded by the probability of inspecting it in $\inst'_\optlabel$, from which the claim follows.
\end{proof}

\subsection{Redundancy of Reveal Mechanisms under Mandatory Inspection}\label{app:proofs-for-optional-reveal-mandatory}
In this section, we show that in the mandatory inspection setting, the seller never benefits by revealing (and offering) a set of alternatives to the buyer for free, even if the prices of unrevealed alternatives can be adaptive.

\begin{claim}
    Fix any mandatory inspection instance $\inst$. Then, 
    \[\max_{\vp\in\R^n_+}\rev(\inst,\vp) = \max_{\mech \in\revealset}\rev(\inst,\mech).\]
\end{claim}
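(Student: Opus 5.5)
The inequality $\max_{\vp}\rev(\inst,\vp)\le\max_{\mech\in\revealset}\rev(\inst,\mech)$ is immediate: any non-adaptive pricing $\vp$ is the reveal mechanism with $R=\emptyset$ and constant pricing rule $p(\cdot)\equiv\vp$. The work is the reverse inequality. Fix a reveal mechanism $\mech=(R,p(\cdot))$. For every realization $y$ of $y_R=\max_{i\in R}X_i$, the buyer faces a mandatory inspection problem on the alternatives $[n]\setminus R$. In that problem the revealed maximum $y$ acts as a free outside option, and the prices are $p(y)$. By Weitzman's characterization, the revenue conditional on $y_R=y$ equals $\rev(\inst_{-R},y,p(y))$, which is the revenue in the outside-option model of \Cref{sec:mandatory-ext} with zero costs. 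So the goal is to show that for each fixed $y$ this conditional revenue is at most the optimal non-adaptive revenue on the full instance, and then average over $y$.

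The key step compares the conditional problem with a non-adaptive pricing on all of $[n]$. Fix $y$ and the prices $p(y)$ on $[n]\setminus R$, with induced indices $g_i\ge y$ for the alternatives that are actually inspected. I would construct $\vp'$ on the full instance as follows. On $[n]\setminus R$, keep the same indices $g_i$. On each $i\in R$, set the price $p'_i=\E[(X_i-y)^+]$, so that every revealed alternative gets index exactly $y$. Under $\vp'$, the buyer inspects the alternatives of $[n]\setminus R$ in the same order as under $\mech$ (ties broken in the seller's favour, which can only help). Each such alternative is inspected with exactly the same probability as in the conditional problem, because its inspection event depends only on the earlier unrevealed alternatives being at most $g_i$. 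After those, the buyer reaches the alternatives of $R$ at index $y$ and may pay positive prices for them, which only adds nonnegative revenue. Hence $\rev(\inst,\vp')\ge\rev(\inst_{-R},y,p(y))$ for every $y$, and therefore this conditional revenue is at most $\optrev(\inst)$.

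One subtlety needs care. In the conditional problem, the buyer halts once the best value seen exceeds $\max(y,\text{next index})$. Under $\vp'$ there is no outside option, so the buyer might continue inspecting unrevealed alternatives with $g_i<y$ that were never inspected under $\mech$. This can only add revenue, but it may alter the buyer's tie-breaking. To keep the argument clean, I would instead set $p'_i=\infty$ for every unrevealed $i$ with $g_i<y$. That removes them, leaves the inspection probabilities of the remaining unrevealed alternatives unchanged, and preserves the inequality.

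Averaging over $y$ then gives $\rev(\inst,\mech)=\E_{y_R}\left[\rev(\inst_{-R},y_R,p(y_R))\right]\le\optrev(\inst)$, which is the claimed equality. The main obstacle I expect is verifying rigorously that the inspection probabilities of the unrevealed alternatives are identical in the two settings. In particular, the revealed alternatives with index $y$ must sit below all the $g_i\ge y$ in the consistent order, so that they never block an unrevealed alternative. When $g_i=y$ exactly, this relies on the seller-favourable tie-breaking convention, or equivalently on the footnote's perturbation argument in \Cref{sec:preliminaries}.
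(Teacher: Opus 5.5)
Your proof is correct and follows essentially the same route as the paper's. Like the paper, you condition on $y_R=y$, treat the residual problem as mandatory inspection with outside option $y$, exhibit a fixed price vector on all of $[n]$ that keeps the unrevealed indices (and hence their inspection probabilities), and then average over $y$. The only difference is cosmetic: the paper prices the revealed alternatives at $\infty$ and keeps the unrevealed ones with $g_i<y$, which come last in the order and only add revenue. Your index-$y$ pricing of $R$ and the tie-breaking concern are therefore harmless but unnecessary, since $\rev$ already takes the maximum over consistent orders.
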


\begin{proof}
    Since pricing mechanisms correspond to reveal mechanisms with $R=\emptyset$, we immediately have that
    \[\max_{\vp\in\R^n_+}\rev(\inst,\vp) \le \max_{\mech \in\revealset}\rev(\inst,\mech).\]
    In order to prove the inequality in the other direction, we will show that there always exists some set of prices $\vp\in\R^n_+$ such that $\rev(\vp)\ge \max_{\mech \in\revealset}\rev(\mech)$. In particular, let $\mech\in\revealset$ be any maximizer of the right-hand side and let $R\subseteq [n]$ be the set of alternatives it reveals to the buyer. Recall that for any instantiation of $y_R= \max_{i\in R}x_i$, $\mech$ sets a price vector $\vp_{\bar{R}} = \vp_{\bar{R}}(y_R)$ over the remaining items.

    Fix any value $y$ and condition on the event $y_R=y$. Note that after $y_R$ is revealed and given a price vector $\vp$ over $[n]\setminus R$, the buyer's problem now becomes the mandatory inspection setting with an outside option $y_R$ over the unrevealed boxes. Therefore, suppose $\vg_{\bar{R}}$ are the indices of the unrevealed boxes, and that we ordered the boxes so that $[n]\setminus R = [k]$ and the buyer's strategy goes from box $1$ to $k$. Then, the revenue achieved by $\mech$ here is 
    \[\rev(\mech \mid y_R = y) = \sum_{i \in [k]: g_i \ge y} \E[(X_i - g_i)^+] \prod_{j < i} F_j(g_i).\]

    We now define a fixed pricing vector $\vec{p'}$ whose revenue is at least the previous quantity: let $p'_i = p_i$ for $i \in [n]\setminus R$ and $p'_i = \infty$ for $i \in R$. Note that the revenue achieved by this fixed pricing vector is
    \[\rev(\vec{p'}) = \sum_{i \in [k]: g_i \ge 0} \E[(X_i - g_i)^+] \prod_{j < i} F_j(g_i)\]    which is at least the revenue achieved by $\mech$ conditioned on $y_R = y$. Since this quantity is also upper bounded by $\optrev$, we also have $\rev(\mech \mid y_R = y) \le \optrev$. Since this is true for any choice of $y$, we then have $\rev(\mech) \le \optrev$ and the proof follows.
\end{proof}

\subsection{Omitted Proofs for Two-Point Distributions Under Optional Inspection}\label{app:proofs-for-optional-bernoulli}
In this section, we present the omitted proofs of~\Cref{claim:bernoulli_mech_1} and~\Cref{claim:bernoulli_mech_2}. Recall that the $n$ alternatives take values $l_i, r_i > l_i$ with probabilities $(1-q_i)$ and $q_i$ respectively and that alternative $1$ has maximum expectation $\mu_1 \ge \mu_j$ for all $j\in [n]$.

\begin{proof}[Proof of~\Cref{claim:bernoulli_mech_1}]

By definition, the mechanism $\mech_1$ reveals the value of all the alternatives $R=[n]\setminus\{1\}$ except for the first one and then sets the price $p_1 = \expectt{\max\{X_1,y_R\}} - \max\{\mu_1,y_R\}\geq 0$. The buyer can either inspect the first box (and then select the best realization), or select the alternative $j\in R$ of value $x_j=y_R$ for free, or select the first alternative without inspection at an expected reward of $\mu_1$. The price $p_1$ is defined precisely so that the buyer is indifferent towards inspecting the first box and immediately making a selection; since ties are broken in favor of the seller, we therefore have that the first box is always inspected under mechanism $\mech_1$. Taking the expectation over the value of $y_R$, we have
\[
    \rev(\mech_1) = \expectt{\max_{i\in [n]}X_i} - \expectt{\max\{\mu_1, \max_{i\ge 2}X_i\}}.
\]
\end{proof}

\begin{proof}[Proof of~\Cref{claim:bernoulli_mech_2}]

By definition, the mechanism $\mech_2$ is actually a (standard) pricing mechanism corresponding to $p_1 = \infty$ and $p_j$ satisfying $g_j(p_j) = \mu_1$ for all $j\neq 1$. Observe that under this pricing, the buyer will never inspect $X_1$ and will never grab $X_j$ for $j\neq 1$ (as this would provide an expected reward of $\mu_j \leq \mu_1$). Therefore, the buyer will behave exactly as in the mandatory inspection setting over prices $\vec{p'}=(0,p_2,\dots , p_n)$ and alternatives $(X'_1,\dots, X'_n)$ defined as $X'_1 = \mu_1$ and $X'_j = X_j$ for $j\ge 2$. Since all alternatives have the same index $g_i=\mu_1$, from~\Cref{fact:utility}, the utility of the buyer will be precisely 
\[\expectt{\left(\max_{i\in [n]}\min\{X'_i,g_i\}\right)^+}=\mu_1.\]

Furthermore, since all alternatives have the same index $\mu_1$, we can enforce any tie-breaking order. We re-label alternatives $2$ through $n$ so that $r_2\ge r_3\ge \dots \ge r_n$ and assume that the buyer inspects the alternatives $2$ through $n$ in increasing order of label, with alternative $1$ last in the priority. This ensures that whenever the buyer selects an alternative $j\ge 2$, they will select the one of maximum realization. Therefore, the expected value collected by the buyer will be precisely
\[\expectt{\max\{\mu_1, \max_{i\ge 2}X_i\}}.\]
Since revenue is always equal to the value of the buyer minus their utility, we have
\[
    \rev(\mech_2) = \expectt{\max\{\mu_1, \max_{i\ge 2}X_i\}} - \mu_1.
\]
\end{proof}

%% file: appendix_examples.tex
\section{Pandora's Box with Optional Inspection: Some Examples} \label{app:proofs-for-optional-examples}
In this section, we present a series of examples to demonstrate interesting properties for the optional inspection variant of Pandora's Box.

\paragraph{Adaptive Order of Inspection.}
Consider the following $n=3$ alternatives
\[
X_1 =
\begin{cases}
    0, &\text{w.p. }\; 0.75 \\
    1, &\text{w.p. }\; 0.25 \\
\end{cases}
\quad\quad
X_2 =
\begin{cases}
    0, &\text{w.p. }\; 0.75 \\
    2, &\text{w.p. }\; 0.25 \\
\end{cases}
\quad\quad
X_3 =
\begin{cases}
    0, &\text{w.p. }\; 0.5 \\
    \frac{1}{2}, &\text{w.p. }\; 0.5 \\
\end{cases}
\]
and prices $\vp=(0.1,0.25,0)$. We have $\mu_1=0.25$ and $\mu_2=0.5$. Also, $g_1=0.6$ and $g_2=1$.

Since $p_3=0$, the buyer will first inspect this third alternative to learn its realization for free. Let $X_3=y\in\{0,0.5\}$. We will show that if $y=0$ then the buyer proceeds to inspect the first alternative ($X_1$), whereas if $y=0.5$ then the buyer proceeds to inspect the second alternative ($X_2$). This in turn demonstrates that the optimal buyer response in the optional inspection variant of Pandora's Box will depend on past realizations, even in very simple instances. We note that a similar example for three-point distributions is given by~\citet{D18}.

Let's first consider the $y=0$ case. The buyer can either select $y=0$ or one of the two alternatives without inspection (namely the second one, gaining utility $\mu_2=0.5$) or inspect an alternative. 
\begin{enumerate}
    \item If they inspect $X_1$, they pay $0.1$ and observe $X_1\in\{0,1\}$. If $X_1=0$, they are left with $y=0$ and $X_2$ and will therefore select $X_2$ without inspection, gaining utility $\mu_2=0.5$. If $X_1=1$, then they are left with $y=1$ and $X_2$; since $1>\mu_2$ the second alternative will not be selected without inspection and we reduce to the mandatory inspection setting. As $g_2=1$, the buyer's choice between accepting $y$ or $X_2$ is irrelevant and they obtain utility of $1$. Therefore, the buyer's expected utility in that case is 
    \[-0.1 + 0.75\cdot 0.5 + 0.25\cdot 1 = 0.525. \]

     \item If they inspect $X_2$, they pay $0.25$ and observe $X_2\in\{0,2\}$. If $X_2=0$, they are left with $y=0$ and $X_1$ and will therefore select $X_1$ without inspection, gaining utility $\mu_1=0.25$. If $X_2=2$, then they immediately accept $X_2$ since it has the maximum possible realization across all alternatives. Therefore, the buyer's expected utility in that case is 
    \[-0.25 + 0.75\cdot 0.25 + 0.25\cdot 2 = 0.4375. \]
\end{enumerate}
Therefore, for $y=0$, the (uniquely) optimal buyer choice is to inspect alternative $X_1$.

Up next, we consider the $y=0.5$ case. The buyer can either select $y=0.5$ or one of the two alternatives without inspection (again, they prefer the second one) for a total utility of $0.5$ or inspect an alternative. 
\begin{enumerate}
    \item If they inspect $X_1$, they pay $0.1$ and observe $X_1\in\{0,1\}$. If $X_1=0$, they are left with $y=0.5=\mu_2$ and we can therefore assume that $X_2$ is never selected without inspection. Since $g_2=1>0.5$, the second box will be opened and based on whether $X_2\in\{0,2\}$ the buyer gains reward $\{0.5,2\}$ respectively. If $X_1=1$, then the buyer is left with $y=1$ and $X_2$ and collects a future utility of $1$ as we saw before. Therefore, the buyer's expected utility in that case is 
    \[-0.1 + 0.75\cdot (-0.25 + 0.75\cdot 0.5 + 0.25\cdot 2 ) + 0.25\cdot 1 = 0.61875. \]

     \item If they inspect $X_2$, they pay $0.25$ and observe $X_2\in\{0,2\}$. If $X_2=0$, they are left with $y=0.5$ and $X_1$; since $g_1=0.6>0.5$, they will also inspect $X_1$ and collect value $\{0.5,1\}$ based on whether $X_1\in\{0,1\}$ respectively. If $X_2=2$, then they immediately accept $X_2$ since it has the maximum possible realization across all alternatives. Therefore, the buyer's expected utility in that case is 
    \[-0.25 + 0.75\cdot (-0.1 + 0.75\cdot 0.5 + 0.25\cdot 1) + 0.25\cdot 2 = 0.64375. \]
\end{enumerate}
Therefore, for $y=0.5$, the (uniquely) optimal buyer choice is to inspect alternative $X_2$.

\paragraph{Non-Monotonicity of Inspection Probabilities.}

Fix any set of alternatives $\inst=(X_1,\dots , X_n)$ and prices $\vp,\vq\in\R^n_+$ such that $p_1=q_1$ and $p_j \le q_j$ for all $j\ge 2$. In the mandatory inspection setting, since the price of alternatives $j\ge 2$ increases from $\vp$ to $\vq$, the corresponding indices will decrease and it is therefore straightforward to see that probability of the buyer inspecting alternative $1$ under $\vq$ is (weakly) larger compared to the one in $\vp$. Below, we demonstrate that this intuitive monotonicity property is no longer true in the optional inspection setting.

In particular, consider the case where $X_1=2n$ with probability $0.5$ and $X_1=0$ otherwise; and $X_j=n$ with probability $1/n$ and $X_j=0$ otherwise for $j=2,3,\dots , n$. Furthermore, consider the prices $\vp=(\lambda,0,\dots, 0)$ and $\vq=(\lambda,\infty, \dots , \infty)$ for some $\lambda\ge 0$. Since alternatives $2$ through $n$ are offered for free in $\vp$ and cannot be inspected at all under $\vq$, one would expect that the buyer's willingness to inspect the first alternative should be larger in $\vq$. As it turns out, this is not the case.

Under $\vp$, alternatives $2$ through $n$ will be immediately inspected by the buyer for free. The best realization will be $x_{\max}=0$ with probability $(1-1/n)^{n-1}$ and $x_{\max} = n$ otherwise. If $x_{\max}=0$, then the buyer has no incentive to inspect the first alternative as $X_1\ge x_{\max}$ with probability $1$ and will therefore select it without inspection. If $x_{\max}=n$, then the buyer will inspect the first alternative only if  $-\lambda + \expectt{\max\{X_1, n\}} \geq n$, or equivalently, $\lambda\leq n/2$. Thus, we have that
\[\probb{\text{buyer inspects $X_1$ under $\vp$}} = \begin{cases}
     1- (1-\frac{1}{n})^{n-1} & \text{ if } \lambda\leq \frac{n}{2} \\
    0 & \text{ otherwise.}
\end{cases}\]

Under $\vq$, the only alternative that potentially gets inspected is the first one, while boxes $2$ through $n$ can only be selected without inspection at an (expected) reward of $1$. Furthermore, the first alternative can also be selected without inspection at an expected reward of $n$. Therefore, the buyer will only inspect the first alternative if $-\lambda + \expectt{\max\{X_1,1\}} \geq \max\{1,n\}$, or equivalently, $\lambda\leq 1/2$. Thus, we have that
\[\probb{\text{buyer inspects $X_1$ under $\vq$}} = \begin{cases}
     1 & \text{ if } \lambda\leq \frac{1}{2} \\
    0 & \text{ otherwise.}
\end{cases}\]
Therefore, there exists an entire range of prices $p_1=\lambda\in (1/2,n/2]$ for which the probability of inspecting the first alternative under $\vp$ is strictly larger.

\paragraph{Identically Distributed Two-Point Distributions.}

Consider $n$ identically distributed alternatives corresponding to a two-point distribution over values $0\le l\le r$ with probabilities $(1-q)$ and $q$, respectively. Let $\mu = (1-q)l + qr$ be their expectation. Note that in the optional inspection setting, the buyer's utility is lower bounded by $\mu$ and therefore any set of prices for which
\begin{enumerate}
    \item The buyer's utility is $\mu$, and
    \item The buyer always selects an alternative of maximum realization
\end{enumerate}
will maximize the seller's revenue. We will show that setting uniform prices $p=q(1-q)(r-l)$ achieves both of these properties.

First, note that if at any point the buyer observes a value of $r$, they will immediately halt as they can no longer improve the value of their selection. Furthermore, conditioned on reaching the $j$-th alternative and not halting, we know that the first $(j-1)$ alternatives were realized to $l$ and therefore the history doesn't matter; we can treat the remaining alternatives as a fresh instance. 
    
Let $k$ be the number of remaining boxes. If $k=1$, the buyer will immediately select the remaining alternative without inspection, gaining utility $U(1)=\mu$. If $k=2$, the buyer will only inspect the minimum-priced alternative if it is priced at $p$ satisfying 
\[-p + qr + (1-q)U(1)\ge \mu\]
or equivalently, $p\le q(r-\mu) = q(1-q)(r-l)$. Therefore, if both prices are set to this value, the buyer will be indifferent towards inspecting or halting and we have $U(2)=\mu$. We can then repeat the same argument for all $k$.

Thus, if all prices are set at $p=q(1-q)(r-l)$, the buyer will be indifferent towards inspecting an alternative or selecting one without inspection, until a value of $r$ is realized or the last alternative is reached. From this, we obtain that both properties (1) and (2) are satisfied, completing the proof.